%
%
%

\documentclass{article}

\usepackage[a4paper,left=25mm,top=20mm,right=20mm,bottom=15mm]{geometry}

\usepackage{amsmath, amsfonts, amssymb, amsthm}
\usepackage{algorithmic,algorithm}

\providecommand{\keywords}[1]{\textbf{\textit{Keywords: }} #1}

\usepackage{graphicx}
\usepackage{mathptmx} 
\usepackage{color}
\usepackage{algorithmic,algorithm}

\usepackage{mathtools}
\usepackage{natbib}
\bibliographystyle{mde}
\setcitestyle{authoryear,open={(},close={)}}
\let\cite\citep
\usepackage{caption}

\usepackage{epstopdf}

\usepackage{subfig}
\usepackage{authblk}


\newcommand{\rev}[1]{\begingroup#1\endgroup}

\newcommand{\Ro}{\mathrel{\overset{0}{\sim}}}
\newcommand{\Rl}{\mathrel{\overset{1}{\sim}}}
\newcommand{\Rlstar}{\mathrel{\overset{1\star}{\sim}}}
\newcommand{\Rldstar}{\mathrel{\overset{1\star}{\rightharpoonup}}}

\newcommand{\Rk}{\mathrel{\overset{\ge k}{\sim}}}
\newcommand{\Rld}{\mathrel{\overset{1}{\rightharpoonup}}}
\newcommand{\Rldk}{\mathrel{\overset{k}{\rightharpoonup}}}
\newcommand{\Rd}{\mathrel{\rightsquigarrow}}
\renewcommand{\P}{\mathbb{P}}
\newcommand{\lca}[1]{\mathop{lca}(#1)}
\newcommand{\rt}[1]{\ensuremath{\mathsf{#1}}}

\newtheorem{thm}{Theorem}
\newtheorem{lemma}[thm]{Lemma}
\newtheorem{cor}[thm]{Corollary}

\newtheorem{definition}[thm]{Definition}

\makeindex             


\begin{document}

\title{Inference of Phylogenetic Trees from the Knowledge of Rare Evolutionary
  		Events}

\author[1,2]{Marc Hellmuth}
\author[3]{Maribel Hernandez-Rosales}
\author[1,8]{Yangjing Long}
\author[4,5,6,7]{Peter F.\ Stadler}

\affil[1]{\footnotesize Dpt.\ of Mathematics and Computer Science, University of Greifswald, Walther-
  Rathenau-Strasse 47, D-17487 Greifswald, Germany \\
	\texttt{mhellmuth@mailbox.org} 	}
\affil[2]{Saarland University, Center for Bioinformatics, Building E 2.1, P.O.\ Box 151150, D-66041 Saarbr{\"u}cken, Germany }
\affil[3]{	CONACYT-Instituto de Matemáticas,		UNAM Juriquilla, 		Blvd. Juriquilla 3001, 76230 Juriquilla, Querétaro, Mexico }
\affil[4]{Bioinformatics Group, Department of Computer Science; 
        Interdisciplinary Center of Bioinformatics; 
        German Centre for Integrative Biodiversity Research (iDiv)
        Halle-Jena-Leipzig; Competence Center for Scalable Data Services
        and Solutions; and Leipzig Research Center for Civilization Diseases,
        Leipzig University,   H{\"a}rtelstra{\ss}e 16-18, D-04107 Leipzig; Germany}
\affil[5]{Max-Planck-Institute for Mathematics in the Sciences, 
  Inselstra{\ss}e 22, D-04103 Leipzig}
\affil[6]{Inst.\ f.\ Theoretical Chemistry, University of Vienna, 
  W{\"a}hringerstra{\ss}e 17, A-1090 Wien, Austria}
\affil[7]{Santa Fe Institute, 1399 Hyde Park Rd., Santa Fe, USA} 
\affil[8]{        School of Mathematics,
        Shanghai Jiao Tong university, Dongchuan Road 800, 200240 Shanghai, 
        China }

\date{}
\normalsize

\maketitle

\abstract{ 
  Rare events have played an increasing role in molecular phylogenetics as
 potentially homoplasy-poor characters. In this contribution we analyze the
 phylogenetic information content from a combinatorial point of view by
 considering the binary relation on the set of taxa defined by the existence of
 a single event separating two taxa. We show that the graph-representation of
 this relation must be a tree. Moreover, we characterize completely the
 relationship between the tree of such relations and the underlying phylogenetic
 tree. With directed operations such as tandem-duplication-random-loss events in
 mind we demonstrate how non-symmetric information constrains the position of
 the root in the partially reconstructed phylogeny }

\smallskip
\noindent
  \keywords{Phylogenetic Combinatorics; Rare events; Binary
    relations}

\sloppy

\section{Introduction}

Shared derived characters (synapomorphies or ``Hennigian markers'') that
are unique to specific clades form the basis of classical cladistics
\cite{Hennig:50}. In the context of molecular phylogenetics \emph{rare
  genomic changes (RGCs)} can play the same important role
\cite{Rokas:00,Boore:06}. RGCs correspond to rare mutational events that
are very unlikely to occur multiple times and thus are (almost) free of
homoplasy. A wide variety of processes and associated markers have been
proposed and investigated. Well-studied RGCs include presence/absence
patterns of protein-coding genes \cite{Dutilh:08} as well as microRNAs
\cite{Sempere:06}, retroposon integrations \cite{Shedlock:00}, insertions
and deletions (indels) of introns \cite{Rogozin:05}, pairs of mutually
exclusive introns (NIPs) \cite{Krauss:08a}, protein domains
\cite{Deeds:05,Yang:05}, RNA secondary structures \cite{Misof:03}, protein
fusions \cite{Stechmann:03}, changes in gene order
\cite{Sankoff:82,Boore:98,Lavrov:07}, metabolic networks
\cite{Forst:01,Forst:06a,Mazurie:08}, transcription factor binding sites
\cite{Prohaska:04a}, insertions and deletions of arbitrary sequences
\cite{Simmons:00,Ashkenazy:14,Donath:14a}, and variations of the genetic
code \cite{Abascal:12}. RGCs clearly have proved to be phylogenetically
informative and helped to resolve many of the phylogenetic questions where
sequence data lead to conflicting or equivocal results.

Not all RGCs behave like cladistic characters, however. While
presence/absence characters are naturally stored in character matrices
whose columns can vary independently, this is not the case e.g.\ for gene
order characters. From a mathematical point of view, character-based
parsimony analysis requires that the mutations have a product structure in
which characters are identified with factors and character states can vary
independently of each other \cite{Wagner:03a}. This assumption is violated
whenever changes in the states of two distinct characters do not commute.
Gene order is, of course, the prime example on non-commutative events.
  
Three strategies have been pursued in such cases\rev{: (i)} Most
importantly, the analog of the parsimony approach is considered for a
particular non-commutative model.  For the genome rearrangements an
elaborated theory has been developed that considers various types of
operations on (usually signed) permutations. Already the computation of
editing distances is non-trivial. An added difficulty is that the interplay
of different operations such as reversals, transpositions, and
tandem-duplication-random-loss (TDRL) events is difficult to handle
\cite{Bernt:07a,Hartmann:16}. \rev{(ii)} An alternative is to focus on
distance-based methods \cite{Wang:06}. Since good rate models are usually
unavailable, distance measures usually are not additive and thus fail to
precisely satisfy the assumptions underlying the most widely used methods
such as neighbor joining. \rev{(iii) Finally, the non-commutative data
  structure can be converted into} a presence-absence structure, e.g., by
using pairwise adjacencies \cite{Tang:05} as a representation of
permutations or using list alignments in which rearrangements appear as
pairs of insertions and deletions \cite{Fritzsch:06a}. While this yields
character matrices that can be fed into parsimony algorithms, these can
only result in approximate heuristics.

While it tends to be difficult to disentangle multiple, super-imposed
complex changes such as genome rearrangements or tandem duplication, it is
considerably simpler to recognize whether two genes or genomes differ by a
single RGC operation. It make sense therefore to ask just how much
phylogenetic information can be extracted from elementary RGC events. Of
course, we cannot expect that a single RGC will allow us to (re)construct a
detailed phylogeny. It can, however, provide us with solid, well-founded
constraints. Furthermore, we can hope that the combination of such
constraints can be utilized as a practicable method for phylogenetic
inference. Recently, we have shown that orthology assignments in large gene
families imply triples that must be displayed by the underlying species
tree \cite{HernandezRosales:12a,Hellmuth:13a}. In a phylogenomics setting a
sufficient number of such triple constraints can be collected to yield
fully resolved phylogenetic trees \cite{Hellmuth:15a}, see \citet{HW:16b}
for an overview.

A plausible application scenario for our setting is the rearrangement of
mitogenomes \cite{Sankoff:82}. Since mitogenomes are readily and cheaply
available, the taxon sampling is sufficiently dense so that the gene orders
often differ by only a single rearrangement or not at all. These cases are
identifiable with near certainty \cite{Bernt:07a}. Moreover, some RGC are
inherently directional. Probably the best known example is the tandem
duplication random loss (TDRL) operation \cite{Chaudhuri:06}. We will
therefore also consider a directed variant of the problem.

In this contribution, we ask how much phylogenetic information can be
retrieved from single RGCs. More precisely, we consider a scenario in which
we can, for every pair of taxa distinguish, for a given type of RGC,
whether $x$ and $y$ have the same genomic state, whether $x$ and $y$ differ
by exactly one elementary change, or whether their distance is larger than
a single operation. We formalize this problem in the following way. Given a
relation $\sim$, there is a phylogenetic tree $T$ with an edge labeling
$\lambda$ (marking the elementary events) such that $x\sim y$ if and only
if the edge labeling along the unique path $\mathbb{P}(x,y)$ from $x$ to $y$ in
$T$ has a certain prescribed property $\Pi$. After defining the necessary
notation and preliminaries, we give a more formal definition of the general
problem in section~\ref{sect:theory}.

\rev{The graphs defined by path relations on a tree are closely related to
  \emph{pairwise compatibility graphs} (PCGs). A graph $G =(V,E)$ is a PCG
  if there is a tree $T$ with leaf set $V$, a positive edge-weight function
  $w:E(T)\to \mathbb{R}^+$, and two nonnegative real numbers $d_{\min}\le
  d_{\max}$ such that there is an edge $uv \in E(G)$ if and only if
  $d_{\min}\leq d_{T,w}(x,y) \leq d_{\max}$, where $d_{T,w}(x,y)$ is the
  sum of the weights of the edges on the unique path $\mathbb{P}(x,y)$ in
  $T$. One writes $G = \mathrm{PCG}(T, w, d_{\min} , d_{\max})$.} In this
contribution we will primarily be interested in the special case where
$\Pi$ is ``a single event along the path''. \rev{Although PCGs have been
  studied extensively, see e.g.,
  \citet{PCGsurvey,YHTR:08,YBR:10,CMPS:13,MR:13,DMR:13}, the questions are
  different from our motivation and, to our knowledge, no results have been
  obtained that would simplify the characterization of the PCGs
  corresponding to the ``single-1-relation'' in Section~\ref{sect:single1}.
  Furthermore, PCGs are always treated as undirected graphs in the
  literature.  We also consider an antisymmetric (Section~\ref{sect:1dir})
  and a general directed (Section~\ref{sect:mixed}) versions of the
  single-1-relation motivated by RGCs with directional information.}

\rev{The main result of this contribution can be summarized as follows: (i)
  The graph of a single-1-relation is always a forest. (ii) If the
  single-1-relation is connected, there is a unique minimally resolved tree
  that explains the relation. The same holds true for the connected
  components of an arbitrary relation. (iii) Analogous results hold for the
  anti-symmetric and the mixed variants of the single-1-relation. In this
  case not only the tree topology but also the position of the root can be
  determined or at least constrained.}  \rev{Together, these results in a
  sense characterize the phylogenetic information contained in rare events:
  if the single-1-relation graph is connected, it is a tree that through a
  bijection corresponds to a uniquely defined, but not necessarily fully
  resolved, phylogenetic tree. Otherwise, it is forest whose connected
  components determine subtrees for which the rare events provide at least
  some phylogenetically relevant information.}

\section{Preliminaries}
\label{sec:prelim}

\subsection{Basic Notation} 

We largely follow the notation and terminology of \rev{the book by}
\citet{sem-ste-03a}. Throughout, $X$ denotes always a finite set of at
least three taxa. We will consider both undirected and directed graphs
$G=(V,E)$ with finite vertex set $V(G)\coloneqq V$ and edge set or arc set
$E(G)\coloneqq E$.  For a digraph $G$ we write $\underline{G}$ for its
\emph{underlying undirected graph} where $V(G)=V(\underline{G})$ and
$\{x,y\}\in E(\underline{G})$ if $(x,y)\in E(G)$ or $(y,x)\in E(G)$. Thus,
$\underline{G}$ is obtained from $G$ by ignoring the direction of edges.
For simplicity, edges $\{x,y\}\in E(G)$ (in the undirected case) and arcs
$(x,y)\in E(G)$ (in the directed case) will be both denoted by $xy$.

The representation $G(R)=(V,E)$ of a relation $R\subseteq V\times V$ has
vertex set $V$ and edge set $E=\{xy\mid (x,y)\in R\}$. If $R$ is
irreflexive, then $G$ has no loops. If $R$ is symmetric, we regard $G(R)$
as an undirected graph. A \emph{clique} is a complete subgraph that is
maximal w.r.t.\ inclusion. An equivalence relation is \emph{discrete} if
all its equivalence classes consist of single vertices.

A tree $T=(V,E)$ is a connected cycle-free undirected graph.  The vertices
of degree $1$ in a tree are called leaves, all other vertices of $T$ are
called \emph{inner vertices}. 
An edge of $T$ is \emph{interior} if both of
its end vertices are inner vertices, otherwise the edge is called \emph{terminal}.
\rev{For technical reasons, we call a vertex $v$ an inner vertex and leaf
if $T$ is a single vertex graph
$(\{v\},\emptyset)$. However, if 
$T$ is an edge $vw$ we refer to $v$ and $w$ as leaves but not
as inner vertices. Hence, in this case the edge $vw$ is not an
interior edge}

A \emph{star} $S_m$ with $m$ leaves
is a tree that has at most one inner vertex. A \emph{path} $P_n$ (on $n$
vertices) is a tree with two leaves and $n-3$ interior edges.  There is a
unique path $\mathbb{P}(x,y)$ connecting any two vertices $x$ and $y$ in a
tree $T$. We write $e\in\mathbb{P}(x,y)$ if the edge $e$ connects two
adjacent vertices along $\mathbb{P}(x,y)$.  We say that a directed graph is
a tree if its underlying undirected graph is a tree.  A directed path $P$
is a tree on vertices $x_1,\dots,x_n$ s.t. $x_ix_{i+1}\in E(P)$, $1\leq
i\leq n-1$. A graph is a forest if all its connected components are trees.

A tree is \emph{rooted} if there is a distinguished vertex $\rho\in V$
called the \emph{root}. Throughout this contribution we assume that the
root is an inner vertex. \rev{Given a rooted tree $T=(V,E)$, there is a
  partial order $\preceq$ on $V$ defined as $ v \preceq u$ if $u$ lies on
  the path from $v$ to the root. Obviously, the root is the unique maximal
  element w.r.t\ $\preceq$.  For a non-empty subset of $W\subseteq V$, we
  define $\lca{W}$, or the \emph{least common ancestor of $W$}, to be the
  unique $\preceq_T$-minimal vertex of $T$ that is an ancestor of every
  vertex in $W$. In case $W=\{x,y \}$, we put $\lca{x,y}:=\lca{\{x,y\}}$.}
If $T$ is rooted, then by definition $\lca{x,y}$ is a uniquely defined
inner vertex along $\mathbb{P}(x,y)$.

We write $L(v)$ for the set of leaves in the subtree below a fixed vertex
$v$, i.e., $L(v)$ is the set of all leaves for which $v$ is located on the
unique path from $x\in L(v)$ to the root of $T$. The \emph{children} of an
inner vertex $v$ are its direct descendants, i.e., vertices $w$ with $vw\in
E(T)$ s.t.\ that $w$ is further away from the root than $v$.  A rooted or
unrooted tree that has no vertices of degree two \rev{(except possibly the
  root of $T$)} and leaf set $X$ is called a \emph{phylogenetic tree $T$
  (on $X$)}.

Suppose $X'\subseteq X$. A phylogenetic tree $T$ on $X$ \emph{displays} a
phylogenetic tree $T'$ on $X'$ if $T'$ can be obtained from $T$ by a series
of vertex deletions, edge deletions, and suppression of vertices of degree
$2$ other than possibly the root, i.e., the replacement of an inner
vertex $u$ and its two incident edges $e'$ and $e''$ by a single edge $e$,
cf.\ \rev{Def.\ 6.1.2 in the book by \citet{sem-ste-03a}}. In the rooted
case, only a vertex between two \rev{incident} edges may be suppressed;
furthermore, if $X'$ is contained in a single subtree, then the $\lca{X'}$
becomes the root of $T'$. It is useful to note that $T'$ is displayed by
$T$ if and only if it can be obtained from $T$ step-wisely by removing an
arbitrarily selected leaf $y\in X\setminus X'$, its incident edge $e=yu$,
and suppression of $u$ provided $u$ has degree $2$ after removal of $e$.

We say that a rooted tree $T$ \emph{contains} or \emph{displays} the triple
$\rt{xy|z}$ if $x,y,$ and $z$ are leaves of $T$ and the path from $x$ to
$y$ does not intersect the path from $z$ to the root of $T$. A set of
triples $\mathcal R$ is consistent if there is a rooted tree that contains
all triples in $\mathcal R$.  For a given leaf set $L$, a triple set
$\mathcal R$ is said to be \emph{strict dense} if for any three distinct
vertices $x,y,z\in L$ we have $|\{\rt{xy|z}, \rt{xz|y}, \rt{yz|x}\}\cap
R|=1$. It is well-known that any consistent strict-dense triple set
$\mathcal R$ has a unique representation as a binary tree
\cite[Suppl. Material]{Hellmuth:15a}.  For a consistent set $R$ of rooted
triples we write $R\vdash \rt{(xy|z)}$ if any phylogenetic tree that
displays all triples of $R$ also displays $\rt{(xy|z)}$.
\rev{\citet{BS:95} extend and generalized results by \citet{Dekker86} and
  showed} under which conditions it is possible to infer triples by using
only subsets $R'\subseteq R$, i.e., under which conditions $R\vdash
\rt{(xy|z)} \implies R'\vdash \rt{(xy|z)}$ for some $R'\subseteq R$. In
particular, we will use the following inference rules:
\renewcommand{\theequation}{\roman{equation}}
\begin{align}
  \{\rt{(ab|c)}, \rt{(ad|c)}\} &\vdash \rt{(bd|c)} 
  \label{eq:infRule1} \\
  \{\rt{(ab|c)}, \rt{(ad|b)}\} & \vdash \rt{(bd|c)},\rt{(ad|c)} 
  \label{eq:infRule2} \\
  \{\rt{(ab|c)}, \rt{(cd|b)}\} &\vdash \rt{(ab|d)},\rt{(cd|a)}.
  \label{eq:infRule3}
\end{align}

\section{Path Relations and Phylogenetic Trees}
\label{sect:theory} 

Let $\Lambda$ be a non-empty set. Throughout this contribution we consider
a \rev{phylogenetic tree} $T=(V,E)$ with edge-labeling $\lambda \colon E\to
\Lambda$. An edge $e$ with label $\lambda(e)=k$ will be called a
\emph{k-edge}. We interpret $(T,\lambda)$ so that a RGC occurs along edge
$e$ if and only if $\lambda(e)=1$. Let $\Pi$ be a subset of the set of
$\Lambda$-labeled paths. We interpret $\Pi$ as a property of the path and
its labeling. The tree $(T,\lambda)$ and the property $\Pi$ together define
a binary relation $\sim_{\Pi}$ on $X$ by setting
\begin{equation}
  x\sim_{\Pi} y \quad\iff\quad (\mathbb{P}(x,y),\lambda) \in \Pi 
\end{equation} 
The relation $\sim_{\Pi}$ has the graph representation $G(\sim_{\Pi})$ with
vertex set $X$ and edges $xy\in E(G(\sim_{\Pi}))$ if and only if
$x\sim_{\Pi} y$.

\begin{definition}
  Let $(T,\lambda)$ be a $\Lambda$-labeled phylogenetic tree with leaf set
  $L(T)$ and let $G$ be a graph with vertex set $L(T)$. We say that
  \emph{$(T,\lambda)$ explains $G$ (w.r.t.\ to the path property $\Pi$)} if
  $G=G(\sim_{\Pi})$.
\end{definition}
For simplicity we also say ``$(T,\lambda)$ explains $\sim$'' for the binary
relation $\sim$.

We consider in this contribution the conceptually ``inverse problem'':
Given a definition of the predicate $\Pi$ as a function of edge labels
along a path and a graph $G$, is there an edge-labeled tree $(T,\lambda)$
that explains $G$? Furthermore, we ask for a characterization of the class
of graph that can be explained by edge-labeled trees and a given predicate
$\Pi$.

A straightforward biological interpretation of an edge labeling
\rev{$\lambda: E\to \{0,1\}$} is that a certain type of evolutionary event
has occurred along $e$ if and only if $\lambda(e)=1$. This suggests that in
particular the following path properties and their associated relations on
$X$ are of practical interest:
\begin{itemize}
\item[{$x \Ro y$}] if and only if all edges in $\mathbb{P}(x,y)$ are
  labeled $0$; For convenience we set $x \Ro x$ for all $x\in X$.
\item[{$x \Rl y$}] if and only if all but one edges along $\mathbb{P}(x,y)$
  are labeled $0$ and exactly one edge is labeled $1$;
\item[{$x \Rld y$}] if and only if all edges along $\mathbb{P}(u,x)$ are 
  labeled $0$ and exactly one edge along $\mathbb{P}(u,y)$ is labeled $1$,
  where $u=\lca{x,y}$. 
\item[{$x \Rk y$}] with \rev{$k\geq1$} if and only if at least $k$
  edges \rev{along $\mathbb{P}(x,y)$} are labeled $1$;
\item[{$x \Rd y$}] \rev{if all edges along $\mathbb{P}(u,x)$ are labeled
    $0$ and there are one or more edges along $\mathbb{P}(u,y)$ with a
    non-zero label, where $u=\lca{x,y}$.}
\end{itemize}
We will call the relation $\Rl$ the \emph{single-1-relation}. It will be
studied in detail in the following section. Its directed variant $\Rld$
will be investigated in Section~\ref{sect:1dir}. The more general relations
$\Rk$ and $\Rd$ will be studied \rev{in future work.} 

\rev{As noted in the introduction there is close relationship between the
  graphs of path relations introduced above and PCGs. For instance, the
  single-1-relations correspond to a graph of the form $G =
  \mathrm{PCG}(T,\lambda,1,1)$ for some tree $T$. The exact-$k$ leaf power
  graph $\mathrm{PCG}(T,\lambda,k,k)$ arise when $\lambda(e)=1$ for all
  $e\in E(T)$ \cite{BVR:10}. The ``weight function'' $\lambda$, however,
  may be $0$ in our setting. It is not difficult to transform our weight
  functions to strictly positive values albeit at the expense of using less
  ``beautiful'' values of $d_{\min}$ and $d_{\max}$. The literature on the
  PCG, to our knowledge, does not provide results that would simplify our
  discussion below. Furthermore, the applications that we have in mind for
  future work are more naturally phrased in terms of Boolean labels, such as
  the ``at least one 1'' relation, or even vector-valued structures. We
  therefore do not pursue the relationship with PCGs further.}

The combinations of labeling systems and path properties of primary
interest to us have \emph{nice properties}:
\begin{itemize}
\item[(L1)] The label set $\Lambda$ is endowed with a semigroup $\boxplus:
  \Lambda\times\Lambda\to\Lambda$.
\item[(L2)] There is a subset $\Lambda_{\Pi}\subseteq\Lambda$ of labels
  such that $(\mathbb{P}(x,y),\lambda) \in \Pi$ if and only if
  $\lambda(\mathbb{P}(x,y)):=\boxplus_{e\in\mathbb{P}(x,y)}\lambda(e) \in
  \rev{\Lambda_{\Pi}}$ or
  $\lambda(\mathbb{P}(x,y)):=\boxplus_{e\in\mathbb{P}(\lca{x,y},y)}\lambda(e)
  \rev{\in \Lambda_{\Pi}}$.
\end{itemize}
For instance, we may set $\Lambda=\mathbb{N}$ and use the usual addition
for $\boxplus$. Then $\Ro$ corresponds to $\Lambda_{\Pi}=\{0\}$, $\Rl$
corresponds to $\Lambda_{\Pi}=\{1\}$, etc. \rev{The bounds $d_{\min}$ and
  $d_{\max}$ in the definition of PCGs of course is just a special case of
  of the predicate $\Lambda_{\Pi}$.}

We now extend the concept of a \rev{phylogenetic tree} displaying another
one to the $\Lambda$-labeled case.
\begin{definition}
  Let $(T,\lambda)$ and $(T',\lambda')$ be two \rev{phylogenetic trees}
  with $L(T')\subseteq L(T)$. Then $(T,\lambda)$ displays $(T',\lambda')$
  \rev{w.r.t.\ a path property $\Pi$} if (i) $T$ displays $T'$ and (ii)
  $(\mathbb{P}_{T}(x,y),\lambda) \in \rev{\Lambda_{\Pi}}$ if and only if
  $(\mathbb{P}_{T'}(x,y),\rev{\lambda'}) \in \rev{\Lambda_{\Pi}}$ for all
  $x,y\in L(T')$.
\end{definition} 
The definition is designed to ensure that the following property is satisfied:
\begin{lemma}
  Suppose $(T,\lambda)$ displays $(T',\lambda')$ and $(T,\lambda)$ explains
  a graph $G$. Then $(T',\lambda')$ explains the induced subgraph
  $G[L(T')]$.
\end{lemma}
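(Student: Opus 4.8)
The plan is to unwind the two definitions involved and chain the resulting equivalences, since the lemma is essentially a compatibility statement saying that the notions ``explains'' and ``displays'' cooperate under restriction to a smaller leaf set. Write $X' := L(T')$ and let $\sim_{\Pi}^{T}$ and $\sim_{\Pi}^{T'}$ denote the relations induced on $L(T)$ and on $X'$ by $(T,\lambda)$ and $(T',\lambda')$, respectively. The goal is to verify the single equation $G[X'] = G(\sim_{\Pi}^{T'})$.

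First I would dispose of the vertex sets: by construction $V(G[X']) = X' = L(T') = V(G(\sim_{\Pi}^{T'}))$, so only the edge sets remain to be compared. Fix two leaves $x,y \in X'$. Because $(T,\lambda)$ explains $G$, we have $G = G(\sim_{\Pi}^{T})$, hence $xy \in E(G)$ if and only if $x \sim_{\Pi}^{T} y$, i.e.\ $(\mathbb{P}_{T}(x,y),\lambda) \in \Pi$. Since $x,y \in X' \subseteq L(T)$, restricting to the induced subgraph gives $xy \in E(G[X'])$ if and only if $x \sim_{\Pi}^{T} y$. Next I would invoke condition (ii) of the display definition, which asserts precisely that for all $x,y \in X'$ the predicate $\Pi$ holds along $\mathbb{P}_{T}(x,y)$ exactly when it holds along $\mathbb{P}_{T'}(x,y)$; that is, $x \sim_{\Pi}^{T} y \iff x \sim_{\Pi}^{T'} y$. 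Chaining the two equivalences yields $xy \in E(G[X'])$ if and only if $x \sim_{\Pi}^{T'} y$, i.e.\ $xy \in E(G(\sim_{\Pi}^{T'}))$. As this holds for every pair in $X'$ and the vertex sets already agree, we obtain $G[X'] = G(\sim_{\Pi}^{T'})$, which is exactly the claim that $(T',\lambda')$ explains $G[X']$.

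There is no genuine obstacle here: the lemma holds essentially by definition, and condition (ii) of ``displays'' is tailored so that the relevant path predicate agrees on the restricted leaf set. The only point requiring a word of care is the slightly informal interchange of $\Lambda_{\Pi}$ and $\Pi$ in that condition. By the \emph{nice properties} (L1)--(L2), these are interchangeable: $(\mathbb{P}(x,y),\lambda)\in\Pi$ holds exactly when $\lambda(\mathbb{P}(x,y))\in\Lambda_{\Pi}$, so the path-property formulation and the aggregated-label formulation express one and the same relation, and the chain of equivalences above is legitimate.
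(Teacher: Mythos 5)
Your proof is correct and takes exactly the route the paper intends: the paper gives no written proof of this lemma at all, stating it immediately after the definition of ``displays'' with the remark that the definition is designed precisely to make this property hold, and your unwinding of ``explains'' plus condition (ii) of ``displays'' is exactly that routine verification. Your side remark about the interchange of $\Pi$ and $\Lambda_{\Pi}$ via (L1)--(L2) correctly resolves what is in fact a small notational slip in the paper's statement of condition (ii).
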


\begin{lemma} 
  Let $(T,\lambda)$ display $(T',\lambda')$.  Assume that the labeling
  system satisfies (L1) and (L2) and suppose
  $\lambda'(e)=\lambda(e')\boxplus\lambda(e'')$ whenever $e$ is the edge
  resulting from suppressing the inner vertex between $e'$ and $e''$.
  If $T'$ is displayed by $T$ then $(T',\lambda')$ is displayed by
  $(T,\lambda)$ \rev{(w.r.t.\ any path property $\Pi$).}
\end{lemma}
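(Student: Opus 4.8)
The plan is to reduce the claim to a single elementary reduction step and then argue by induction on $|L(T)\setminus L(T')|$. By the step-wise characterization of displaying recalled in Section~\ref{sec:prelim}, $T'$ can be obtained from $T$ by repeatedly deleting a leaf $w\in L(T)\setminus L(T')$ together with its incident edge and, whenever the neighbour $u$ of $w$ then has degree two, suppressing $u$. It therefore suffices to treat one such step, producing a tree $T''$ whose labeling $\lambda''$ agrees with $\lambda$ on every surviving edge and satisfies $\lambda''(e)=\lambda(e')\boxplus\lambda(e'')$ on the single edge $e$ obtained by merging the two edges $e',e''$ incident to the suppressed vertex $u$. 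Composing these steps recovers $(T',\lambda')$; associativity of $\boxplus$ from (L1) guarantees that the iterated rule is well defined and coincides with the assumed single-step rule for $\lambda'$. Since condition (i) of the display definition (namely that $T$ displays $T'$) is exactly our hypothesis, the entire burden is condition (ii).

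For condition (ii) I would establish the stronger statement that the labeled path value is \emph{invariant} under one reduction step, i.e.\ $\lambda(\mathbb{P}_T(x,y))=\lambda''(\mathbb{P}_{T''}(x,y))$ for all $x,y\in L(T'')$, where $\lambda(\mathbb{P}(x,y)):=\boxplus_{e\in\mathbb{P}(x,y)}\lambda(e)$ as in (L2). The argument is a short case distinction. Since $w$ is a leaf distinct from $x$ and $y$, the path $\mathbb{P}_T(x,y)$ never visits $w$. If it also avoids $u$, then deleting $w$ and suppressing $u$ leaves the path and all its edge labels untouched, so the two values coincide trivially. If $\mathbb{P}_T(x,y)$ passes through $u$, then, because $u$ has degree two after deletion of $w$, the path enters and leaves $u$ precisely along $e'$ and $e''$, which in $T''$ are merged into the single edge $e$ with $\lambda''(e)=\lambda(e')\boxplus\lambda(e'')$. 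Associativity of $\boxplus$ then shows that the ordered $\boxplus$-sum along $\mathbb{P}_{T''}(x,y)$ equals that along $\mathbb{P}_T(x,y)$, since the only change is that two consecutive summands have been replaced by their combination. Invoking (L2), the equality of path values immediately yields $(\mathbb{P}_T(x,y),\lambda)\in\Pi \iff (\mathbb{P}_{T''}(x,y),\lambda'')\in\Pi$, which is condition (ii); induction over the reduction steps then gives the lemma.

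The step I expect to be the genuine obstacle is the \emph{rooted} (directed) form of (L2), in which the relevant value is not the full path sum but $\boxplus_{e\in\mathbb{P}(\lca{x,y},y)}\lambda(e)$. Here one must first verify that the least common ancestor is respected by a reduction step, i.e.\ that $\lca{x,y}$ computed in $T$ corresponds to $\lca{x,y}$ computed in $T''$, and that the sub-path from the lca to $y$ is the one that merges. This is immediate when the suppressed vertex $u$ is an interior point of $\mathbb{P}(\lca{x,y},y)$ or of $\mathbb{P}(\lca{x,y},x)$, but it needs extra care in the boundary cases where $u$ coincides with $\lca{x,y}$ or, in particular, where $u$ is the root: by the rooted displaying convention recalled in Section~\ref{sec:prelim}, the root may migrate to $\lca{L(T')}$ once the surviving leaves fall into a single subtree, and one has to check that the $\boxplus$-sum from the (possibly relocated) root-side ancestor down to $y$ is still preserved. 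Once these boundary cases are dispatched, again using only associativity of $\boxplus$ together with the fact that the deleted leaf $w$ lies off the $x$--$y$ path, the same invariance argument applies verbatim, completing the proof for every admissible $\Pi$.
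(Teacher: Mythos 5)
Your proof is correct and follows essentially the same route as the paper's: reduce to a single leaf-removal step, observe that the $\boxplus$-sum $\lambda(\mathbb{P}(x,y))$ is invariant under deleting $w$ and suppressing its degree-two neighbour, invoke (L2), and induct along the editing path. The paper's own proof is just a terser version of this; your added case analysis (path through $u$ versus avoiding $u$, and the rooted boundary cases involving $\lca{x,y}$ and root relocation) fills in details the paper leaves implicit, and is sound.
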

\begin{proof} 
  Suppose $T'$ is obtained from $T$ by removing a single leaf $w$. By
  construction $T'$ is displayed by $T$ and $\lambda(\mathbb{P}(x,y))$ is
  preserved upon removal of $w$ and suppression of its neighbor. Thus  
  (L2) implies that $(T,\lambda)$ displays $(T',\lambda')$. For an arbitrary 
  $T'$ displayed by $T$ this argument can be repeated for each individual
  leaf removal on the editing path from $T$ to $T'$.  
\end{proof}
\rev{We note in passing that this construction is also well behaved for
  PCGs: it preserves path length, and thus distances between leaves, by
  summing up the weights of edges whenever a vertex of degree 2 between
  them is omitted.}

Let us now turn to the properties of the specific relations that are of
interest in this contribution.

\begin{lemma}
  The relation $\Ro$ is an equivalence relation.
\end{lemma}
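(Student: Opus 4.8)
The plan is to check the three defining properties of an equivalence relation—reflexivity, symmetry, and transitivity—directly against the definition of $\Ro$, namely that (for $x\neq y$) $x\Ro y$ holds exactly when every edge of $\mathbb{P}(x,y)$ carries the label $0$, together with the stipulated convention $x\Ro x$. Reflexivity is then immediate by that convention, and symmetry is essentially free: in a tree the connecting path is unique and undirected, so $\mathbb{P}(x,y)=\mathbb{P}(y,x)$ as edge sets, and the condition ``all edges of $\mathbb{P}(x,y)$ are $0$-edges'' is manifestly symmetric in $x$ and $y$.

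The only substantive step is transitivity, so I would isolate it. Assume $x\Ro y$ and $y\Ro z$ with the three vertices distinct; I must produce $x\Ro z$. The clean way I would argue this is to pass to the spanning subgraph $T_0$ of $T$ obtained by deleting every edge with a nonzero label. Since being in the same connected component of a graph is an equivalence relation on its vertices, it suffices to observe that, for $x\neq y$, one has $x\Ro y$ if and only if $x$ and $y$ lie in the same component of $T_0$: if all edges of $\mathbb{P}(x,y)$ are $0$-edges then that path survives in $T_0$ and connects $x$ to $y$; conversely, any path between $x$ and $y$ inside $T_0$ must coincide with the unique path $\mathbb{P}(x,y)$ in the tree $T$, forcing all of its edges to be $0$-edges. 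The reflexive convention $x\Ro x$ matches the reflexivity of the connectivity relation, so $\Ro$ is exactly ``same component in $T_0$'' and is therefore an equivalence relation.

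If one prefers to avoid introducing $T_0$, transitivity can instead be obtained from the standard median structure of three points in a tree: $x,y,z$ meet at a unique median vertex $m$, with $\mathbb{P}(x,z)=\mathbb{P}(x,m)\cup\mathbb{P}(m,z)$, $\mathbb{P}(x,m)\subseteq\mathbb{P}(x,y)$, and $\mathbb{P}(m,z)\subseteq\mathbb{P}(y,z)$; hence every edge of $\mathbb{P}(x,z)$ already appears on $\mathbb{P}(x,y)$ or on $\mathbb{P}(y,z)$, and both of those consist entirely of $0$-edges by hypothesis. I expect the only point needing any care to be precisely this inclusion $\mathbb{P}(x,z)\subseteq\mathbb{P}(x,y)\cup\mathbb{P}(y,z)$—a purely structural property of trees—while everything else is a direct unwinding of the definition and requires no computation.
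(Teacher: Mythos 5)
Your proposal is correct, and it actually contains two proofs. Your fallback argument is exactly the paper's proof: the paper disposes of reflexivity and symmetry as immediate from the construction and proves transitivity by the single observation that, by uniqueness of paths in a tree, $\mathbb{P}(x,z)\subseteq \mathbb{P}(x,y)\cup\mathbb{P}(y,z)$, so a path whose two ``halves'' are covered by all-$0$ paths is itself all-$0$. Your primary argument is a genuinely different packaging: you re-encode $\Ro$ as ``lies in the same connected component of the spanning subgraph $T_0$ of $0$-edges,'' so that reflexivity, symmetry, and transitivity all follow at once from the fact that graph connectivity is an equivalence relation; the only work is the equivalence $x\Ro y \iff$ same component of $T_0$, whose nontrivial direction again rests on uniqueness of simple paths in a tree (any $x$--$y$ path surviving in $T_0$ must be $\mathbb{P}(x,y)$). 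What each buys: the paper's route is shorter and introduces no auxiliary object; your $T_0$ route is conceptually cleaner and gives slightly more for free --- it exhibits the $\Ro$-classes concretely as the leaf sets of the components of $T_0$, which makes the paper's subsequent remark (that $G(\Ro)$ is a disjoint union of cliques, i.e., each component is a clique) and the later discussion of the quotient $T/\Ro$ entirely transparent. Either write-up would be acceptable; just make sure, in the $T_0$ version, to state explicitly that $\Ro$ is a relation on the leaf set $X$ and that restricting the connectivity relation to $X$ preserves the equivalence axioms.
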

\begin{proof}
  By construction, $\Ro$ is symmetric and reflexive. To establish
  transitivity, suppose $x\Ro y$ and $y\Ro z$, i.e., $\lambda(e)=0$ for all
  $e\in \mathbb{P}(x,y)\cup \mathbb{P}(y,z)$. By uniqueness of the path
  connecting vertices in a tree, $\mathbb{P}(x,z)\subseteq
  \mathbb{P}(x,y)\cup \mathbb{P}(y,z)$, i.e., $\lambda(e)=0$ for all $e\in
  \mathbb{P}(x,z)$ and therefore $x \Ro z$.  
\end{proof}
Since $\Ro$ is an equivalence relation, the graph $G(\Ro)$ is a disjoint
union of complete graphs, or in other words, each connected component of
$G(\Ro)$ is a clique.

We are interested here in characterizing the pairs of trees and labeling
functions $(T,\lambda)$ that explain a given relation $\rho$ as its $\Ro$,
$\Rl$ or $\Rld$ relation. More precisely, we are interested in the least
resolved trees with this property.

\begin{definition} 
  Let $(T,\lambda)$ be an edge-labeled phylogenetic tree with leaf set
  $X=L(T)$. We say that $(T',\lambda')$ is \emph{edge-contracted from
  $(T,\lambda)$} if the following conditions hold: (i) $T'=T/e$ is the
  usual graph-theoretical edge contraction for some interior edge
  $e=\{u,v\}$ of $T$.
  
  \noindent (ii) The labels satisfy $\lambda'(e')=\lambda(e')$ for
  all $e'\ne e$.
\end{definition} 
Note that we do not allow the contraction of terminal edges, i.e., of edges
incident with leaves. 

\rev{
\begin{definition}[Least and Minimally Resolved Trees]
  Let $R\in \{\Ro,\ \Rl,\ \Rld,\ \Rl/\Ro,\ \Rld/\Ro\}$. 
  A pair $(T,\lambda)$ is \emph{least resolved} for a prescribed relation 
  $R$ if no edge contraction leads to a tree
  $T',\lambda')$ of $(T,\lambda)$ that explains $R$.
  A pair $(T,\lambda)$ is \emph{minimally resolved} for a
  prescribed relation $R$ if it has the fewest number of
  vertices among all trees that explain $R$.
\end{definition}

Note that every minimally resolved tree is also least resolved, but not
\textit{vice versa}.  }

\section{The single-1-relation}
\label{sect:single1}

The single-1-relation does not convey any information on the location of
the root and the corresponding partial order on the tree. We therefore
regard $T$ as unrooted in this section.

\begin{lemma}
  \label{lem:notriangle}
  Let $(T,\lambda)$ be an edge-labeled \rev{phylogenetic tree with leaf set
    $X$ and} resulting relations $\Ro$ and $\Rl$ over $X$.  Assume that
  $A,B$ are distinct cliques in $G(\Ro)$ and suppose $x\Rl y$ where $x\in
  A$ and $y\in B$. Then $x'\Rl y'$ holds for all $x'\in A$ and $y'\in B$.
\end{lemma}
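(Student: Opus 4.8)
The plan is to reduce the whole statement to a single bookkeeping observation about which edges can occur on the path $\mathbb{P}(x',y')$, using the same path-containment fact that was already used to establish transitivity of $\Ro$: in a tree, any path is edge-contained in the union of the paths along any walk joining its endpoints. First I would fix the single $1$-edge on $\mathbb{P}(x,y)$, call it $e^*$; since $x\Rl y$, every other edge of $\mathbb{P}(x,y)$ is a $0$-edge. Because $x,x'$ lie in the common $\Ro$-clique $A$ and $y,y'$ in the common clique $B$, the relation $x\Ro x'$ and $y\Ro y'$ gives that $\mathbb{P}(x,x')$ and $\mathbb{P}(y,y')$ consist entirely of $0$-edges.

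Next I would invoke the containment $\mathbb{P}(x',y')\subseteq \mathbb{P}(x',x)\cup\mathbb{P}(x,y)\cup\mathbb{P}(y,y')$, which holds because the right-hand side is a connected subgraph of $T$ containing both $x'$ and $y'$ and hence contains the unique path between them. Combined with the labelings from the previous step, every edge of $\mathbb{P}(x',y')$ is then either a $0$-edge or equal to $e^*$. In particular $\mathbb{P}(x',y')$ carries \emph{at most} one $1$-edge and no edge of higher label, so the only thing that can fail is that it carry \emph{too few} $1$-edges.

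The hard part, and the only step requiring a separate idea, is to rule this out, i.e.\ to show that $e^*$ genuinely lies on $\mathbb{P}(x',y')$. I would argue by contradiction: if $e^*\notin\mathbb{P}(x',y')$, then by the preceding paragraph all edges of $\mathbb{P}(x',y')$ are $0$-edges, giving $x'\Ro y'$ and thus placing $x'$ and $y'$ in the same $\Ro$-clique. This contradicts $x'\in A$, $y'\in B$ with $A\neq B$, since distinct cliques of $G(\Ro)$ are disjoint equivalence classes. Hence $e^*\in\mathbb{P}(x',y')$, so $\mathbb{P}(x',y')$ has exactly one $1$-edge with all remaining edges labeled $0$, which is precisely $x'\Rl y'$.

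I expect everything apart from this final contradiction to be routine; the contradiction is what converts the disjointness of the $\Ro$-classes into the required presence of $e^*$ on the new path. Notably, the path-containment fact removes any need to analyze the shape of the quartet spanned by $x,x',y,y'$, so no case distinction on tree topology should be necessary.
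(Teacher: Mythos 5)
Your proof is correct and its skeleton coincides with the paper's: the same containment $\mathbb{P}(x',y')\subseteq \mathbb{P}(x',x)\cup\mathbb{P}(x,y)\cup\mathbb{P}(y,y')$, the same observation that $\mathbb{P}(x',x)$ and $\mathbb{P}(y,y')$ carry only $0$-labels because $x\Ro x'$ and $y\Ro y'$, and hence the same conclusion that $\mathbb{P}(x',y')$ has at most one $1$-edge. The two arguments part ways only in how they exclude the possibility of \emph{zero} $1$-edges. The paper applies the containment trick a second time, in the reverse direction: $\mathbb{P}(x,y)\subseteq \mathbb{P}(x,x')\cup\mathbb{P}(x',y')\cup\mathbb{P}(y',y)$, so an all-$0$ path $\mathbb{P}(x',y')$ would force $\mathbb{P}(x,y)$ to be all-$0$, contradicting $x\Rl y$. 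You instead note that an all-$0$ path gives $x'\Ro y'$, which contradicts the disjointness of the distinct cliques $A$ and $B$; this is valid because the earlier lemma showing $\Ro$ is an equivalence relation makes the cliques of $G(\Ro)$ exactly the (pairwise disjoint) equivalence classes. Both endgames are sound; yours uses the hypothesis $A\neq B$ directly, whereas the paper's version never actually needs it (distinctness of the cliques is already forced by $x\Rl y$, since $x\Rl y$ and $x\Ro y$ are mutually exclusive). The difference is cosmetic rather than structural, and neither route requires any case analysis on the quartet topology spanned by $x,x',y,y'$.
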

\begin{proof}
  First, observe that $\mathbb{P}(x',y') \subseteq
  \mathbb{P}(x',x)\cup \mathbb{P}(x,y)\cup \mathbb{P}(y,y')$ in $T$. 
  Moreover,  $\mathbb{P}(x',x)$ and $\mathbb{P}(y,y')$ have only edges with 
  label $0$. As $\mathbb{P}(x,y)$ contains exactly one non-0-label, thus 
  $\mathbb{P}(x',y')$ contains at most one non-0-label. If there was 
  no non-0-label, then $\mathbb{P}(x,y) \subseteq
  \mathbb{P}(x,x')\cup \mathbb{P}(x',y')\cup \mathbb{P}(y',y)$ would imply 
  that $\mathbb{P}(x,y)$ also has only 0-labels, a contradiction. Therefore 
  $x'\Rl y'$.  
\end{proof}

As a consequence it suffices to study the single-1-relation on the quotient
graph $G(\Rl)/\Ro$. To be more precise, $G(\Rl)/\Ro$ has as vertex set the
equivalence classes of $\Ro$ and two vertices $c_i$ and $c_j$ are connected
by an edge if there are vertices $x\in c_i$ and $y\in c_j$ with $x\Rl
y$. Analogously, the graph $G(\Rld)/\Ro$ is defined.

  For a given $(T,\lambda)$ and its corresponding relation $\Ro$
  consider an arbitrary nontrivial equivalence class $c_i$ of $\Ro$. Since
  $\Ro$ is an equivalence relation, the induced subtree $T'$ with leaf set
  $c_i$ and inner vertices $\lca{c}$ for any subset $c\subseteq c_i$
  contains only 0-edges and is maximal w.r.t. this property. Hence, we
  could remove $T'$ from $T$ and identify the root $\lca{c_i}$ of $T'$ in
  $T$ by a representative of $c_i$, while keeping the information of $\Ro$
  and $\Rl$. Let us be a bit more explicit about this point. Consider trees
  $(T_Y,\lambda_Y)$ displayed by $(T,\lambda)$ with leaf sets $Y$ such that
  $Y$ contains exactly one (arbitrarily chosen) representative from each
  $\Ro$ equivalence class of $(T,\lambda)$.  For any such trees
  $(T_Y,\lambda_Y)$ and $(T_Y',\lambda_Y')$ with the latter property, there
  is an isomorphism $\alpha: T_Y\to T_{Y'}$ such that $\alpha(y)\Ro y$ and
  $\lambda_{Y'}(\alpha(e))=\lambda_{Y'}(e)$. Thus, all such
  $(T_Y,\lambda_Y)$ are isomorphic and differ basically only in the choice
  of the particular representatives of the equivalence classes of $\Ro$.
  Furthermore, $T_{Y}$ is isomorphic to the quotient graph $T/\Ro$ obtained
  from $T$ by replacing the (maximal) subtrees where all edges are labeled
  with $0$ by a representative of the corresponding $\Ro$-class.  Suppose
  $(T,\lambda)$ explains $G$. Then $(T_Y,\lambda_Y)$ explains $G[Y]$ for a
  given $Y$. Since all $(T_Y,\lambda_Y)$ are isomorphic, all $G[Y]$ are
  also isomorphic, and thus $G[Y]=G/\Ro$ for all $Y$.
  
To avoid unnecessarily clumsy language we will say that ``$(T,\lambda)$
explains \hbox{$G(\Rl)/\Ro$}'' instead of the more accurate wording
``$(T,\lambda)$ displays $(T_Y,\lambda_Y)$ where $Y$ contains exactly one
representative of each $\Ro$ equivalence class such that $(T_Y,\lambda_Y)$
explains $G(\Rl)/\Ro$''.

In contrast to $\Ro$, the single-1-relation $\Rl$ is not transitive.  As an
example, consider the star $S_3$ with leaf set $\{x,y,z\}$, inner
vertex $v$, and edge labeling $\lambda(v,x)=\lambda(v,z)=1\neq
\lambda(v,y)=0$.  Hence $x\Rl y$, $y\Rl z$ and $x\not \Rl z$. In fact, a
stronger property holds that forms the basis for understanding the
single-1-relation:
\begin{lemma}
  If $x\Rl y$ and $x\Rl z$, then $y\not\Rl z$.
	\label{lem:not-trans}
\end{lemma}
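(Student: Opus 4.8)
The plan is to argue directly from the definition of $\Rl$, using the structure of paths in a tree together with the "single non-$0$-edge" characterization. Suppose for contradiction that $x\Rl y$, $x\Rl z$, and also $y\Rl z$. The key combinatorial fact I would exploit is that in a tree the three paths $\mathbb{P}(x,y)$, $\mathbb{P}(x,z)$, and $\mathbb{P}(y,z)$ share a common branch point, namely the median vertex $m$ of $\{x,y,z\}$. Concretely, each path decomposes through $m$: we have $\mathbb{P}(x,y)=\mathbb{P}(x,m)\cup\mathbb{P}(m,y)$, and similarly for the other two pairs, where the three subpaths $\mathbb{P}(x,m)$, $\mathbb{P}(y,m)$, $\mathbb{P}(z,m)$ pairwise intersect only in $m$.

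The first step is to translate the three relations into statements about where the unique $1$-edge lies on each path. Since $x\Rl y$, exactly one edge on $\mathbb{P}(x,y)=\mathbb{P}(x,m)\cup\mathbb{P}(m,y)$ carries a non-$0$ label; this $1$-edge lies in exactly one of the two segments $\mathbb{P}(x,m)$ or $\mathbb{P}(m,y)$. I would then do a short case analysis on which of the three segments $\mathbb{P}(x,m)$, $\mathbb{P}(y,m)$, $\mathbb{P}(z,m)$ contain a $1$-edge, recalling that each such segment is shared between two of the three paths. The core observation is a counting/parity argument: let $a$, $b$, $c$ denote the number of $1$-edges on $\mathbb{P}(x,m)$, $\mathbb{P}(y,m)$, $\mathbb{P}(z,m)$ respectively. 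Then the three hypotheses say $a+b=1$, $a+c=1$, and $b+c=1$. Summing these gives $2(a+b+c)=3$, which is impossible for nonnegative integers $a,b,c$, delivering the contradiction.

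The main thing to get right — and the step I expect to require the most care — is the claim that the three paths decompose cleanly through a single median vertex $m$, so that the $1$-edge counts $a,b,c$ are genuinely additive across the pairs (i.e. no edge is double-counted and no edge of a pairwise path lies outside the three segments). This uses the fact that $T$ is a tree, so the union $\mathbb{P}(x,m)\cup\mathbb{P}(y,m)\cup\mathbb{P}(z,m)$ is exactly the subtree spanned by $x,y,z$, and the three segments meet only at $m$; hence every edge of $\mathbb{P}(x,y)$, $\mathbb{P}(x,z)$, or $\mathbb{P}(y,z)$ lies in precisely the two relevant segments and in no third one. Once this decomposition is established, the parity contradiction $2(a+b+c)=3$ finishes the argument immediately. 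I would also note that degenerate configurations (e.g. $m$ coinciding with one of $x,y,z$, or two of the leaves lying in the same $\Ro$-class) do not affect the count, since the lemma concerns the labels along the paths and the median construction remains valid whenever $x,y,z$ are distinct leaves.
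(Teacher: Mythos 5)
Your proof is correct and uses essentially the same approach as the paper: both decompose the three pairwise paths through the median vertex of $x,y,z$ (the paper's inner vertex $u$, your $m$), so that the counts of 1-edges on the three edge-disjoint segments add along each pairwise path. The only difference is the finish: where you sum the three equations $a+b=1$, $a+c=1$, $b+c=1$ to the parity contradiction $2(a+b+c)=3$, the paper splits into two cases according to whether $\mathbb{P}(x,u)$ carries a 1-edge, concluding either $y\Ro z$ or that $\mathbb{P}(y,z)$ carries exactly two 1-edges --- a cosmetic rather than substantive difference.
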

\begin{proof}
  Uniqueness of paths in $T$ implies that there is a unique inner vertex
  $u$ in $T$ such that
  $\mathbb{P}(x,y)=\mathbb{P}(x,u)\cup\mathbb{P}(u,y)$,
  $\mathbb{P}(x,z)=\mathbb{P}(x,u)\cup\mathbb{P}(u,z)$,
  $\mathbb{P}(y,z)=\mathbb{P}(y,u)\cup\mathbb{P}(u,z)$.  By assumption,
  each of the three sub-paths $\mathbb{P}(x,u)$, $\mathbb{P}(y,u)$, and
  $\mathbb{P}(z,u)$ contains at most one 1-label. There are only two cases:
  (i) There is a 1-edge in $\mathbb{P}(x,u)$. Then neither
  $\mathbb{P}(y,u)$ nor $\mathbb{P}(z,u)$ may have another 1-edge, and thus
  $y\Ro z$, which implies that $y\not\Rl z$. (ii) There is no 1-edge in
  $\mathbb{P}(x,u)$. Then both $\mathbb{P}(y,u)$ and $\mathbb{P}(z,u)$ must
  have exactly one 1-edge. Thus $\mathbb{P}(y,z)$ harbors exactly two
  1-edges, whence $y\not\Rl z$.   \end{proof}

Lemma \ref{lem:not-trans} can be generalized as follows.
\begin{lemma}
  Let $x_1,\dots,x_n$ be vertices s.t.\ $x_{i}\Rl x_{i+1}$, $1\leq i\leq
  n-1$.  Then, for all $i,j$, $x_i\Rl x_j$ if and only if $|i-j|=1$.
  \label{lem:cycle-free}
\end{lemma}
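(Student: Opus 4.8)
The plan is to translate the single-$1$-relation into adjacency in the tree obtained by contracting all $0$-edges, and then to exploit that a non-revisiting walk in a tree is a path. The \emph{if}-direction is free: since $\mathbb{P}(x,y)=\mathbb{P}(y,x)$ the relation $\Rl$ is symmetric, so $x_i\Rl x_{i+1}$ immediately yields $x_i\Rl x_j$ whenever $|i-j|=1$. It therefore remains to prove that $x_i\not\Rl x_j$ as soon as $|i-j|\geq 2$ (the diagonal case $i=j$ being clear, as the trivial path carries no $1$-edge).

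Following the reduction to $G(\Rl)/\Ro$ set up above, I would work in the contracted tree $\widehat{T}:=T/\Ro$, whose vertices are the $\Ro$-classes and all of whose edges are $1$-edges. If $v_i$ denotes the $\Ro$-class of $x_i$, then the number of $1$-edges along $\mathbb{P}(x_i,x_j)$ in $T$ equals the length $d_{\widehat{T}}(v_i,v_j)$ of the unique $v_i$-$v_j$ path in $\widehat{T}$; in particular $x\Ro y\iff [x]=[y]$ and $x\Rl y\iff d_{\widehat{T}}([x],[y])=1$. Because $\Ro$ and $\Rl$ are mutually exclusive, each hypothesis $x_i\Rl x_{i+1}$ forces $v_i\neq v_{i+1}$ together with $d_{\widehat{T}}(v_i,v_{i+1})=1$, so $v_1,\dots,v_n$ is a walk in $\widehat{T}$ that moves to a distinct neighbour at every step.

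Granting that the classes $v_1,\dots,v_n$ are pairwise distinct, this walk is a \emph{simple} path in the tree $\widehat{T}$; hence the unique $v_i$-$v_j$ path is the sub-walk $v_i,v_{i+1},\dots,v_j$, and $d_{\widehat{T}}(v_i,v_j)=|i-j|$. Consequently $\mathbb{P}(x_i,x_j)$ carries exactly $|i-j|$ edges labelled $1$, so $x_i\Rl x_j$ (exactly one $1$-edge) holds precisely when $|i-j|=1$, as claimed. The whole argument can equivalently be packaged as an induction on $|i-j|$ driven by Lemma~\ref{lem:not-trans}, but the tree-geometric picture makes the mechanism most transparent.

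I expect the one genuinely delicate point to be the pairwise distinctness of the $v_i$, i.e.\ that the walk never backtracks onto an earlier class; this is exactly where Lemmas~\ref{lem:not-trans} and \ref{lem:notriangle} must do the work. For a gap of $2$, Lemma~\ref{lem:not-trans} applied at the centre $x_{i+1}$ already gives $x_i\not\Rl x_{i+2}$ and, in its case~(ii), $d_{\widehat{T}}(v_i,v_{i+2})=2$; the opposite case $v_i=v_{i+2}$ (a back-tracking step with $x_i\Ro x_{i+2}$) is precisely what has to be excluded, since by Lemma~\ref{lem:notriangle} an $\Ro$-coincidence of two non-consecutive $x_i,x_j$ would propagate the relation of a neighbour and could manufacture an unwanted edge $x_i\Rl x_{j+1}$. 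This is the reason the statement is to be read for representatives of pairwise-distinct $\Ro$-classes, in line with the passage to $G(\Rl)/\Ro$; with that understanding the $v_i$ are distinct, the walk is a simple path, and the argument closes.
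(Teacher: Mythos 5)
Your proof is correct, but it takes a genuinely different route from the paper's. The paper argues by induction on $n$: the base case $n=3$ is Lemma~\ref{lem:not-trans}, and the inductive step assumes $x_1\Rl x_n$ and derives a contradiction through an intricate analysis of median vertices $u,v,c_1,c_2,\dots$ on the paths $\mathbb{P}(x_i,x_j)$ inside $T$, propagating the contradiction step by step down to $\mathbb{P}(x_1,x_2)$. You instead turn the quotient construction, which the paper uses only as a reduction device (the passage to $G(\Rl)/\Ro$ and $T/\Ro$), into the engine of the proof: after contracting all $0$-edges, the number of $1$-edges on $\mathbb{P}_T(x,y)$ equals the distance $d_{\widehat{T}}([x],[y])$, so $\Ro$ becomes equality of classes, $\Rl$ becomes adjacency, and the lemma reduces to the elementary fact that a self-avoiding walk in a tree is the unique geodesic between its endpoints, yielding the stronger quantitative statement $d_{\widehat{T}}(v_i,v_j)=|i-j|$. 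What your approach buys: brevity, a stronger conclusion (exactly $|i-j|$ many $1$-edges on $\mathbb{P}(x_i,x_j)$, which would serve equally for the relations $\Rk$), and it makes Corollary~\ref{cor:cycle-free} immediate. What the paper's approach buys: it stays entirely inside $T$ with bare path arguments and never needs the contraction--distance correspondence to be stated, though your justification of it (contracting a $0$-edge changes no leaf-to-leaf $1$-edge count) is a one-line induction and is implicit in the paper's preceding discussion of $T/\Ro$. Finally, you are right to single out pairwise distinctness of the classes $v_i$ as the one delicate point, and your resolution matches the paper's intent: the lemma is only true when the $x_i$ represent pairwise distinct $\Ro$-classes (a star whose edges to $x_1,x_2,x_3,x_4$ are labeled $0,1,0,1$ satisfies the chain hypothesis yet has $x_1\Rl x_4$), and the paper's own proof relies on exactly this reading, since it declares $x_2\Ro x_n$ a contradiction ``as we assumed that $\Ro$ is discrete'', which moreover presupposes $x_2\neq x_n$. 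Your write-up makes that hidden hypothesis explicit rather than leaving it buried in the proof.
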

\begin{proof}
  For $n=3$, we can apply Lemma \ref{lem:not-trans}. Assume the assumption
  is true for all $n<K$. Now let $n=K$. Hence, for all vertices $x_i,x_j$
  along the paths from $x_1$ to $x_{K-1}$, as well as the paths from $x_2$
  to $x_K$ it holds that $|i-j|=1$ if and only if we have $x_i\Rl
  x_j$. Thus, for the vertices $x_i,x_j$ we have $|i-j|>1$ if and only if
  we have $x_i\not \Rl x_j$. Therefore, it remains to show that $x_1\not\Rl
  x_n$.
	
  Assume for contradiction, that $x_1\Rl x_n$.  Uniqueness of paths on $T$
  implies that there is a unique inner vertex $u$ in $T$ that lies on
  all three paths $\mathbb{P}(x_1,x_2)$, $\mathbb{P}(x_1,x_n)$, and
  $\mathbb{P}(x_2,x_n)$.

  There are two cases, either there is a 1-edge in $\mathbb{P}(x_1,u)$ or
  $\mathbb{P}(x_1,u)$ contains only 0-edges.

  If $\mathbb{P}(x_1,u)$ contains a 1-edge, then all edges along the path
  $\mathbb{P}(u, x_n)$ must be $0$, and all the edge on path $\mathbb{P}(u,
  x_2)$ must be $0$, However, this implies that $x_2\Ro x_n$, a
  contradiction, as we assumed that $\Ro$ is discrete.

  Thus, there is no 1-edge in $\mathbb{P}(x_1,u)$ and hence, both paths
  $\mathbb{P}(u,x_n)$ and $\mathbb{P}(u,x_2)$ contain each exactly one
  1-edge.
		
  Now consider the unique vertex $v$ that lies on all three paths
  $\mathbb{P}(x_1,x_2)$, $\mathbb{P}(x_1,x_3)$, and $\mathbb{P}(x_2,x_3)$.

  Since $u,v\in \mathbb{P}(x_1,x_2)$, we have either (A) $v\in
  \mathbb{P}(x_1,u)$ where $u=v$ is possible, or (B) $u\in
  \mathbb{P}(x_1,v)$ and $u\neq v$. We consider the two cases separately.

  \smallskip Case (A): Since there is no 1-edge in $\mathbb{P}(x_1,u)$ and
  $x_1\Rl x_n$, resp., $x_1\Rl x_2$ there is exactly one 1-edge in
  $\mathbb{P}(u,x_n)$, resp., $\mathbb{P}(u,x_2)$.  Moreover, since $x_2\Rl
  x_3$ the path $\mathbb{P}(v,x_3)$ contains only 0-edges, and thus $x_3\Rl
  x_n$, a contradiction.

  \smallskip Case (B): Since there is no 1-edge in
  $\mathbb{P}(x_1,u)$ and $x_1\Rl x_n$, the path $\mathbb{P}(u,x_n)$
  contains exactly one 1-edge.

  In the following, we consider paths between two vertices $x_i, x_{n-i}
  \in \{x_1,\dots,x_n\}$ step-by-step, starting with $x_1$ and $x_{n-1}$.

  The induction hypothesis implies that $x_1\not \Rl x_{n-1}$ and since $\Ro$
  is discrete, we can conclude that $x_1\not \Ro x_{n-1}$.  Let
  $\mathbb{P}(x_1,x_n) = \mathbb{P}(x_1,a)\cup ab \cup \mathbb{P}(b,x_n)$
  where $e=ab$ is the 1-edge contained in $\mathbb{P}(x_1,x_n)$.  Let $c_1$
  be the unique vertex that lies on all three paths $\mathbb{P}(x_1,x_n),
  \mathbb{P}(x_1,x_{n-1})$, and $\mathbb{P}(x_{n-1},x_n)$.  If $c_1$ lies
  on the path $\mathbb{P}(x_1,a)$, then $\mathbb{P}(c_1,x_{n-1})$ contains
  only 0-edges, since $\mathbb{P}(x_{n-1},x_n) =
  \mathbb{P}(x_{n-1},c_1)\cup \mathbb{P}(c_1,a)\cup ab \cup
  \mathbb{P}(b,x_n)$ and $x_n \Rl x_{n-1}$. However, in this case the path
  $\mathbb{P}(x_{1},c_1)\cup\mathbb{P}(c_1, x_{n-1})$ contains only
  0-edges, which implies that $x_1\Ro x_{n-1}$, a contradiction.  Thus, the
  vertex $c_1$ must be contained in $\mathbb{P}(b,x_n)$.  Since $x_1\Rl
  x_n$, the path $\mathbb{P}(c_1,x_n)$ contains only 0-edges.  Hence, the
  path $\mathbb{P}(c_1,x_{n-1})$ contains exactly one 1-edge, because $x_n
  \Rl x_{n-1}$. In particular, by construction we see that
  $\mathbb{P}(x_1,x_n) = \mathbb{P}(x_1,u)\cup
  \mathbb{P}(u,c_1)\cup\mathbb{P}(c_1,x_n)$.

  Now consider the vertices $x_n$ and $x_{n-2}$. Let $a'b'$ be the 1-edge
  on the path $\mathbb{P}(c_1,x_{n-1}) = \mathbb{P}(c_1,a') \cup a'b' \cup
  \mathbb{P}(b',x_{n-1})$.  Since $x_{n}\not \Rl x_{n-2}$ and $x_{n}\not
  \Ro x_{n-2}$ we can apply the same argument and conclude that there is a
  vertex $c_2\in \mathbb{P}(b',x_{n-1})$ s.t.\ the path
  $\mathbb{P}(c_2,x_{n-2})$ contains exactly one 1-edge.  In particular, by
  construction we see that $\mathbb{P}(x_1,x_{n-2}) = \mathbb{P}(x_1,u)\cup
  \mathbb{P}(u,c_1)\cup \mathbb{P}(c_1,c_2) \cup\mathbb{P}(c_2,x_{2})$
  s.t.\ the path $\mathbb{P}(c_1,c_2)$ contains exactly one 1-edge.

  Repeating this argument, we arrive at vertices $x_2$ and $x_4$ and can
  conclude analogously that there is a path $\mathbb{P}(c_{n-2},x_2)$ that
  contains exactly one 1-edge and in particular, that $\mathbb{P}(x_1,x_2)
  = \mathbb{P}(x_1,u)\cup \mathbb{P}(u,c_1)\cup \left(\bigcup_{1\leq i\leq
      n-3}\mathbb{P}(c_i,c_{i+1})\right) \cup\mathbb{P}(c_{n-2},x_{n-2})$,
  where each of the distinct paths $\mathbb{P}(c_i,c_{i+1})$, $1\leq i\leq
  n-3$ contains exactly one 1-edge.  However, this contradicts that $x_1\Rl
  x_2$.   
\end{proof}

\begin{cor}
  The graph $G(\Rl)/\Ro$ is a forest, and hence all paths in $G(\Rl)/\Ro$
  are induced paths.
  \label{cor:cycle-free}
\end{cor}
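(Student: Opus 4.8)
The plan is to derive the Corollary directly from Lemma~\ref{lem:cycle-free}, which already encodes all the combinatorial work; what remains is essentially bookkeeping about cycles and chords. Recall that on the quotient $G(\Rl)/\Ro$ the relation $\Ro$ is discrete, so the hypotheses of Lemma~\ref{lem:cycle-free} apply verbatim to its vertices.

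First I would argue that $G(\Rl)/\Ro$ contains no cycle. Suppose for contradiction that $x_1,x_2,\dots,x_n,x_1$ is a cycle, so that $x_i\Rl x_{i+1}$ for $1\le i\le n-1$ together with the closing adjacency $x_n\Rl x_1$, where $x_1,\dots,x_n$ are pairwise distinct and $n\ge 3$. The linear sequence $x_1,\dots,x_n$ then satisfies the hypothesis of Lemma~\ref{lem:cycle-free}, so $x_i\Rl x_j$ holds if and only if $|i-j|=1$. In particular the closing adjacency $x_n\Rl x_1$ (equivalently $x_1\Rl x_n$, since $\Rl$ is symmetric) forces $|1-n|=1$, i.e.\ $n=2$, contradicting $n\ge 3$. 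Hence $G(\Rl)/\Ro$ is an acyclic undirected graph, that is, a forest.

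The statement about induced paths then follows immediately, and I would justify it both structurally and directly. Structurally, in any forest a path cannot possess a chord, since a chord joining two non-consecutive vertices of the path would close a cycle together with the intervening path segment; as the forest is acyclic, no such chord exists and every path is induced. Alternatively, this can be read off directly from Lemma~\ref{lem:cycle-free}: for any path $x_1,\dots,x_n$ in $G(\Rl)/\Ro$ the lemma gives $x_i\Rl x_j \iff |i-j|=1$, which is precisely the assertion that the only edges among $\{x_1,\dots,x_n\}$ are the path edges.

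I do not expect any genuine obstacle here, as all the difficulty has already been absorbed into Lemma~\ref{lem:cycle-free}. The only points requiring care are to match the cyclic indexing of a putative cycle against the linear hypothesis of the lemma (so that the single ``extra'' adjacency whose existence must be contradicted is exactly $x_1\Rl x_n$), and to invoke the discreteness of $\Ro$ on the quotient so that the lemma is applicable in the first place.
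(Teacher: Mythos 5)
Your proposal is correct and follows exactly the route the paper intends: the paper states this corollary without proof as an immediate consequence of Lemma~\ref{lem:cycle-free}, and your argument (a cycle would force the closing adjacency $x_1\Rl x_n$ with $|1-n|>1$, contradicting the lemma; chordlessness of paths then follows either from acyclicity or again directly from the lemma) is precisely the bookkeeping the authors left implicit. Your remarks on matching cyclic to linear indexing and on discreteness of $\Ro$ in the quotient are the right points of care.
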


Next we analyze the effect of edge contractions in $T$.

\begin{lemma} \label{lem:contract}
  Let $(T,\lambda)$ explain $G(\Rl)/\Ro$ and let $(T',\lambda')$ be the
  result of contracting an \rev{interior} edge  $e$ in $T$. If $\lambda(e)=0$
  then $(T',\lambda')$ explains $G(\Rl)/\Ro$. If $G(\Rl)/\Ro$ is connected
  and $\lambda(e)=1$ then $(T',\lambda')$ does not explain $G(\Rl)/\Ro$.

  \rev{If $G(\Rl)/\Ro$ is connected and $(T,\lambda)$ is a tree that
    explains \mbox{$G(\Rl)/\Ro$}, then $(T,\lambda)$ is least resolved if
    and only if all 0-edge are incident to leaves and each inner vertex is
    incident to exactly one 0-edge.}

  \rev{If, in addition, $(T,\lambda)$ is minimally resolved, then all
    0-edge are incident to leaves and each inner vertex is incident to
    exactly one 0-edge.}
\end{lemma}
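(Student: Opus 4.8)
The plan is to treat the four assertions in turn, all resting on one bookkeeping fact: contracting an interior edge $e$ changes the number of $1$-labels along every leaf-to-leaf path by exactly $\lambda(e)$. For the first assertion, let $e=\{u,v\}$ be an interior $0$-edge and $(T',\lambda')$ its contraction. Merging $u$ and $v$ produces a vertex of degree $\deg(u)+\deg(v)-2\ge 4$, so $T'$ is again a phylogenetic tree with the same leaf set. For any leaves $x,y$ the path $\mathbb{P}_{T'}(x,y)$ arises from $\mathbb{P}_T(x,y)$ by deleting $e$ if it occurs; since $\lambda(e)=0$ this preserves the number of $1$-edges along the path. Hence $x\Ro y$ and $x\Rl y$ are unchanged, the $\Ro$-classes and their representatives are unchanged, and $(T',\lambda')$ still explains $G(\Rl)/\Ro$.

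For the second assertion, let $e=\{u,v\}$ be an interior $1$-edge. Removing $e$ splits the leaf set into two parts $A$ and $B$ on the $u$- and $v$-sides, both nonempty since $u,v$ have degree $\ge 3$. Because $G(\Rl)/\Ro$ is connected there is an $\Rl$-edge $xy$ with $x\in A$ and $y\in B$. Its path necessarily traverses $e$, and as $x\Rl y$ this path carries exactly one $1$-edge, which must be $e$ itself. After contracting $e$ the path $\mathbb{P}_{T'}(x,y)$ carries no $1$-edge, so $x\Ro y$ in $T'$ and therefore $x\not\Rl y$; the edge $xy$ is lost, so $(T',\lambda')$ cannot explain $G(\Rl)/\Ro$.

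The third assertion combines these two with a structural analysis. Since only interior edges may be contracted, the first two assertions already show that $(T,\lambda)$ is least resolved if and only if it has no interior $0$-edge, i.e.\ every $0$-edge is terminal; this is exactly condition (a), and it yields the $(\Leftarrow)$ direction at once. For $(\Rightarrow)$ it remains to derive that each inner vertex carries \emph{exactly} one $0$-edge. \emph{At most one} follows from discreteness of $\Ro$ on the representatives: two terminal $0$-edges at an inner vertex $w$ would join two distinct leaves by an all-$0$ path, contradicting that they form distinct $\Ro$-classes. \emph{At least one} follows from connectivity: if some inner vertex $w$ had all incident edges labeled $1$, then for leaves $x,y$ in different components of $T-w$ the path $\mathbb{P}(x,y)$ would contain two $1$-edges incident to $w$, so $x\not\Rl y$; as $\deg(w)\ge 3$ this would partition the leaf set into at least two mutually non-adjacent blocks, contradicting connectivity of $G(\Rl)/\Ro$. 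Hence each inner vertex is incident to precisely one $0$-edge.

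The fourth assertion is immediate, since a minimally resolved tree is in particular least resolved, so the characterization just established applies verbatim. I expect the main obstacle to be the \emph{at least one} half of the third assertion: the remaining steps are direct consequences of the $1$-label count, but here the degree of an inner vertex must be converted, via the components of $T-w$, into a disconnection of $G(\Rl)/\Ro$, and one must check that each such component indeed contains a leaf so that the resulting blocks are nonempty.
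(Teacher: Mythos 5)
Your proof is correct and follows essentially the same route as the paper's: the contraction bookkeeping for interior 0-edges and 1-edges, then discreteness of $\Ro$ (at most one 0-edge per inner vertex) and connectivity of $G(\Rl)/\Ro$ (at least one) for the characterization, with the minimally-resolved claim reduced to the least-resolved one. Your "components of $T-w$" version of the \emph{at least one} step is in fact slightly more robust than the paper's, which derives the contradiction from a leaf adjacent to $w$ and thus tacitly assumes such a leaf exists.
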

\begin{proof}
  Let $\Rl_T$, $\Ro_T$, $\Rl_{T'}$, and $\Ro_{T'}$ be the relations
  explained by $T$ and $T'$, respectively.  Since $(T,\lambda)$ explains
  $G(\Rl)/\Ro$ , we have $\Rl_T\,=\,\Rl/\Ro$. Moreover, since $\Ro_T$ is
  discrete, no two distinct leaves of $T$ are in relation $\Ro_T$.

  If $\lambda(e)=0$, then contracting the interior edge $e$ clearly
  preserves the property of $\Ro_{T'}$ being discrete.  Since only interior
  edges are allowed to be contracted, we have $L(T)=L(T')$. Therefore,
  $\Ro_T =\Ro_{T'}$ and the 1-edges along any path from $x\in L(T)=L(T')$
  to $y\in L(T)=L(T')$ remains unchanged, and thus $\Rl_T=\Rl_{T'}$.
  Hence, $(T',\lambda')$ explains $G(\Rl)/\Ro$.

  If $G(\Rl)/\Ro$ is connected, then for every 1-edge $e$ there is a pair
  of leaves $x'$ and $x''$ such that $x'\Rl x''$ and $e$ is the only 1-edge
  along the unique path connecting $x'$ and $x''$. Consequently,
  contracting $e$ would make $x'$ and $x''$ non-adjacent w.r.t.\ the
  resulting relation.  Thus no 1-edge can be contracted in $T$ without
  changing $G(\Rl)/\Ro$. 

  \rev{Now assume that $(T,\lambda)$ is a least resolved tree that explains
    the connected graph $G(\Rl)/\Ro$.  By the latter arguments, all
    interior edges of $(T,\lambda)$ must be 1-edges and thus any 0-edge
    must be incident to leaves.  Assume for contradiction, that there is an
    inner vertex $w$ such that for all adjacent leaves $x',x''$ we have
    $\lambda(wx') = \lambda(wx'') = 1$. Thus, for any such leaves we have
    $x'\not\Rl x''$. In particular, any path from $x'$ to any other leaf
    $y$ (distinct from the leaves adjacent to $w$) contains an interior
    1-edge. Thus, $x'\not \Rl y$ for any such leaf of $T$.  However, this
    immediately implies that $x'$ is an isolated vertex in $G(\Rl)/\Ro$; a
    contradiction to the connectedness of $G(\Rl)/\Ro$.  Furthermore, if
    there is an inner vertex $w$ such that for adjacent leaves $x',x''$ it
    holds that $\lambda(wx') = \lambda(wx'') = 0$, then $x'\Ro x''$; a
    contradiction to $\Ro$ being discrete.  Therefore, each inner vertex is
    incident to exactly one 0-edge.
	
    If $G(\Rl)/\Ro$ is connected and $(T,\lambda)$ explains $G(\Rl)/\Ro$
    such that all 0-edge are incident to leaves, then all interior edges
    are 1-edges. As shown, no interior 1-edge can be contracted in $T$
    without changing the corresponding $\Rl$ relation. Moreover, no
    leaf-edge can be contracted since $L(T) = V(G(\Rl)/\Ro)$. Hence,
    $(T,\lambda)$ is least resolved.

    Finally, since any minimally resolved tree is least resolved, the last
    assertions follows from the latter arguments.    }
\end{proof}

\begin{algorithm}[tbp]
\caption{Compute $(T(Q), \lambda)$}
\label{alg:Q}
\begin{algorithmic}[1]
  \REQUIRE $Q$
  \ENSURE $T(Q)$
  \STATE set $T(Q)\leftarrow Q$ 
  \STATE Retain the labels of all leaves of $Q$ in $T(Q)$ and relabel 
			all inner vertices $u$ of $Q$ as $u'$.
  \STATE Label all edges of the copy of $Q$ by $\lambda(e)=1$.
  \STATE For each inner vertex $u'$ of $Q$ add a vertex $u$ to $T(Q)$ and 
		   insert the edge $uu'$. 
  \STATE Label all edges of the form $e=uu'$ with $\lambda(e)=0$. 
\end{algorithmic}
\end{algorithm}

Let $\mathbb{T}$ be the set of all trees with vertex set $X$ but no
edge-labels and $\mathcal{T}$ denote the set of all edge-labeled 0-1-trees
$(T,\lambda)$ with leaf set $X$ such that each inner vertex $w\in W$ has
degree at least $3$ and there is exactly one adjacent leaf $v$ to $w$ with
$\lambda(wv)=0$ while all other edges $e$ in $T$ have label $\lambda(e)=1$.

\begin{figure}[tbp]
\begin{tabular}{lcr}
\begin{minipage}{0.5\textwidth}
\begin{center}
\includegraphics[width=\textwidth]{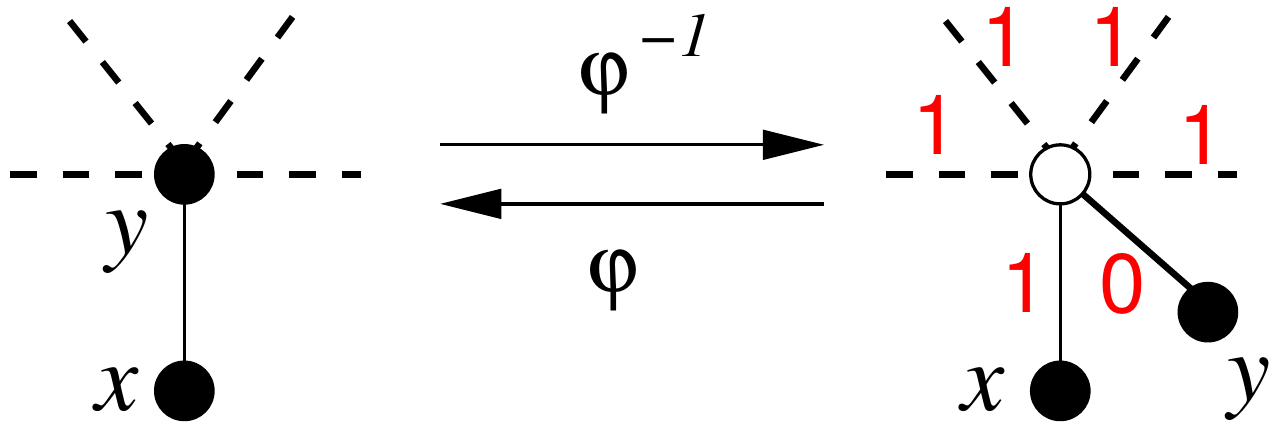}
\end{center} 
\end{minipage} & & 
\begin{minipage}{0.4\textwidth}
  \caption{\rev{Illustration of the bijection $\varphi$. It contracts, at
      each inner vertex (white) of $(T,\lambda)$ the unique $0$-edge and
      transfers $y\in X$ as vertex label at the inner vertex of $Q$.
      Leafs in $(T,\lambda)$ with incident to $1$-edges remain unchanged.
      The inverse map $\varphi^{-1}$ is given by Algorithm~\ref{alg:Q}.}}
  \label{fig:bijection}
\end{minipage}
\end{tabular}
\end{figure}

\rev{
\begin{lemma}
	The map $\varphi : \mathcal{T} \to \mathbb{T}$ with 
        $\varphi: (T,\lambda)\ \mapsto Q$,
	$V(Q)=L(T)$, and  
	$Q\simeq T^*$, where $T^*$ is the underlying unlabeled tree 
        obtained from $(T,\lambda)$ by contracting all edges labeled $0$, 
        is a bijection.
	\label{lem:bijection}
\end{lemma}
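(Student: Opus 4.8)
The plan is to prove that $\varphi$ is a bijection by exhibiting its inverse explicitly, namely the map $\psi\colon\mathbb{T}\to\mathcal{T}$, $\psi\colon Q\mapsto(T(Q),\lambda)$, computed by Algorithm~\ref{alg:Q}, and then verifying $\varphi\circ\psi=\mathrm{id}_{\mathbb{T}}$ and $\psi\circ\varphi=\mathrm{id}_{\mathcal{T}}$ (up to the canonical label-preserving isomorphism). Before checking the compositions, I would first confirm that both maps land in the claimed codomains.

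For well-definedness of $\varphi$, the key structural observation is that in any $(T,\lambda)\in\mathcal{T}$ the $0$-edges are exactly the $|W|$ terminal edges joining each inner vertex $w$ to its unique $0$-labeled leaf $v_w$, and these edges form a matching: since each leaf has degree $1$, distinct inner vertices carry distinct $0$-leaves, and by definition every other edge of $T$ is a $1$-edge. Contracting all $|W|$ of these $0$-edges in a tree with $|W|+|X|$ vertices therefore yields a tree $T^*$ with exactly $|X|$ vertices, in which each inner vertex $w$ has been merged with $v_w$. Transferring the label $v_w$ to the merged vertex and leaving each $1$-leaf untouched gives a tree whose vertex set is precisely $X$, so $\varphi(T,\lambda)=T^*\in\mathbb{T}$.

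For well-definedness of $\psi$, I would trace Algorithm~\ref{alg:Q}: every inner vertex $u$ of $Q$ is relabeled $u'$ and receives a fresh pendant leaf $u$ through a new $0$-edge $uu'$, while the original edges of $Q$ become $1$-edges. Then each inner vertex $u'$ is incident to exactly one $0$-edge, all remaining edges are $1$-edges, the degree of $u'$ exceeds that of $u$ in $Q$ by one and is hence at least $3$, and the leaf set of $T(Q)$ is the (disjoint) union of the leaves of $Q$ and the relabeled inner vertices of $Q$, i.e.\ exactly $V(Q)=X$. Thus $(T(Q),\lambda)\in\mathcal{T}$. The composition $\varphi\circ\psi$ is then immediate: the only $0$-edges of $T(Q)$ are the edges $uu'$ introduced by the algorithm, so contracting them re-identifies each $u$ with $u'$, restores the original edges of $Q$ as the $1$-edges, and recovers $Q$, giving $\varphi(\psi(Q))\simeq Q$. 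Conversely, by the matching description above the inner vertices of $T^*$ are precisely the vertices carrying the transferred labels $v_w$, so applying $\psi$ reattaches a pendant $0$-leaf to each of them, splits each merged vertex back into $w$ and $v_w$, and re-labels the $1$-edges as the original ones, giving $\psi(\varphi(T,\lambda))\simeq(T,\lambda)$.

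I expect the only real obstacle to be bookkeeping rather than any genuine difficulty: one must verify carefully that contracting the $0$-edges sends every inner vertex of $T$ to an inner vertex of $T^*$ and every $1$-leaf of $T$ to a leaf of $T^*$, so that the subsequent reattachment of pendant $0$-leaves under $\psi$ targets exactly the right vertices, and that the label transfer $w\mapsto v_w$ is consistent in both directions. Once this correspondence between inner vertices of $T$ and the labeled inner vertices of $T^*$ is pinned down, the two compositions reduce to the observation that edge contraction and the pendant-leaf attachment of Algorithm~\ref{alg:Q} are mutually inverse operations, which establishes that $\varphi$ is a bijection.
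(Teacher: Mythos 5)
Your proof is correct and takes essentially the same approach as the paper: both establish bijectivity by exhibiting the inverse map computed by Algorithm~\ref{alg:Q} and checking that contracting the newly introduced $0$-edges recovers $Q$. You are merely more explicit than the paper's terse argument --- spelling out that the $0$-edges of a tree in $\mathcal{T}$ form a matching of terminal edges and verifying the second composition $\psi\circ\varphi=\mathrm{id}_{\mathcal{T}}$ as well, details the paper leaves implicit.
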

\begin{proof}
	We show first that $\varphi$ and $\varphi^{-1}$ are maps.  
	Clearly, $\varphi$ is a map, since the edge-contraction 
	is well-defined and leads to exactly one tree in $\mathbb{T}$. 
	For $\varphi^{-1}$ we construct $(T,\lambda)$ from $Q$
	as in Algorithm \ref{alg:Q}. 
	It is easy to see that  $(T,\lambda) \in  \mathcal{T}$. 
	Now consider $T^*$ obtained from  $(T,\lambda)$ by 
	contracting all edges labeled $0$. By construction, 
	$T^* \simeq Q$ (see Fig. \ref{fig:bijection}). 
	Hence,  $\varphi : \mathcal{T} \to  \mathbb{T}$ is bijective. 
\end{proof}
}

\rev{The bijection is illustrated in Fig.~\ref{fig:bijection}.}

\rev{
\begin{lemma}
  Let $(T,\lambda) \in \mathcal{T}$ and $Q= \varphi((T,\lambda)) \in
  \mathbb{T}$.  The set $\mathcal{T}$ contains \emph{all} least resolved
  trees that explain $Q$.

  Moreover, if $Q$ is considered as a graph \mbox{$G(\Rl)/\Ro$} with vertex
  set $X$, then $(T,\lambda)$ is the unique least resolved tree 
  that
  explains $Q$ and therefore, the unique minimally resolved tree that
  explains $Q$.
  \label{lem:Umin}
\end{lemma}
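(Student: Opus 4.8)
The plan is to combine the structural characterization of least resolved trees in Lemma~\ref{lem:contract} with the bijection $\varphi$ of Lemma~\ref{lem:bijection}. Since $Q\in\mathbb{T}$ is a tree, the graph $G(\Rl)/\Ro=Q$ is connected, which is exactly the hypothesis under which Lemma~\ref{lem:contract} operates.

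For the first claim I would start from an arbitrary least resolved tree $(T',\lambda')$ explaining $Q$. By the ``only if'' direction of Lemma~\ref{lem:contract} (applicable because $Q$ is connected), every $0$-edge of $(T',\lambda')$ is incident to a leaf and every inner vertex is incident to exactly one $0$-edge; in particular every interior edge is a $1$-edge. Since $(T',\lambda')$ is a phylogenetic tree, each inner vertex has degree at least $3$. These are precisely the defining conditions of $\mathcal{T}$, so $(T',\lambda')\in\mathcal{T}$, proving that $\mathcal{T}$ contains all least resolved trees explaining $Q$.

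The core of the uniqueness part is the observation that, for any member $(T',\lambda')\in\mathcal{T}$, the single-1-relation graph $G(\Rl_{T'})$ on $X$ coincides edge-for-edge with the tree $\varphi((T',\lambda'))$. Indeed, since all interior edges of a $\mathcal{T}$-tree are $1$-edges, $\Ro_{T'}$ is discrete, so $G(\Rl_{T'})/\Ro_{T'}=G(\Rl_{T'})$ on vertex set $X$; and contracting the $0$-edges only merges each inner vertex with its unique adjacent $0$-leaf (which keeps its label), so the number of $1$-edges along $\mathbb{P}_{T'}(x,y)$ equals the distance between $x$ and $y$ in $\varphi((T',\lambda'))$. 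Hence $x\Rl_{T'}y$ if and only if $xy\in E(\varphi((T',\lambda')))$, and therefore $(T',\lambda')$ explains $Q$ (as the graph $G(\Rl)/\Ro$) if and only if $\varphi((T',\lambda'))=Q$ as trees on $X$. Because $\varphi$ is a bijection, its fibre over $Q$ is the single tree $(T,\lambda)$, so $(T,\lambda)$ is the only member of $\mathcal{T}$ that explains $Q$.

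It remains to assemble these facts. By the first claim every least resolved tree explaining $Q$ lies in $\mathcal{T}$, and by the previous paragraph the only such tree is $(T,\lambda)$; moreover $(T,\lambda)$ is itself least resolved by the ``if'' direction of Lemma~\ref{lem:contract}. Thus $(T,\lambda)$ is the unique least resolved tree explaining $Q$. Finally, the phylogenetic trees on $X$ explaining $Q$ form a finite, non-empty family, so a minimally resolved tree exists; since every minimally resolved tree is least resolved, it must equal $(T,\lambda)$, which is therefore also the unique minimally resolved tree. I expect the only delicate point to be making the edge-for-edge identification $G(\Rl_{T'})=\varphi((T',\lambda'))$ fully rigorous, i.e.\ tracking how leaves incident to $0$-edges are absorbed into inner vertices under contraction while their labels are preserved, so that ``adjacent in $Q$'' and ``single-1-related'' line up exactly; everything else is a direct appeal to Lemmas~\ref{lem:contract} and~\ref{lem:bijection}.
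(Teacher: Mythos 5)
Your proof is correct and takes essentially the same route as the paper's: the characterization in Lemma~\ref{lem:contract} places every least resolved tree explaining $Q$ inside $\mathcal{T}$, and the bijection $\varphi$ of Lemma~\ref{lem:bijection}, combined with the observation that each member of $\mathcal{T}$ explains exactly its $\varphi$-image, forces uniqueness. Your write-up is, if anything, slightly more explicit than the paper's in noting that $(T,\lambda)$ is itself least resolved (via the ``if'' direction of Lemma~\ref{lem:contract}) and that a minimally resolved tree exists and must coincide with it.
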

\begin{proof}
  We start with showing that $(T,\lambda)$ explains $Q$.  Note, since
  $Q\in\mathbb{T}$, the graph $Q$ must be connected.  By construction and
  since $Q= \varphi((T,\lambda))$, $T^* \simeq Q$ where $T^*$ is the tree
  obtained from $T$ after contracting all 0-edges. Let $v,w\in X$ and
  assume that there is exactly a single $1$ along the path from $v$ to $w$
  in $(T,\lambda)$. Hence, after contracting all edges labeled $0$ we see
  that $vw\in E(T^*)$ where $T^* \simeq Q$ and thus $v\Rl w$. Note, no path
  between any two vertices in $(T,\lambda)$ can have only 0-edges (by
  construction).  Thus, assume that there is more than a single 1-edge on
  the path between $v$ and $w$. Hence, after after contracting all edges
  labeled $0$ we see that there is still a path in the tree $T^*$ from $v$
  to $w$ with more than one 1-edge. Since $T^* \simeq Q$, we have $vw\not
  \in E(Q)$ and therefore, $v\not\Rl w$. Thus, $(T,\lambda)$ explains $Q$.

  By construction of the trees in $\mathcal{T}$ all 0-edges are incident to
  a leaf. Thus, by Lemma \ref{lem:contract}, every least resolved tree that
  explains $Q$ is contained in $\mathcal{T}$.

  It remains to show that the least resolved tree $(T,\lambda)\in
  \mathcal{T}$ with $T^*\simeq Q$ that explains $Q$ is minimally resolved.
  Assume there is another least resolved tree $(T',\lambda') \in
  \mathcal{T}$ with leaf set $V(Q)$ that explains $Q$.  By Lemma
  \ref{lem:bijection}, there is a bijection between those $Q$ and elements
  in $\mathcal{T}$ for which $T^*\simeq Q$.  Thus, $T'^*\simeq Q'\not\simeq
  Q \simeq T^*$. However, this implies that $T'\not\simeq T$. However,
  since in this case $(T',\lambda')$ explains $Q'$ and $Q'\not\simeq Q$,
  the pair $(T',\lambda')$ cannot explain $Q$; a contradiction. 
\end{proof}	
}

\rev{As an immediate consequence of these considerations we obtain}
\begin{thm}
  Let $Q$ be a connected component in $G(\Rl)/\Ro$ with vertex set $X$.
  Then the tree $(T,\lambda)$ constructed in Algorithm~\ref{alg:Q} is the unique
  \rev{minimally} resolved tree that explains $Q$. 

  \rev{Moreover, for any pair $(T',\lambda')$ that explains $Q$, the tree
    $T$ is obtained from $T'$ by contracting all \emph{interior} 0-edges
    and putting $\lambda(e) = \lambda'(e)$ for all edges that are not
    contracted. }
  \label{thm:connComp}
\end{thm}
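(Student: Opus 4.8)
The plan is to read off both assertions from the structural results already established, treating the theorem as essentially a bookkeeping corollary of Lemmas~\ref{lem:contract}, \ref{lem:bijection}, and \ref{lem:Umin}. First I would record that $Q$, being a connected component of $G(\Rl)/\Ro$, is by Corollary~\ref{cor:cycle-free} a tree, hence an element of $\mathbb{T}$ once its vertex set is identified with $X$. Next I would observe that Algorithm~\ref{alg:Q} is exactly the inverse map $\varphi^{-1}$ of Lemma~\ref{lem:bijection}, so its output $(T,\lambda)$ lies in $\mathcal{T}$ and satisfies $\varphi((T,\lambda))=Q$, i.e.\ $T^\ast\simeq Q$. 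The first assertion is then immediate from Lemma~\ref{lem:Umin}: since $Q$ is connected and $(T,\lambda)\in\mathcal{T}$ with $T^\ast\simeq Q$, that lemma identifies $(T,\lambda)$ as the unique least resolved tree explaining $Q$, and hence as the unique minimally resolved one.

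The ``moreover'' part is the only place requiring genuine argument. Let $(T',\lambda')$ be any pair explaining $Q$, and let $(T'',\lambda'')$ be obtained from it by contracting all interior $0$-edges; the labels of the surviving edges are unchanged by the definition of edge contraction, so $\lambda''(e)=\lambda'(e)$ for every non-contracted edge. I would first argue that $(T'',\lambda'')$ still explains $Q$: by the first statement of Lemma~\ref{lem:contract} each single interior-$0$-edge contraction preserves the explained relation, and iterating over the interior $0$-edges is well defined, since contraction removes an edge rather than creating one, never turns a terminal edge into an interior one, and therefore never produces a new interior $0$-edge. By construction $T''$ has no interior $0$-edge left, so every remaining $0$-edge is terminal.

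I would then show $(T'',\lambda'')$ is least resolved. The only interior edges left are $1$-edges, which by the connectedness of $Q$ and the second statement of Lemma~\ref{lem:contract} cannot be contracted without changing the relation; terminal edges cannot be contracted by definition. Hence no admissible contraction preserves the explanation, so $(T'',\lambda'')$ is least resolved. Now the ``iff'' characterization in Lemma~\ref{lem:contract} applies: a least resolved tree explaining the connected graph $Q$ has all $0$-edges incident to leaves and exactly one $0$-edge at each inner vertex, which together with the degree condition places $(T'',\lambda'')$ in $\mathcal{T}$. Finally, the uniqueness clause of Lemma~\ref{lem:Umin} forces $(T'',\lambda'')\simeq(T,\lambda)$, giving precisely the claimed description of $T$ as the interior-$0$-contraction of $T'$ with inherited labels.

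The step I expect to be the main obstacle is confirming that contracting all interior $0$-edges of an \emph{arbitrary} explaining tree $T'$ lands exactly in $\mathcal{T}$ --- in particular that each inner vertex ends up incident to precisely one $0$-edge. This does not follow from the contraction step alone; it is forced only after establishing that the contracted tree is least resolved and then invoking the structural ``iff'' of Lemma~\ref{lem:contract}. Some care is also needed to ensure the iterated contraction is order-independent and keeps the tree phylogenetic (inner degrees remain $\ge 3$, since contracting an edge between two vertices of degree $\ge 3$ yields one of degree $\ge 4$), but these points are routine.
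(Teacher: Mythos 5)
Your proposal is correct and takes essentially the same route as the paper's own proof: both derive the first assertion from Lemma~\ref{lem:Umin} (via the bijection of Lemma~\ref{lem:bijection} and Algorithm~\ref{alg:Q}), and both prove the ``moreover'' part by contracting all interior $0$-edges of an arbitrary explaining tree, using Lemma~\ref{lem:contract} to see that the result still explains $Q$ and is least resolved, and then invoking uniqueness. Your write-up is in fact somewhat more careful than the paper's terse argument --- e.g.\ checking that iterated contraction is well defined, that the contracted tree lands in $\mathcal{T}$, and that the final identification requires the uniqueness clause of Lemma~\ref{lem:Umin} rather than Lemma~\ref{lem:contract} alone --- but these are refinements of the same proof, not a different approach.
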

\begin{proof}
  \rev{The first statement follows from Lemma \ref{lem:contract} and
    \ref{lem:Umin}.  To see the second statement, observe that Lemma
    \ref{lem:contract} implies that no interior 1-edge but every 0-edge can
    be contracted.  Hence, after contracting all 0-edges, no edge can be
    contracted and thus, the resulting tree is least resolved. By Lemma
    \ref{lem:contract}, we obtain the result.}   
\end{proof}

\rev{ We emphasize that although the minimally resolved tree that explains
  $G(\Rl)/\Ro$ is unique, this statement is in general not satisfied for
  least resolved trees, see Figure \ref{fig:nonU-LRT}.  }

\begin{figure}[tbp]
\begin{center}
\includegraphics[width=1\textwidth]{./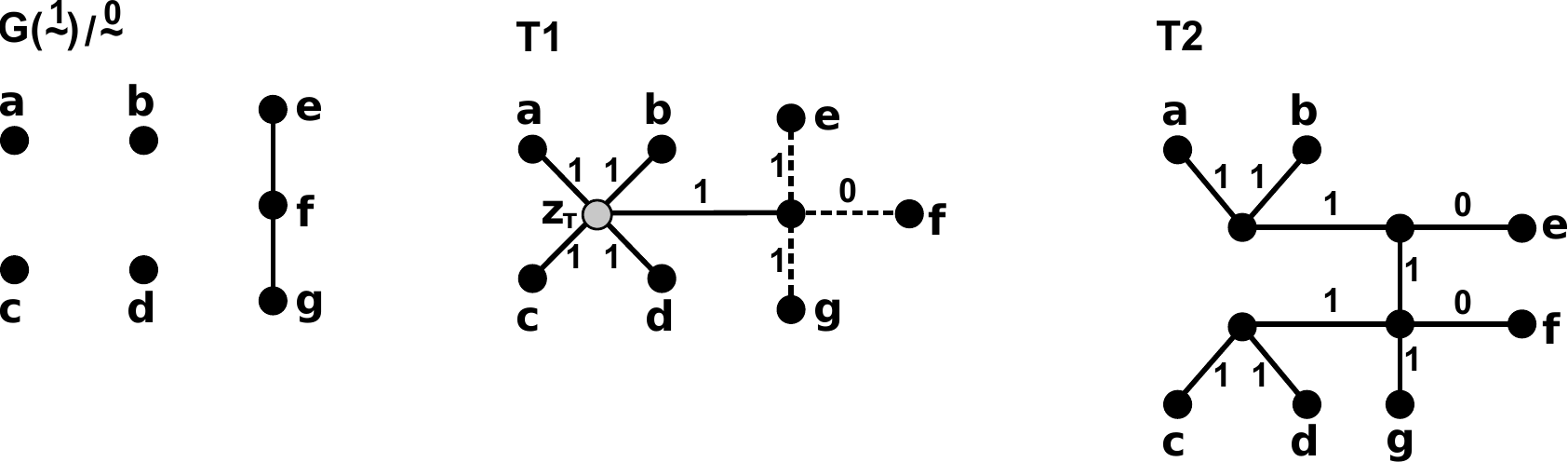}
\end{center}
\caption{\rev{Least resolved trees explaining a given relation are unique whenever
			$G(\Rl)/\Ro$ is connected (cf.\ Lemma \ref{lem:Umin}) and are, therefore, 
			also minimally resolved. 
		 Now,  consider the disconnected graph $G(\Rl)/\Ro$ shown on the
    left. Both pairs $(T_1,\lambda_1)$ (middle) and $(T_2,\lambda_2)$
    (right) are least resolved trees that explain $G(\Rl)/\Ro$.   Thus, uniqueness of least resolved
	trees that explain $G(\Rl)/\Ro$ is not always satisfied.
	 The tree  $(T_1,\lambda_1)$ is obtained with Alg.\ \ref{alg:all}, 
	  by connecting the vertex $z_T$ via 1-edges to the (inner) vertices 
	  of the respective minimally resolved trees $T(Q)$ (the dashed subtree and the single vertex trees
		$a,b,c,d$) that explain the connected components $Q$ of $G(\Rl)/\Ro$.
		In this example, 	 $(T_1,\lambda_1)$ is the unique minimally resolved tree that explains $G(\Rl)/\Ro$, 
		since each tree $T(Q)$ has a unique interior vertex and due to Thm.\ \ref{thm:star-tree}.
	} }
\label{fig:nonU-LRT}
\end{figure}

We are now in the position to demonstrate how to obtain a least resolved
tree that explains $G(\Rl)/\Ro$ also in the case that $G(\Rl)/\Ro$ itself
is not connected. To this end, denote by $Q_1,\dots Q_k$ the connected
components of $G(\Rl)/\Ro$. We can construct a \rev{phylogenetic tree
  $T(G(\Rl)/\Ro)$ with leaf set $X$} for $G(\Rl)/\Ro$ using Alg.\
\ref{alg:all}. It basically amounts to constructing a star $S_k$ with
inner vertex $z$, where its leaves are identified with the trees
$T(Q_i)$.

\begin{algorithm}[tbp]
\caption{Compute $(T(G(\Rl)/\Ro)), \lambda)$}
\label{alg:all}
\begin{algorithmic}[1]
  \REQUIRE disconnected $G(\Rl)/\Ro)$
  \ENSURE $T(G(\Rl)/\Ro))$	
  \STATE $T(G(\Rl)/\Ro)) \gets (\{\rev{z_T}\},\emptyset)$
  \FOR{For each connected component $Q_i$}
     \STATE	construct $(T(Q_i), \lambda_i)$ with Alg.\ \ref{alg:Q}
		and add to $T(G(\Rl)/\Ro))$.
     \IF{$T(Q_i)$ is the single vertex graph $(\{x\},\emptyset)$}
	\STATE add edge $\rev{z_T}x$
     \ELSIF{$T(Q_i)$ is the edge $v_iw_i$}
        \STATE remove the edge
                $v_iw_i$ from $T(Q_i)$, insert a vertex $x_i$ in $T(Q_i)$
                and the edges $x_iv_i$, $x_iw_i$. 
        \STATE  set either $\lambda_{i}(x_iv_i)=1$ and $\lambda(x_iw_i)=0$
		or $\lambda_{i}(x_iv_i)=0$ and $\lambda(x_iw_i)=1$. 
                \label{step:edge}
	\STATE  add edge $\rev{z_T}x_i$ to $T(G(\Rl)/\Ro))$.
     \ELSE \STATE \label{item:z}  	
                add edge $\rev{z_T}q'_i$ to  $T(G(\Rl)/\Ro))$
		for an arbitrary inner vertex $q'_i$ of $T(Q_i)$. 
     \ENDIF		 
   \ENDFOR
   \STATE       Set $\lambda(\rev{z_T}v)=1$ for all edges $\rev{z_T}v$ and 
		$\lambda(e)=\lambda_i(e)$ for all edges $e\in T(Q_i)$.	
\end{algorithmic}
\end{algorithm}

\rev{
\begin{lemma}
  Let $G(\Rl)/\Ro$ have connected components $Q_1,\dots Q_k$.  Let $T'$ be
  a tree that explains $G(\Rl)/\Ro$ and $T'_i$ be the subtree of $T'$ with
  leaf set $V(Q_i)$ that is minimal w.r.t.\ inclusion, $1\le i\le k$.
  Then, $V(T'_i)\cap V(T'_j) =\emptyset$, $i\neq j$ and, in particular, any
  two vertices in $T'_i$ and $T'_j$, respectively, have distance at least
  two in $T'$.
\label{lem:subtrees}
\end{lemma}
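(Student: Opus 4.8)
The plan is to reduce everything to the structure obtained by contracting all $0$-edges of $T'$. First I would put the hypotheses in usable form: since $T'$ explains $G(\Rl)/\Ro$, its leaf set is the disjoint union $V(Q_1)\cup\dots\cup V(Q_k)$ and $\Ro$ is discrete on these leaves; equivalently, every maximal subtree of $T'$ whose edges all carry the label $0$ contains at most one leaf (two distinct leaves inside such a subtree would be $\Ro$-related). Let $\psi$ denote the map sending each vertex of $T'$ to its $0$-edge-connected class, i.e.\ the vertex it becomes after contracting every $0$-edge. Then $\psi$ is injective on the leaves, and for two leaves $a,a'$ the number of $1$-edges on $\mathbb{P}(a,a')$ equals the distance between $\psi(a)$ and $\psi(a')$ in the contracted tree. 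In particular, if $a\Rl a'$ then $\mathbb{P}(a,a')$ carries exactly one $1$-edge, so every vertex on this path is $0$-connected either to $a$ or to $a'$, whence $\psi$ sends it to $\psi(a)$ or $\psi(a')$.

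The heart of the argument is the identity $\psi(V(T'_i))=\psi(V(Q_i))$. Because $Q_i$ is connected, the minimal subtree $T'_i$ equals the union of the paths $\mathbb{P}(a,a')$ taken over the edges $\{a,a'\}\in E(Q_i)$: this union is a connected subtree containing $V(Q_i)$, hence contains $T'_i$, and it is contained in $T'_i$ (when $V(Q_i)$ is a single vertex it is just that leaf and the identity is immediate). Every vertex of $T'_i$ therefore lies on some such single-$1$-edge path and, by the previous paragraph, maps under $\psi$ into $\psi(V(Q_i))$; the reverse inclusion is trivial. This is the step I expect to be the crux: in a general tree the minimal subtrees spanning two disjoint leaf sets may well share a vertex (e.g.\ at the centre of a star), so the conclusion must genuinely exploit the labelling, and the $0$-contraction is precisely the device that makes the single-$1$ condition bite by confining $T'_i$ to the $0$-classes of its own component.

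With this identity in hand both assertions follow quickly. For disjointness, a common vertex $x\in V(T'_i)\cap V(T'_j)$ would give $\psi(x)\in\psi(V(Q_i))\cap\psi(V(Q_j))$, which is empty since $V(Q_i)$ and $V(Q_j)$ are disjoint sets of leaves and $\psi$ is injective on leaves; hence $V(T'_i)\cap V(T'_j)=\emptyset$. For the distance bound I would argue by contradiction: suppose there is an edge $pq$ with $p\in V(T'_i)$ and $q\in V(T'_j)$. Being disjoint and connected, $T'_i$ lies entirely on the $p$-side and $T'_j$ on the $q$-side of $pq$, so $pq$ is the unique bridge and $\mathbb{P}(a,b)=\mathbb{P}(a,p)\cup\{pq\}\cup\mathbb{P}(q,b)$ for all $a\in V(Q_i)$, $b\in V(Q_j)$. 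By the identity we may pick $a$ with $\psi(a)=\psi(p)$ and $b$ with $\psi(b)=\psi(q)$; then $\mathbb{P}(a,p)$ and $\mathbb{P}(q,b)$ are free of $1$-edges, so $\mathbb{P}(a,b)$ carries at most the single edge $pq$ as a $1$-edge. This forces $a\Rl b$ or $a\Ro b$, contradicting that $a$ and $b$ lie in distinct components of $G(\Rl)/\Ro$ (excluding $a\Rl b$) together with discreteness of $\Ro$ (excluding $a\Ro b$). Hence no such edge exists, and any $u\in V(T'_i)$ and $w\in V(T'_j)$ are distinct and non-adjacent, i.e.\ at distance at least two in $T'$.
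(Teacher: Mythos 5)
Your proof is correct, and it reaches the same endgame as the paper's --- given a shared vertex or a connecting edge $pq$, produce leaves $a\in V(Q_i)$ and $b\in V(Q_j)$ joined to the offending vertices by all-$0$ paths, and then rule out both $a\Ro b$ (discreteness of $\Ro$) and $a\Rl b$ ($a$ and $b$ lie in different components) --- but you obtain the crucial structural fact by a different route. The paper derives ``every vertex of $T'_i$ admits an all-$0$ path to some leaf of $V(Q_i)$'' from Lemma~\ref{lem:contract}: each inner vertex of the minimally resolved tree $T(Q_i)$ is incident to exactly one $0$-edge, and $T(Q_i)$ arises from $T'_i$ by contraction; this forces separate treatment of the cases where $Q_i$ is a single vertex, a single edge, or larger, and it presupposes the contraction relationship between $T(Q_i)$ and $T'_i$ (which may contain degree-$2$ vertices), a point the paper only settles carefully later, in the proof of Theorem~\ref{thm:star-tree}. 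You instead prove the equivalent identity $\psi(V(T'_i))=\psi(V(Q_i))$ for the $0$-contraction map $\psi$ directly: connectivity of $Q_i$ gives $T'_i=\bigcup_{\{a,a'\}\in E(Q_i)}\mathbb{P}(a,a')$, and on a path carrying exactly one $1$-edge every vertex is $0$-connected to one of its two endpoint leaves. This buys uniformity --- no case analysis, no dependence on the least-resolved-tree machinery, no degree-$2$ complications --- at the modest cost of introducing the contraction map; the paper's version is shorter on the page because it reuses earlier lemmas, but it leaves more of the routine verification implicit.
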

\begin{proof}
  We start to show that two distinct subtrees $T'_i$, and $T'_j$ do not
  have a common vertex in $T'$.  If one of $Q_i$ or $Q_j$ is a single
  vertex graph, then $T'_i$ or $T'_j$ consists of a single leaf only, and
  the statement holds trivially.

  Hence, assume that both $Q_i$ and $Q_j$ have at least three vertices.
  Lemma \ref{lem:contract} implies that each inner vertex of the minimally
  resolved trees $T(Q_i)$ and $T(Q_j)$ is incident to exactly one 0-edge as
  long as $Q_i$ is not an edge.  Since $T(Q_i)$ can be obtained from $T'_i$
  by the procedure above, for each inner vertex $v$ in $T'_i$ there is a
  leaf $x$ in $T'_i$ such that the unique path from $v$ to $x$ contains
  only 0-edges.  The same arguments apply, if $Q_i$ is an edge $xy$. In
  this case, the tree $T'_i$ must have $x$ and $y$ as leaves, which implies
  that $T'_i$ has at least one inner vertex $v$ and that there is exactly
  one 1-edge along the path from $x$ to $y$. Thus, for each inner vertex in
  $T'_i$ there is a path to either $x$ or $y$ that contains only
  0-edges.

  Let $v$ and $w$ be arbitrary inner vertices of $T'_i$ and $T'_j$,
  respectively, and let $x$ and $y$ be leaves that are connected to $v$ and
  $w$, resp., by a path that contains only 0-edges. If $v=w$, then $x\Ro
  y$, contradicting the property of $\Ro$ being discrete.  Thus, $T'_i$ and
  $T'_j$ cannot have a common vertex in $T'$.  Moreover, there is no edge
  $vw$ in $T'$, since otherwise either $x\Ro y$ (if $\lambda(vw)=0$) or
  $x\Rl y$ (if $\lambda(vw)=1$).  Hence, any two distinct vertices in
  $T'_i$ and $T'_j$ have distance at least two in $T'$. 
\end{proof}
}

\rev{We note that Algorithm~\ref{alg:all} produces a tree with a single
  vertex of degree $2$, namely $z_T$, whenever $G(\Rl)/\Ro$ consists of
  exactly two components. Although this strictly speaking violates the
  definition of phylogenetic trees, we tolerate this anomaly for the
  remainder of this section.}
  
\begin{thm}
  Let $Q_1,\dots Q_k$ be the connected components in $G(\Rl)/\Ro$.  Up to
  the choice of the vertices $q'_i$ in Line \ref{item:z} of Alg.\
  \ref{alg:all}, the tree $T^* = T(G(\Rl)/\Ro))$ is a \rev{minimally
    resolved tree that explains $G(\Rl)/\Ro$. It is unique up to the choice
    of the $\rev{z_{T}}q'_i$ in Line \ref{item:z}.}
  \label{thm:star-tree}
\end{thm}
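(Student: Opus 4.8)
The plan is to establish three things in turn: that $T^*=T(G(\Rl)/\Ro)$ explains $G(\Rl)/\Ro$, that it has the minimum possible number of vertices, and that it is unique up to the choice of the attachment vertices $q'_i$ in Line~\ref{item:z}. Throughout I would reduce statements about $T^*$ to the already-settled connected case (Lemma~\ref{lem:Umin} and Theorem~\ref{thm:connComp}, which identify $T(Q_i)$ as the unique minimally resolved tree explaining the component $Q_i$) together with the separation statement of Lemma~\ref{lem:subtrees}. The single useful global observation is that, by the final labeling step of Algorithm~\ref{alg:all}, \emph{every} edge incident to $z_T$ carries label $1$.

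First I would show $T^*$ explains $G(\Rl)/\Ro$. For two leaves $x,y$ lying in the same component $Q_i$, the path $\mathbb{P}(x,y)$ stays inside the subtree $T(Q_i)$, since $z_T$ is attached to $T(Q_i)$ at a single vertex; hence $x\Rl y$ in $T^*$ if and only if $xy\in E(Q_i)$, which is exactly what $T(Q_i)$ guarantees. For $x\in Q_i$ and $y\in Q_j$ with $i\neq j$, the path $\mathbb{P}(x,y)$ passes through $z_T$ and therefore uses two distinct edges incident to $z_T$, both of which are $1$-edges; thus $\mathbb{P}(x,y)$ carries at least two $1$-labels, so $x\not\Rl y$ and $x\not\Ro y$, matching the absence of an edge between distinct components. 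The small modifications for single-vertex and single-edge components only need to be checked against this same ``two $1$-edges at $z_T$'' principle, which I expect to be routine.

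For minimality, let $T'$ be any tree explaining $G(\Rl)/\Ro$ and let $T'_i$ be its minimal subtree spanning $V(Q_i)$. By Lemma~\ref{lem:subtrees} the $T'_i$ are pairwise vertex-disjoint and at distance at least two, so $V(T')$ splits as the disjoint union of the $V(T'_i)$ together with a nonempty set $S$ of connecting vertices (nonempty because $k$ pairwise non-adjacent component-nodes cannot form a tree on their own). Restricting $T'$ to the leaves $V(Q_i)$ and suppressing degree-$2$ vertices yields a tree that explains $Q_i$; by Theorem~\ref{thm:connComp} this restriction contracts onto $T(Q_i)$, and since suppression and contraction never increase the vertex count we get $|V(T'_i)|\ge |V(T(Q_i))|$, which equals the number of vertices that $Q_i$ contributes to $T^*$. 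For an edge component one argues directly that its two endpoints cannot be adjacent in $T'$ when $k\ge2$, forcing at least three vertices. Summing over $i$ and adding the single vertex $z_T$ gives $|V(T')|\ge |V(T^*)|$.

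Finally, uniqueness follows by forcing equality everywhere in the minimality bound: $|S|=1$ gives a single connecting vertex $z$, and $|V(T'_i)|=|V(T(Q_i))|$ together with the uniqueness in Lemma~\ref{lem:Umin} gives $T'_i\cong T(Q_i)$. Since the leaves of $T'_i$ are leaves of $T'$, the vertex $z$ must attach to an \emph{inner} vertex $q'_i$ of each $T'_i$, which is precisely the freedom of Line~\ref{item:z}. To pin down the labels I would use the pendant $0$-edge leaf $u_i$ present at every inner vertex of $T(Q_i)$: for $i\neq j$ the path $\mathbb{P}(u_i,u_j)$ can only have $1$-edges among the two edges incident to $z$, so forcing $u_i\not\Rl u_j$ and $u_i\not\Ro u_j$ makes both $zq'_i$ and $zq'_j$ $1$-edges, matching $T^*$. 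I expect the main obstacle to be exactly this label-forcing argument together with the bookkeeping for the degenerate single-vertex and single-edge components, where the two admissible labelings in Line~\ref{step:edge} must be shown to yield isomorphic trees so that they do not spoil the claimed uniqueness.
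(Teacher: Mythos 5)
Your proposal is correct and takes essentially the same route as the paper's own proof: the same decomposition of an arbitrary explaining tree $T'$ into vertex-disjoint subtrees $T'_i$ via Lemma~\ref{lem:subtrees}, the same reduction of each $T'_i$ to $T(Q_i)$ via Theorem~\ref{thm:connComp} and Lemma~\ref{lem:Umin}, the same count $|V(T')|\ge 1+\sum_i |V(T(Q_i))|$, and the same forcing of $1$-labels on the edges incident to the single connecting vertex. The only detail you gloss over that the paper states explicitly is that a degree-$2$ vertex of $T'_i$ cannot be incident to two $1$-edges (otherwise $T'_i$ would explain a disconnected relation, contradicting connectedness of $Q_i$), which is exactly what is needed for your suppression step to stay within $0/1$-labeled trees; your closing worry about the two admissible labelings in Line~\ref{step:edge} is well placed, since the paper's uniqueness claim silently ignores that symmetric choice as well.
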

\begin{proof}
  \rev{Since every tree $T(Q_i)$ explains a connected component in
    $G(\Rl)/\Ro$, from the construction of $T^*$ it is easily seen that
    $T^*$ explains $G(\Rl)/\Ro$.  Now we need to prove that $T^*$ is a
    minimally resolved tree that explains $G(\Rl)/\Ro$.

    To this end, consider an arbitrary tree $T'$ that explains
    $G(\Rl)/\Ro$.  Since $T'$ explains $G(\Rl)/\Ro$, it must explain each
    of the connected components $Q_1,\dots Q_k$. Thus, each of the subtrees
    $T'_i$ of $T'$ with leaf set $V(Q_i)$ that are minimal w.r.t.\
    inclusion must explain the connected component $Q_i$, $1\le i \le k$.
    Note the $T'_i$ may have vertices of degree 2.

    We show first that $T(Q_i)$ is obtained from $T'_i$ by contracting all
    interior 0-edges and all 0-edges of degree 2.  If there are no vertices
    of degree 2, we can immediately apply Thm.\ \ref{thm:connComp}.
  	
    If there is a vertex $v$ of degree 2, then $v$ cannot be incident to
    two 1-edges, as otherwise the relation explained by $T'_i$ would not be
    connected, contradicting the assumption that $T'_i$ explains the
    connected component $Q_i$.  Thus, if there is a vertex $v$ of degree 2
    it must be incident to a 0-edge $vw$.  Contracting $vw$ preserves the
    property of $\Ro$ being discrete.  If $w$ is a leaf, we can contract
    the edge $vw$ to a new leaf vertex $w$; if $vw$ is an interior edge we
    simply contract it to some new inner vertex. In both cases, we can
    argue analogously as in the proof of Lemma \ref{lem:contract} that the
    tree obtained from $T'_i$ after contracting $vw$, still explains $Q_i$.
    This procedure can be repeated until no degree-two vertices are in the
    contracted $T'_i$.

    In particular, the resulting tree is a phylogenetic tree that explains
    $Q_i$. Now we continue to contract all remaining interior 0-edges.
    Thm.\ \ref{thm:connComp} implies that in this manner we eventually
    obtain $T(Q_i)$.
  
    By Lemma \ref{lem:subtrees}, two distinct tree $T(Q_i)$ and $T(Q_j)$ do
    not have a common vertex, and moreover, any two vertices in $T(Q_i)$
    and $T(Q_j)$, respectively, have distance at least two in $T'$.

    This implies that the construction as in Alg.\ \ref{alg:all} yields a
    least resolved tree.  In more detail, since the subtrees explaining
    $Q_i$ in any tree that explains $G(\Rl)/\Ro$ must be vertex disjoint,
    the minimally resolved trees $T(Q_1),\dots,T(Q_k)$ must be subtrees of
    any minimally resolved tree that explain $G(\Rl)/\Ro$, as long as all
    $Q_i$ are single vertex graphs or have at least one inner vertex.

    If $Q_i$ is a single edge $v_iw_i$ and thus $T(Q_i) = v_iw_i$ where
    $\lambda(v_iw_i)=1$, we modify $T(Q_i)$ in Line \ref{step:edge} to
    obtain a tree isomorphic to $S_2$ with inner vertex $x_i$. This
    modification is necessary, since otherwise (at least one of) $v_i$ or
    $w_i$ would be an inner vertex in $T^*$, and we would loose the
    information about the leaves $v_i,w_i$. In particular, we need to add
    this vertex $x_i$ because we cannot attach the leaves $v_i$ (resp.\
    $w_i$) by an edge $x_jv_i$ (resp.\ $x'_jw_i$) to some subtree subtree
    $T(Q_j)$.  To see this, note that at least one of the edges $x_jv_i$
    and $x'_jw_i$ must be a 0-edge. However, $x_j$ and $x'_j$ are already
    incident to a 0-edge $x_jv'_i$ or $x'_jw'_i$ (cf.\ Lemma
    \ref{lem:contract}), which implies that $\Ro$ would not be discrete; a
    contradiction.  By construction, we still have $v_i\Rl w_i$ in Line
    \ref{step:edge}.
	
    Finally, any two distinct vertices in $T(Q_i)$ and $T(Q_j)$ have
    distance at least two in $T^*$, as shown above. Hence, any path
    connecting two subtrees $T(Q_i)$ in $T^*$ contains and least two edges
    and hence at least one vertex that is not contained in any of the
    $T(Q_i)$. Therefore, any tree explaining $Q$ has at least $1+\sum_i
    |V(T(Q_i))|$ vertices.

    We now show that adding a single vertex \rev{$z_T$}, which we may consider as
    a trivial tree $(\{\rev{z_T}\},\emptyset)$, is sufficient. Indeed, we may
    connect the different trees to $z_T$ by insertion of an edge $\rev{z_T}q'_i$,
    where $q'_i$ is an arbitrary inner vertex of $T(Q_i)$ and label these
    edges $\lambda(\rev{z_T}q'_i)=1$.  Thus, no two leaves $u$ and $w$ of distinct
    trees are either in relation $\Ro$ or $\Rl$, as required. The resulting
    trees have the minimal possible number of vertices, i.e., they are
    minimally resolved.   }
\end{proof}


\subsection*{Binary trees}

Instead of asking for least resolved trees that explain $G(\Rl)/\Ro$, we may
also consider the other extreme and ask which binary, i.e., fully resolved
tree can explain $G(\Rl)/\Ro$. Recall that an $X$-tree is called binary or
fully resolved if the root has degree $2$ while all other inner vertices
have degree $3$.  From the construction of the least resolved trees we
immediately obtain the following:
\begin{cor}
  A least resolved tree $T(Q)$ for a connected component $Q$ of
  $G(\Rl)/\Ro$ is binary if and only if $Q$ is a path.
\end{cor}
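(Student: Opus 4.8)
The plan is to prove both directions of the biconditional using the explicit structure of the least resolved tree $T(Q)$ established by Lemma~\ref{lem:Umin} and the bijection $\varphi$ of Lemma~\ref{lem:bijection}. Recall that $T(Q)\in\mathcal{T}$ is obtained from $Q$ via Algorithm~\ref{alg:Q}: one takes the graph $Q$, treats it as the ``contracted'' tree $T^*\simeq Q$ living on the inner (white) vertices, and then hangs off each such inner vertex a single pendant leaf via a $0$-edge, while all edges inherited from $Q$ receive label $1$. The central observation I would exploit is that the underlying unlabeled topology of $T(Q)$ is completely determined by the topology of $Q$: each vertex of $Q$ becomes an inner vertex of $T(Q)$ carrying exactly one extra pendant leaf, and two inner vertices of $T(Q)$ are joined by a $1$-edge precisely when the corresponding vertices of $Q$ are adjacent.

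First I would unwind the definition of ``binary'' for $T(Q)$. By definition a tree is binary (fully resolved) when, in the rooted convention used here, every inner vertex has degree $3$ except the root which has degree $2$; since we regard $T$ as unrooted in this section, the operative condition is that every inner vertex of $T(Q)$ has degree $3$. By construction each inner vertex $u'$ of $T(Q)$ has degree equal to $\deg_Q(u)+1$, because it keeps all its $Q$-edges (now $1$-edges) and gains exactly one pendant $0$-edge. Hence $T(Q)$ is binary if and only if $\deg_Q(u)+1=3$, i.e.\ $\deg_Q(u)=2$, for every non-leaf vertex of $Q$, with the two endpoints of the path contributing $\deg_Q(u)=1$. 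This is exactly the degree sequence characterizing a path. I would therefore phrase the argument as: $T(Q)$ binary $\iff$ every internal vertex of $Q$ has degree $2$ and $Q$ has exactly two degree-$1$ vertices $\iff$ $Q$ is a path.

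For the forward direction, assume $T(Q)$ is binary, so every inner vertex has degree exactly $3$, forcing every vertex of $Q$ to have degree at most $2$; since $Q$ is connected (it is a connected component of $G(\Rl)/\Ro$) and a forest by Corollary~\ref{cor:cycle-free}, hence a tree, a connected tree with maximum degree $2$ is a path. For the reverse direction, if $Q$ is a path $P_n$, then its two endpoints have degree $1$ and all internal vertices have degree $2$, so in $T(Q)$ each endpoint becomes an inner vertex of degree $1+1=2$ and each internal vertex becomes an inner vertex of degree $2+1=3$; the degree-$2$ inner vertices at the two ends are exactly the configuration of a (caterpillar) binary tree, matching the convention that the two ``ends'' may have degree $2$.

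The main obstacle, and the point requiring the most care, is reconciling the degree bookkeeping at the two endpoints of the path with the paper's definition of binary trees, which prescribes degree $3$ for all inner vertices except one root of degree $2$. When $Q$ is a path, $T(Q)$ has \emph{two} inner vertices of degree $2$ (the images of the two endpoints), not one; so I must argue that in the unrooted setting this is precisely a fully resolved tree, and that rooting it at either of these two degree-$2$ vertices recovers the stated binary convention with the root of degree $2$. I would note that suppressing the distinction between the two endpoints is legitimate because the section explicitly treats $T$ as unrooted, so ``binary'' should be read as ``every interior edge resolves a degree-$3$ branching,'' and a caterpillar coming from a path is the unique such tree up to the choice of root. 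I would close by remarking that this is an immediate corollary of the explicit construction in Lemma~\ref{lem:Umin} and needs no separate verification that $T(Q)$ explains $Q$, since that was already established.
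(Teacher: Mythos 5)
Your overall strategy (read off the corollary from the explicit construction of $T(Q)$ by degree counting) is exactly the ``immediate from the construction'' argument the paper intends, but your execution rests on a misreading of Algorithm~\ref{alg:Q}, and that misreading produces false statements about $T(Q)$. In the algorithm, a pendant leaf attached by a $0$-edge is added \emph{only for each inner vertex} of $Q$; the leaves of $Q$ are \emph{not} duplicated --- they simply remain leaves of $T(Q)$, incident to $1$-edges (cf.\ Fig.~\ref{fig:bijection}: ``leaves incident to $1$-edges remain unchanged''). So your central claim that ``each vertex of $Q$ becomes an inner vertex of $T(Q)$ carrying exactly one extra pendant leaf'' is wrong, and consequently so is your description of $T(Q)$ for a path: if $Q=x_1x_2\cdots x_n$ is a path, then $T(Q)$ is the caterpillar whose inner vertices are the copies $x_2',\dots,x_{n-1}'$ of the \emph{interior} path vertices, each of degree exactly $\deg_Q(x_i)+1=3$ (two $1$-edges on the spine, or a $1$-edge to the leaf $x_1$ resp.\ $x_n$ at the ends, plus one pendant $0$-edge). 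There are no degree-$2$ inner vertices at all. The ``main obstacle'' you devote your last paragraph to --- reconciling two alleged degree-$2$ endpoints with the definition of binary --- is therefore a phantom problem, and the re-rooting argument you offer to resolve it is moot. Worse, it should have served as a red flag: a tree with two inner vertices of degree $2$ is not a phylogenetic tree under the paper's definition, is not a member of $\mathcal{T}$ (whose inner vertices must have degree at least $3$), and hence by Lemma~\ref{lem:Umin} could never be a least resolved tree explaining $Q$; if your description of $T(Q)$ were accurate, the corollary would in fact be false.

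With the bookkeeping corrected, your argument does go through and is essentially the paper's: the inner vertices of $T(Q)$ are in bijection with the inner vertices of $Q$, with $\deg_{T(Q)}(u')=\deg_Q(u)+1$, so $T(Q)$ is binary (all inner vertices of degree $3$, reading ``binary'' in the unrooted sense appropriate to this section) if and only if every inner vertex of $Q$ has degree $2$; since $Q$ is a tree by Corollary~\ref{cor:cycle-free}, this holds if and only if $Q$ is a path. Your forward direction survives verbatim once the degree formula is restricted to inner vertices of $Q$, but your backward direction must be rewritten: the point is not that the endpoint configuration ``matches a convention,'' but that the caterpillar produced by Algorithm~\ref{alg:Q} from a path simply has no inner vertex of degree other than $3$.
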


If a least resolved tree $T(Q)$ of $G(\Rl)/\Ro$ is a star, we have:
\begin{lemma}\label{lem:star}
  If a least resolved tree $T(Q)$ explaining $G(\Rl)/\Ro$ is a star
  with $n$ leaves, then either
  \begin{itemize}
    \item[(a)] all edges in $T(Q)$ are 1-edges and $Q$ has no edge, or 
    \item[(b)] there is exactly one 0-edge in $T(Q)$ and $Q$ is a star
        with $n-1$ leaves.
    \end{itemize}
  \end{lemma}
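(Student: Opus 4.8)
The plan is to analyze the structure of a least resolved tree $T(Q)$ under the assumption that it is a star $S_n$ with inner vertex $v$ and $n$ leaves, and to read off the two possible labelings directly from the characterization of least resolved trees given in Lemma~\ref{lem:contract}. Since $T(Q)$ is least resolved and explains the connected component $Q$ of $G(\Rl)/\Ro$, Lemma~\ref{lem:contract} tells us that the inner vertex $v$ must be incident to \emph{exactly one} $0$-edge. A star has a single inner vertex, so every edge of $T(Q)$ is incident to $v$; hence the total number of $0$-edges in $T(Q)$ is exactly one, \emph{unless} the star degenerates. The hinge of the argument is therefore the boundary case in which ``all edges are incident to leaves'' is vacuously compatible with having zero $0$-edges, which forces me to treat the case count carefully.

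First I would fix notation: let $v$ be the inner vertex of the star $T(Q)$ and let $u_1,\dots,u_n$ be its leaves, so that the edges are precisely $vu_1,\dots,vu_n$. I would invoke Lemma~\ref{lem:contract}: because $Q$ is a connected component and $T(Q)$ is least resolved, either $Q$ is itself an edge (the degenerate case handled specially) or each inner vertex is incident to exactly one $0$-edge. For the star with $n$ leaves I would split into the two scenarios of the statement. In case~(a), suppose every edge $vu_i$ is a $1$-edge. Then for any pair of leaves $u_i,u_j$ the path $\mathbb{P}(u_i,u_j) = vu_i \cup vu_j$ carries exactly two $1$-labels, so $u_i \not\Rl u_j$; thus no two leaves are $\Rl$-related and $Q$ has no edge, as claimed. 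Conversely, if $Q$ has no edge then no path between leaves may contain exactly one $1$-edge, which (given that no inner edge can create a $0$-path between distinct leaves, as $\Ro$ is discrete) forces every $vu_i$ to be a $1$-edge.

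Next, for case~(b), suppose exactly one edge, say $vu_1$, is a $0$-edge and all others are $1$-edges. Then for each $j\ge 2$ the path $\mathbb{P}(u_1,u_j)=vu_1\cup vu_j$ carries exactly one $1$-label, so $u_1\Rl u_j$; whereas for $i,j\ge 2$ distinct, $\mathbb{P}(u_i,u_j)$ carries two $1$-labels, so $u_i\not\Rl u_j$. Hence $u_1$ is adjacent in $Q$ to each of $u_2,\dots,u_n$ and these are the only edges, which is precisely a star on the $n-1$ leaves $u_2,\dots,u_n$ with center $u_1$. The converse direction runs symmetrically: if $Q$ is a star with $n-1$ leaves, the unique center vertex is $\Rl$-related to exactly the remaining $n-1$ vertices and to nothing else, which can only be realized on the star $T(Q)$ by declaring the center's edge to $v$ a $0$-edge and all others $1$-edges.

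The main obstacle I anticipate is not any single computation but ensuring the case analysis is genuinely \emph{exhaustive} and \emph{exclusive}: I must show that a least resolved star admits no other labeling. This follows from Lemma~\ref{lem:contract}, which pins the number of $0$-edges incident to the single inner vertex to exactly one whenever $Q$ is not a single edge, together with the observation that a star cannot have an interior edge at all (so the ``all $0$-edges incident to leaves'' condition is automatic and carries no extra information). The only subtlety is the degenerate reading of Lemma~\ref{lem:contract} when $Q$ reduces to a single edge; here $T(Q)$ is an $S_2$ rather than a genuine star with a distinguished inner vertex, and I would note that this falls under case~(b) with $n=2$ (one $0$-edge, one $1$-edge, and $Q$ a star with a single leaf, i.e.\ the edge itself), so no case is lost. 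With the two directions of each equivalence established and mutual exclusivity guaranteed by the exact count of $0$-edges, the dichotomy (a)/(b) is complete.
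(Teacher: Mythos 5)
Your case-by-case computations inside (a) and (b) are fine, but the hinge of your argument --- the exhaustiveness/exclusivity step that you yourself identify as the main obstacle --- rests on a misapplication of Lemma~\ref{lem:contract}. That lemma characterizes least resolved trees (each inner vertex incident to exactly one $0$-edge) only under the hypothesis that $G(\Rl)/\Ro$ is \emph{connected}. In the statement you are proving, $Q$ is not a connected component: it is $G(\Rl)/\Ro$ itself, and in case (a) it is the edgeless graph on $n$ vertices, hence disconnected. This is exactly the situation of Fig.~\ref{fig:non_lrt}, where a star $S_6$ with all edges labeled $1$ explains the empty relation, and it is what Algorithm~\ref{alg:all} produces when every component is an isolated vertex. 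Under your reading --- $Q$ a connected component --- case (a) could never occur at all, since a connected component with no edge is a single vertex and cannot be explained by a star with $n$ leaves. Consequently your exhaustiveness claim, that Lemma~\ref{lem:contract} ``pins the number of $0$-edges incident to the single inner vertex to exactly one whenever $Q$ is not a single edge,'' is false, and it is internally inconsistent with your own case (a), which has zero $0$-edges. The degenerate case you flag ($Q$ a single edge, $T(Q)\simeq S_2$) is a red herring; the genuine boundary is connected versus disconnected $Q$, and Lemma~\ref{lem:contract} simply does not cover the disconnected side.

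The repair is short and is what the paper does. First, the star can have \emph{at most one} $0$-edge regardless of connectivity: if two edges $vu$ and $vu'$ both carried label $0$, the path from $u$ to $u'$ would contain only $0$-edges, so $u \Ro u'$, which is impossible because $\Ro$ is discrete on $G(\Rl)/\Ro$. (You mention this discreteness argument only parenthetically, inside the converse of case (a); it should carry the entire dichotomy.) Then zero $0$-edges yields case (a) and exactly one $0$-edge yields case (b), by precisely the path computations you already wrote; the paper itself obtains these two implications by re-using the arguments of Theorems~\ref{thm:connComp} and~\ref{thm:star-tree} rather than recomputing them. With that substitution your proof becomes correct; your appeal to Lemma~\ref{lem:contract} remains legitimate in case (b), where $Q$ is indeed connected, but it is then no longer needed.
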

\begin{proof}
  For implication in case (a) and (b) we can re-use exactly the same
  arguments as in the proofs of Theorem \ref{thm:connComp} and
  \ref{thm:star-tree}.

  Now suppose there are at least two (incident) 0-edges in $T(Q)$, whose
  endpoints are the vertices $u$ and $v$. Then $u\Ro v$, which is
  impossible in $G(\Rl)/\Ro$.   
\end{proof}

\rev{
\begin{lemma}\label{lem:local-star}
  Let $(T,\lambda)$ be a least resolved tree that explains $G(\Rl)/\Ro$.
  Consider an arbitrary subgraph $S_k$ that is induced by an inner vertex
  $v_0$ and \emph{all} of its $k$ neighbors.  Then, $S_k$ with its
  particular labeling $\lambda_{|E(S_k)}$ is always of type (a) or (b) as
  in Lemma \ref{lem:star}.
\end{lemma}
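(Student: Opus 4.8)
The plan is to reduce the statement to two structural facts about least resolved trees and then dispatch the claim with a one-line contradiction. First I would record that $S_k$ is genuinely a star: since $T$ is acyclic, no two neighbours of $v_0$ can be adjacent, so the subgraph induced by $v_0$ and all of its $k$ neighbours consists exactly of the $k$ edges joining $v_0$ to them. Consequently the assertion reduces to showing that, among the edges incident to $v_0$, there are either no $0$-edges (which will give type (a)) or precisely one $0$-edge (type (b)).

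The first fact I would establish is that every $0$-edge of a least resolved $(T,\lambda)$ is incident to a leaf. For this I would invoke only the first assertion of Lemma \ref{lem:contract}, namely that contracting an interior $0$-edge always produces a tree that still explains $G(\Rl)/\Ro$; crucially, that part of the statement carries no connectedness hypothesis. Since $(T,\lambda)$ is least resolved, no interior edge may be contracted without destroying the explanation, so there can be no interior $0$-edge, i.e.\ every $0$-edge is terminal. The second fact is simply that $\Ro$ is discrete on $L(T)$: because $(T,\lambda)$ explains the quotient $G(\Rl)/\Ro$, distinct leaves represent distinct $\Ro$-classes, so no path joining two distinct leaves can carry only $0$-labels.

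The main step is then a short argument by contradiction. Suppose $v_0$ were incident to two distinct $0$-edges $v_0a$ and $v_0b$. By the first fact both $a$ and $b$ would be leaves, so the unique path $\mathbb{P}(a,b)$ passes through $v_0$ and consists exactly of the two $0$-edges $v_0a$ and $v_0b$; hence $a\Ro b$ with $a\neq b$, contradicting the second fact. This forces $v_0$ to be incident to at most one $0$-edge, and the two remaining possibilities---none, respectively exactly one---are precisely types (a) and (b) of Lemma \ref{lem:star}.

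The one point I expect to require care is ensuring that the ``no interior $0$-edge'' conclusion remains valid when $G(\Rl)/\Ro$ is disconnected, since the clean characterisation of least resolved trees later in Lemma \ref{lem:contract} is proved only under connectedness. This is exactly why the plan leans solely on the connectivity-free contraction statement, so that the whole argument goes through verbatim for an arbitrary (possibly disconnected) $G(\Rl)/\Ro$, including the anomalous degree-two central vertex produced by Algorithm \ref{alg:all}.
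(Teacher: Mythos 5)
Your proof is correct and follows exactly the route the paper intends: its own proof of Lemma~\ref{lem:local-star} is the one-liner ``immediate consequence of Lemma~\ref{lem:contract} and the fact that $\Ro$ is discrete,'' and your argument simply spells out those two ingredients (no interior $0$-edges in a least resolved tree, hence at most one $0$-edge at any inner vertex by discreteness). Your explicit care to invoke only the connectivity-free contraction assertion of Lemma~\ref{lem:contract} is a sensible refinement, since the lemma is later applied to possibly disconnected $G(\Rl)/\Ro$, but it does not change the substance of the approach.
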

\begin{proof}
 This is an immediate consequence of Lemma \ref{lem:contract}
 and the fact that $\Ro$ is discrete.  
\end{proof}
}

To construct the binary tree \rev{explaining} the star $Q=S_n$, we consider
the set of all binary trees with $n$ leaves and 0/1-edge labels. If $S_n$
is of type (a) in Lemma~\ref{lem:star}, then all terminal edges are labeled
$1$ and all interior edges are arbitrarily labeled $0$ or $1$.
Figure~\ref{fig:non_lrt} shows an example for $S_6$.  If $S_n$ is of type
(b), we label the terminal edges in the same way as in $T(Q)$ and all
interior edge are labeled 0. In this case, for each binary tree there is
exactly one labeling.

\begin{figure}[htbp]
\begin{center}
\includegraphics[width=0.6\textwidth]{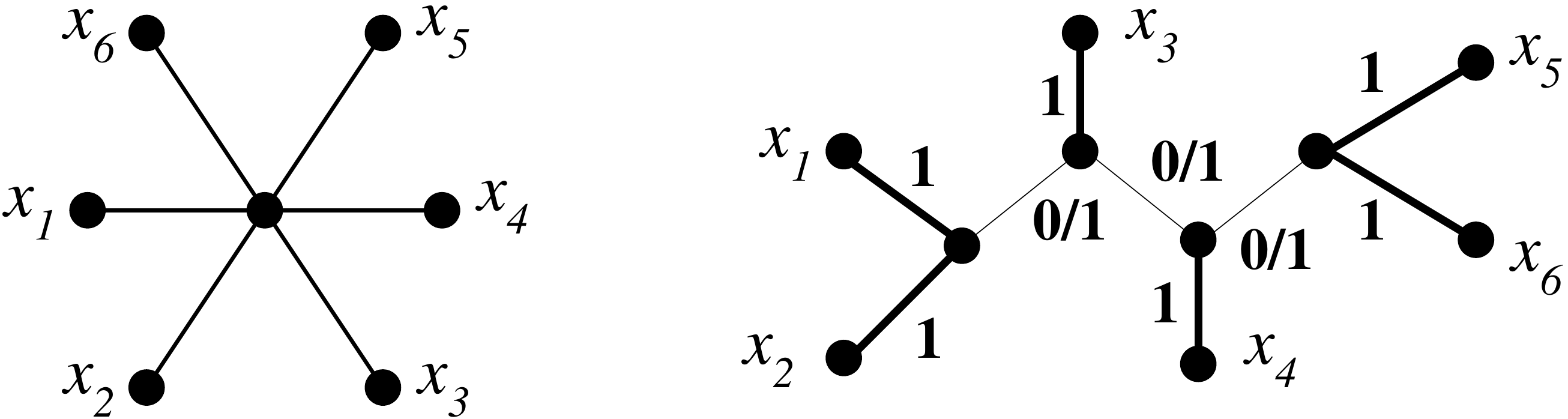}
\end{center}
\caption{For fixed underlying tree $T(Q)$, in this case a star $S_6$ with
  all $1$, there are in general multiple labelings $\lambda$ that
  \rev{explains} the same relation $Q$, here the empty relation.}
\label{fig:non_lrt}
\end{figure}

In order to obtain the complete set of binary trees that explain $G$ we can
proceed as follows. If $G$ is connected, there is a single minimally resolved
tree $T(G)$ explaining $G$. If $G$ is not connected then there are multiple
minimally resolved trees $T$. 

\rev{Let $\mathcal{T}_{\mathrm{lrt}}$ be the set of all least resolved
  trees that explain $G$. For every such least resolved tree $T\in
  \mathcal{T}_{\mathrm{lrt}}$} we iterate over all vertices $v_0$ of $T$
with degree $k>3$ and perform the following manipulations:
\begin{itemize}
\item[{1.}] Given a vertex $v_0$ of $T$ with degree $k>3$, denote the set of 
	its  neighbors $v_1, v_2,\dots, v_k$  by $N(v_0)$. 
	Delete vertex $v_0$ and its attached
  edges from $T$, and rename the neighbors $v_i$ to $v_i'$ for all $1\leq
  i\leq k$. Denote the resulting forest by $F(v_0)$.
\item[{2.}] Generate all binary trees with leaves $v_1',\dots, v_k'$. 
\item[{3.}] Each of these binary trees is inserted into a copy of the
  forest $F(v_0)$ by \rev{identifying} $v_i$ and $v_i'$ for all $1\leq
  i\leq k$.
\item[{4.}] \rev{ For each of the inserted binary trees $T'$ that results
    from a ``local'' star $S_k$ in step 3. we must place an edge label. \\
    Put $\lambda(xv_i)= \lambda(v_0v_i)$ for all edges $xv_i$ in
    $T'$ and mark  $xv_i$ as \emph{LABELED}.
 	If $S_k$ is of type (a) (cf.\, Lemma~\ref{lem:star}), then
    choose an arbitrary 0/1-label for the interior edges of $T'$.  If $S_k$
    is of type (b) we need to consider the two exclusive cases for the
    vertex $v_j$ for which $\lambda(v_jx)=0$:
    \begin{itemize}
    \item[(i)] For all  $y\in V(G)$ for which $v_j \Rl y$, label all \emph{interior}
      edges on the unique path $\mathbb{P}(v_j,y)$ that are also contained in $T'$  and are not marked 
		as  \emph{LABELED} with 0 and mark them as   \emph{LABELED}
		and
      choose an arbitrary 0/1-label for all other un\emph{LABELED} interior edges of $T'$.
    \item[(ii)] Otherwise, choose an arbitrary 0/1-label for the interior
      edges of $T'$.
    \end{itemize}
}
\end{itemize}
\rev{It is well known that each binary tree has $k-3$ interior edges
  \cite{sem-ste-03a}.  Hence, for a binary tree there are $2^{k-3}$
  possibilities to place a 0/1 label on its interior edges.  Let $t(k)$
  denote the number of binary trees with $k$ leaves and $V_a,V_b$ be a
  partition of the inner vertices into those where the neighborhood
  corresponds to a star of type (a) and (b), respectively. Note, if $T$ is
  minimally resolved, then $|V_a|\le1$.  For a given least resolved tree
  $T\in \mathcal{T}_{\mathrm{lrt}}$, the latter procedure yields the set of
  all $ (\prod_{v\in V_a} t(\mathrm{deg}(v_0))2^{\mathrm{deg}(v_0)-3} )
  \cdot (\prod_{v\in V_b} t(\mathrm{deg}(v_0) )) $ pairwise distinct binary
  trees that one can obtain from $T$. The union of these tree sets and its
  particular labeling over all $T\in \mathcal{T}_{\mathrm{lrt}}$ is then
  the set of all binary trees explaining $G$.  } To establish the
correctness of this procedure, we prove \newline
\begin{lemma}\label{lem:binary}  
  The procedure outlined above generates all binary trees $(T,\lambda)$
  \rev{explaining} $G$.
\end{lemma}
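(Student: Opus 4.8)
The plan is to prove two inclusions: every tree the procedure outputs is a binary tree explaining $G$ (soundness), and every binary tree explaining $G$ is produced by the procedure (completeness). For soundness, observe that resolving every vertex of degree $>3$ yields a binary tree by construction, so only preservation of the relation needs checking. Since the procedure starts from a least resolved tree $(T,\lambda)\in\mathcal{T}_{\mathrm{lrt}}$, Lemma \ref{lem:local-star} tells us that the local star $S_k$ at each processed vertex $v_0$ is of type (a) or (b), and I would argue locally. For leaves $x,y$ whose connecting path enters the replacement gadget through the neighbour-directions $v_a,v_b$, only the gadget portion of $\mathbb{P}(x,y)$ is altered, so it suffices to show that the replacement neither creates nor destroys a single-$1$ path. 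In type (a) every edge of $S_k$ is a $1$-edge, so the surviving terminal edges at $v_a$ and $v_b$ already contribute two $1$-edges and every cross-pair stays non-adjacent for any choice of interior labels. In type (b) the unique $0$-terminal $v_j$ is $\Rl$-related precisely to those actual leaves reachable from a $1$-neighbour by a $0$-path; step 4(i) sets exactly the gadget edges on these paths to $0$, so each such path keeps a single $1$-edge while every remaining cross-portion keeps at least two. Applying this vertex-by-vertex shows the output explains $G$.

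For completeness, let $(B,\mu)$ be any binary tree explaining $G$. I would first contract all interior $0$-edges and then all contractible interior $1$-edges of $B$; by Lemma \ref{lem:contract} each contraction preserves the explained relation, so the result is a least resolved tree $T\in\mathcal{T}_{\mathrm{lrt}}$ (for the within-component structure this is Theorem \ref{thm:connComp}, for the collapse of the connecting edges Theorem \ref{thm:star-tree} together with Lemma \ref{lem:subtrees}). By construction $B$ is recovered from $T$ by re-expanding each contracted vertex $v_0$ into the binary gadget $H_{v_0}$ it came from: the terminal edges of $H_{v_0}$ are the edges of $T$ at $v_0$ and hence carry the labels demanded by the procedure, while its interior edges are exactly the contracted ones. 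It remains to check that $\mu$ restricted to each $H_{v_0}$ is one of the labelings the procedure admits, which again splits into the two cases of Lemma \ref{lem:local-star}: for a type-(a) vertex the interior is unconstrained, matching the free choice; for a type-(b) vertex, since $B$ explains $G$, every interior edge lying on a path from $v_j$ to an actual leaf $y$ with $v_j\Rl y$ must be $0$, while all remaining interior edges are unconstrained, matching step 4(i). Choosing the free labels of the procedure to agree with $\mu$ then reproduces $B$.

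I expect completeness, and within it the verification that $\mu$ is forced into the type-(a)/(b) template, to be the main obstacle. The delicate point is that a type-(b) vertex $v_0$ may have neighbour-directions that are themselves type-(a) hubs leading to other components; no actual leaf beyond such a direction is $\Rl$-related to $v_j$, so the corresponding interior edges of $H_{v_0}$ are genuinely free rather than forced to $0$. This is exactly why step 4(i) quantifies over actual leaves $y\in V(G)$ with $v_j\Rl y$ rather than over the local-star neighbours, and getting this bookkeeping right — together with confirming that the iterated contraction lands in $\mathcal{T}_{\mathrm{lrt}}$ rather than merely at the minimally resolved tree — is where the care is needed. The argument rests on Lemma \ref{lem:local-star}, the discreteness of $\Ro$, and the fact (Lemma \ref{lem:contract}) that within a connected component no interior $1$-edge is contractible.
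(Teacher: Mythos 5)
Your proposal is correct and takes essentially the same route as the paper's own proof: completeness by contracting the binary tree down to a least resolved tree in $\mathcal{T}_{\mathrm{lrt}}$ and re-expanding each contracted region as the binary gadget at a high-degree vertex, with the labeling analysis resting on the type-(a)/(b) dichotomy of Lemma~\ref{lem:local-star} and the forced-$0$ paths of step 4(i), exactly as in the paper. The only differences are presentational — you separate soundness from completeness explicitly, whereas the paper interleaves the relation-preservation check with the recovery argument — plus one wording nuance: for the interior $1$-edge contractions, preservation of the relation is by your definition of ``contractible'' (Lemma~\ref{lem:contract} only guarantees it for $0$-edges), which is how the paper's definition of least resolved is invoked as well.
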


\rev{
\begin{proof}
  We first note that there may not be a binary tree explaining $G$. This is
  case whenever $T(G)$ has a vertex of degree $2$, which is present in
  particular if $G$ is forest with two connected components.

  Now consider an arbitrary binary tree $(T_B(G), \lambda)$ that is not
  least resolved for $G$. Then a least resolved tree $T(G)$ explaining $G$
  can be obtained from $T_B(G)$ by contracting edges and retaining the the
  labeling of all non-contracted edges. In the following we will show that
  the construction above can be be used to recover $T_B(G) = (V,E)$ from
  $T(G)$. To this end, first observe that only interior edges can be
  contracted in $T_B(G)$ to obtain $T(G)$.  Let $E' = \{e_1,\dots,e_h\}$ be
  a maximal (w.r.t.\ inclusion) subset of contracted edges of $T_B(G)$ such
  that the subgraph $(V' = \cup_{i=1}^h e_i, E')$ is connected, and thus
  forms a subtree of $T_B(G)$.  Furthermore, let $F = \{f_1,\dots,f_k\}
  \subseteq E\setminus E'$ be a maximal subset of edges of $T_B(G)$ such
  that for all $f_i\in F$ there is an edge $e_j\in E'$ such that $f_i\cap
  e_j\neq \emptyset$.  Moreover, set $W = \cup_{i=1}^k f_i$.  Thus, the
  contracted subtree $(V',E')$ locally corresponds to the vertex $v_0$ of
  degree $k>3$ and thus, to a local star $S_k$.  Now, replacing $S_k$ by
  the tree $(V'\cup W, E'\cup F)$ (as in Step 3) yields the subtree from
  which we have contracted all interior edges that are contained in $E'$.
  Since the latter procedure can be repeated for all such maximal sets
  $E'$, we can recover $T_B(G)$.

  It remains to show that one can also recover the labeling $\lambda$ of
  $T_B(G)$. Since $T(G)$ is a least resolved tree obtained from $T_B(G)$
  that explains $G$, we have, by definition, $u \Rl w$ in $T(G)$ if and
  only if $u \Rl w$ in $T_B(G)$.  By Lemma \ref{lem:local-star}, every
  local star $S_k$ in $T(G)$ is either of type (a) or (b).  Assume it is of
  type (a), i.e., $\lambda(v_0v_i)=1$ for all $1\leq i\leq k$.  Let $u$ and
  $w$ be leaves of $G$ for which the unique path connecting them contains
  the edge $v_0v_i$ and $v_0v_j$.  Thus, there are at least two edges
  labeled $1$ along the path; hence $u\not\Rl w$. The edges $xv_i$ and
  $x'v_j$ in $T'$ are both labeled by 1; therefore $u$ and $w$ are not in
  relation $\Rl$ after replacing $S_k$ by $T'$.  Therefore all possible
  labelings can be used in Step 4 except for the edges that are not
  contained in $T'$ and $xv_i$ and $x'v_j$ which are marked as
  \emph{LABELED}). Therefore, we also obtain the given labeling of the
  subtree $T = (V'\cup W, E'\cup F)$ as a result.

  If the local star $S_k$ is of type (b), then there is exactly one edge
  $v_0v_j$ with $\lambda(v_0v_j) = 0$. By Lemma \ref{lem:contract} and
  because $(T(G),\lambda)$ is least resolved, the vertex $v_j$ must be a
  leaf of $G$.  For two leaves $u,w$ of $T(G)$ there are two cases: either
  the unique path $\mathbb{P}(u,w)$ in $T(G)$ contains $v_0$ or not.
	
  If $\mathbb{P}(u,w)$ does not contain $v_0$, then this path and its
  labeling remains unchanged after replacing $S_k$ by $T'$.  Hence, the
  relations $\Rl$ and $\not \Rl$ are preserved for all such vertices $u,w$.

  First, assume that $\mathbb{P}(u,w)$ contains $v_0$ and thus two edges
  incident to vertices in $N(v_0)$.  If the path $\mathbb{P}(u,w)$ contains
  two edges $v_0v_i$ and $v_0v_l$ with $i,l\neq j$, then $\lambda(v_0v_i) =
  \lambda(v_0v_l)=1$.  Thus, $u\not\Rl w$ in $T(G)$.  The extended path
  (after replacing $S_k$ by $T'$) still contains the two 1-edges $xv_i$ and
  $x'v_l$, independently from the labeling of all other edges in $T'$ that
  have remained un\emph{LABELED} edges up to this point. Thus, $u\not\Rl w$
  is preserved after replacing $S_k$ by $T'$.

  Next, assume that $\mathbb{P}(u,w)$ contains $v_0$ and the 0-edge
  $v_0v_j$ in $T(G)$. In the latter case, $u=v_j$ of $w=v_j$.  Note, there
  must be another edge $v_0v_l$ in $\mathbb{P}(u,w)$ with $v_l \in N(v_0)$,
  $l\neq j$ and therefore, with $\lambda(v_0v_l)=1$.  There are two cases,
  either $u\Rl w$ or $u\not\Rl w$ in $T(G)$.
	
  If $u\Rl w$ then there is exactly one 1-edge (the edge $v_0v_l$)
  contained in $\mathbb{P}(u,w)$ in $T(G)$. By construction, all interior
  edges on the path $\mathbb{P}(u,w)$ that are contained in $T'$ are
  labeled with $0$ and all other edge-labelings remain unchanged in $T(G)$
  after replacing $S_k$ by $T'$. Thus, $u\Rl w$ in $T(G)$ after replacing
  $S_k$ by $T'$.  Analogously, if $u\not\Rl w$, then there are at least two
  edges 1-edges $\mathbb{P}(u,w)$ in $T(G)$.  Since $\lambda(v_0v_j)=0$ and
  $\lambda(v_0v_l)=1$, the 1-edge different from $v_0v_l$ is not contained
  in $T'$ and its label 1 remains unchanged.  Moreover, the edge $xv_l$ in
  $T'$ gets also the label 1 in Step 4.  Thus, $\mathbb{P}(u,w)$ still
  contains at least two 1-edges in $T(G)$ after replacing $S_k$ by $T'$
  independently of the labeling chosen for the other un\emph{LABELED}
  interior edges of $T'$.  Thus, $u\not\Rl w$ in $T(G)$ after replacing
  $S_k$ by $T'$.
	
  We allow all possible labelings and fix parts where necessary. In
  particular, we obtain the labeling of the subtree $(V'\cup W, E'\cup F)$
  that coincides with the labeling of $T_B(G)$. Thus, we can repeat this
  procedure for all stars $S_k$ in $T(G)$ and their initial labelings.
  Therefore, we can recover both $T_B(G)$ and its edge-labeling $\lambda$.
  Clearly, every binary tree $T_B(G)$ that explains $G$ is either already
  least resolved or there is a least tree $T(G)\in
  \mathcal{T}_{\mathrm{lrt}}$ from which $T_B(G)$ can be recovered by the
  construction as outlined above.  
\end{proof}
}

\rev{As a consequence of} the proof of Lemma~\ref{lem:binary} we
immediately obtain the following Corollary that characterizes the condition
that $Q$ cannot be explained by a binary tree.
\begin{cor}
  $G(\Rl)/\Ro$ cannot be explained by a binary tree if and only if
  $G(\Rl)/\Ro$ \rev{is a forest with} exactly two connected components.
\end{cor}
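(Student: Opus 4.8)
The plan is to prove both directions of the biconditional by connecting the existence of a binary explaining tree to the degree-2 obstruction that Algorithm~\ref{alg:all} produces. The key structural fact, already visible in the proof of Lemma~\ref{lem:binary}, is that a binary tree explaining $G$ exists precisely when every minimally resolved tree $T(G)$ can be refined to a binary tree, and the only thing that can block such a refinement is the presence of a vertex of degree $2$ in $T(G)$. Recall that a binary tree requires its root to have degree $2$ and every other inner vertex to have degree exactly $3$; in particular, suppressing a degree-$2$ inner vertex is the operation that destroys binarity, yet such a vertex cannot simply be suppressed without losing the taxon information or altering the relation.

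First I would prove the ``if'' direction (the easy implication). Suppose $G = G(\Rl)/\Ro$ is a forest with exactly two connected components $Q_1, Q_2$. By Theorem~\ref{thm:star-tree} and the remark immediately preceding it, Algorithm~\ref{alg:all} produces a tree $T^*$ in which the central vertex $z_T$ has degree exactly $2$, being joined by a single $1$-edge to one (inner) vertex of each $T(Q_i)$. I would argue that this degree-$2$ vertex is forced: by Lemma~\ref{lem:subtrees} the two subtrees $T'_1, T'_2$ explaining $Q_1, Q_2$ in \emph{any} explaining tree are vertex-disjoint and separated by distance at least two, so any explaining tree must contain at least one vertex lying strictly between them. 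With exactly two components, this separating vertex has precisely two sides, hence degree exactly $2$, and it cannot be a leaf (it carries no taxon) nor can it be suppressed without merging the two subtrees and creating a $\Ro$- or $\Rl$-relation across components, contradicting discreteness of $\Ro$ or the structure of $G$. Therefore no explaining tree can avoid a degree-$2$ inner vertex, so no binary explaining tree exists.

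For the ``only if'' direction I would prove the contrapositive: if $G$ is \emph{not} a forest with exactly two components, then a binary explaining tree exists. Since $G = G(\Rl)/\Ro$ is always a forest by Corollary~\ref{cor:cycle-free}, the hypothesis reduces to $G$ having either one component or at least three components. If $G$ is connected, Theorem~\ref{thm:connComp} gives a unique minimally resolved tree $T(G)$, and the construction in the proof of Lemma~\ref{lem:binary} (refining each local star $S_k$ of type (a) or (b) into a binary subtree with an appropriate $0/1$-labeling) produces a fully resolved binary tree explaining $G$; here no degree-$2$ vertex arises because a single component needs no central connecting vertex. If $G$ has $k \ge 3$ components, then the central vertex $z_T$ in Algorithm~\ref{alg:all} has degree $k \ge 3$, and I would apply the star-refinement of Lemma~\ref{lem:binary} at $z_T$ (and at every remaining high-degree inner vertex) to resolve it into binary form. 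Because $z_T$ and all edges incident to it are $1$-edges, the local neighborhood at $z_T$ is a star of type (a), which by the argument in Lemma~\ref{lem:binary} admits an arbitrary $0/1$-labeling of its interior edges while preserving the ``at least two $1$-edges on any crossing path'' property; this yields a genuine binary tree.

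The main obstacle I anticipate is making rigorous the claim that the degree-$2$ vertex in the two-component case is genuinely \emph{unavoidable}, rather than merely an artifact of Algorithm~\ref{alg:all}. The crux is to show that \emph{every} tree explaining a two-component $G$ must contain an inner vertex of degree exactly $2$ separating the two subtrees. Lemma~\ref{lem:subtrees} supplies vertex-disjointness and distance-$\ge 2$ separation, but I must additionally rule out that the separating path passes through a vertex of degree $\ge 3$ whose third branch reattaches to one of the $T(Q_i)$ in a way that would still permit binarity; I would close this by observing that any such extra branch would introduce a leaf $y$ whose path to the separating vertex is all-$0$ (by the type-(b) structure forced by Lemma~\ref{lem:contract} at each inner vertex), thereby creating either a forbidden $\Ro$-relation between taxa in different components or a spurious $\Rl$-edge across components --- both contradicting that $G$ has exactly these two components. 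Handling the degenerate sub-cases where one or both $Q_i$ is a single vertex or a single edge (so that $T(Q_i)$ has no inner vertex in the usual sense) will require the same care as in Theorem~\ref{thm:star-tree}, and I expect this case analysis to be the most delicate part of the argument.
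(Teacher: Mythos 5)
Your overall architecture coincides with the paper's: the corollary is stated there as an immediate consequence of the proof of Lemma~\ref{lem:binary}, whose opening observation is exactly your degree-$2$ obstruction, and your two directions (a forced degree-$2$ vertex when there are two components; binary trees via the star-refinement of Lemma~\ref{lem:binary} otherwise) are precisely what that proof leaves implicit. Your ``only if'' direction is sound, modulo one definitional point you should make explicit: you recall the \emph{rooted} notion of binary tree (root of degree $2$ allowed), under which the statement would actually be false, since the single degree-$2$ vertex $z_T$ could be declared the root; in this section trees are unrooted, so ``binary'' must mean every inner vertex has degree exactly $3$ --- this is what the paper's remark after the corollary about the ``conceptually too strict definition'' is addressing.

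The genuine problem is in the step you yourself flag as the crux: ruling out a separating vertex $v$ of degree at least $3$ in the two-component case. You close it by invoking ``the type-(b) structure forced by Lemma~\ref{lem:contract} at each inner vertex'' to get an all-$0$ path from a leaf of the third branch to $v$. But Lemma~\ref{lem:contract}'s characterization (every inner vertex incident to exactly one $0$-edge) is proved only for \emph{least resolved} trees explaining a \emph{connected} $G(\Rl)/\Ro$; it holds neither for arbitrary explaining trees --- and a binary explaining tree is in general not least resolved --- nor for disconnected relations, where $z_T$ in Algorithm~\ref{alg:all} is an inner vertex incident to no $0$-edge at all. So the all-$0$ path, and with it your $\Ro$/$\Rl$ contradiction, is not justified. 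The correct closing is simpler and needs no edge labels: since $T'$ explains $G$, its leaf set is $V(Q_1)\cup V(Q_2)$, and every leaf of $T'$ lies in $V(T'_1)\cup V(T'_2)$ because $T'_i$ spans $V(Q_i)$. The subtrees $T'_1$ and $T'_2$ lie in the two components of $T'-v$ that contain the path-neighbors of $v$, so a third branch at $v$ meets neither of them; yet, being a maximal subtree hanging off $v$, it must contain some leaf $y$ of $T'$. Then $y\in V(Q_i)$ for some $i$, and the connected subtree $T'_i$ would have to contain the path from $y$ to the rest of $V(Q_i)$, which passes through $v$, contradicting $v\notin V(T'_i)$ from Lemma~\ref{lem:subtrees}. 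Hence $\deg(v)=2$ in every explaining tree, and no binary tree can explain a two-component $G(\Rl)/\Ro$.
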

The fact that exactly two connected components appear as a special case is
the consequence of a conceptually too strict definition of ``binary
tree''. If we allow a single ``root vertex'' of degree $2$ in this special
case, we no longer have to exclude two-component graphs.

\section{The antisymmetric single-1 relation}
\label{sect:1dir}

The antisymmetric version $x\Rld y$ of the 1-relation shares many basic
properties with its symmetric cousin. We therefore will not show all formal
developments in full detail. Instead, we will where possible appeal to the
parallels between $x\Rld y$ and $x\Rl y$. For convenience we recall the
definition: $x \Rld y$ if and only if all edges along $\mathbb{P}(u,x)$ are
labeled $0$ and exactly one edge along $\mathbb{P}(u,y)$ is labeled $1$,
where $u=\lca{x,y}$. As an immediate consequence we may associate with
$\Rld$ a symmetrized 1-relation $x\Rl y$ whenever $x\Rld y$ or $y\Rld
x$. Thus we can infer (part of) the underlying unrooted tree topology by
considering the symmetrized version $\Rl$. On the other hand, $\Rld$ cannot
convey more information on the unrooted tree from which $\Rld$ and its
symmetrization $\Rl$ are derived. It remains, however, to infer the
position of the root from directional information. Instead of the
quadruples used for the unrooted trees in the previous section, structural
constraints on rooted trees are naturally expressed in terms of triples.

In the previous section we have considered $\Rl$ in relation to unrooted
trees only. Before we start to explore $\Rld$ we first ask whether $\Rl$
contains any information about the position of the root and if it already
places any constraints on $\Rld$ beyond those derived for $\Rl$ in the
previous section. In general the answer to this question will be negative,
as suggested by the example of the tree $T_5^*$ in Figure
\ref{fig:root}. Any of its inner vertex can be chosen as the root, and
each choice of a root vertex yields a different relation $\Rld$.

Nevertheless, at least partial information on $\Rld$ can be inferred
uniquely from $\Rl$ and $\Ro$. Since all connected components in
$G(\Rl)/\Ro$ are trees, we observe that the underlying graphs
$\underline{G(\Rld)/\Ro}$ of all connected components in $G(\Rld)/\Ro$ must
be trees as well.  Moreover, since $\Ro$ is discrete in $G(\Rl)/\Ro$, it is
also discrete in $G(\Rld)/\Ro$.

Let $Q$ be a connected component in ${G(\Rld)/\Ro}$.  \rev{If $Q$ is an
  isolated vertex or a single edge, there is only a single phylogenetic
  rooted tree (a single vertex and a tree with two leaves and one inner
  root vertex, resp.)  that explains $Q$ and the position of its root is
  uniquely determined.}

Thus we assume that $Q$ contains at least three
vertices from here on. By construction, any three vertices $x,y,z$ in a
connected component $Q$ in $G(\Rld)/\Ro$ either induce a disconnected
graph, or a tree on three vertices.  Let $x,y,z\in V(Q)$ induce such a
tree. Then there are three possibilities (up to relabeling of the vertices)
for the induced subgraph contained in $G(\Rld)/\Ro = (V,E)$:
\begin{itemize}
\item[(i)] $xy, yz \in E$ implying that $x\Rld y \Rld z$, 
\item[(ii)] $yx, yz \in E$ implying that $y\Rld x$ and $y \Rld z$,
\item[(iii)] $xy, zy \in E$ implying that $x\Rld y$ and $z \Rld y$. 
\end{itemize}
Below, we will show that Cases (i) and (ii) both imply a unique tree on the
three leaves $x,y,z$ together with a unique 0/1-edge labeling for the
unique resolved tree $T(Q)$ that displays $Q$, see Fig.\
\ref{fig:2cases}. Moreover, we shall see that Case (iii) cannot occur.

\begin{figure}[t]
\begin{center}
\includegraphics[width=\textwidth]{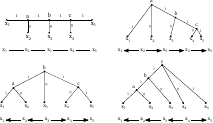}
\end{center}
\caption{Placing the root. The tree $T$ in the upper left is the unique
  minimally resolved tree that explains $G(\Rl)/\Ro$ (shown below $T$).
  Each of the tree inner vertices $a$, $b$, or $c$ of $T$ can be chosen as
  the root, giving rise to three distinct relations $\Rld$.  For the
  ``siblings'' in the unrooted tree $x_1,x_2$ as well as $x_4,x_5$ it holds
  that $x_2\Rld x_1$ and $x_4\Rld x_5$ for all three distinct relations.
  Thus, there are uniquely determined parts of $\Rld$ conveyed by the
  information of $\Ro$ and $\Rl$ only.}
\label{fig:root}
\end{figure} 
\begin{figure}[t]
\begin{center}
\includegraphics[width=0.6\textwidth]{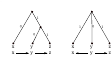}
\end{center}
\caption{There are only two possibilities for induced connected subgraphs
  $H$ in $G(\Rld)/\Ro$ on three vertices, cf.\ Lemma \ref{lem:case-i},
  \ref{lem:case-ii}, and \ref{lem:case-iii}.  Each of the distinct induced
  subgraphs imply a unique least resolved tree with a unique labeling.}
\label{fig:2cases}
\end{figure} 

\begin{lemma}
  In Case (i), the unique triple $\rt{yz|x}$ must be displayed by
  any tree $T(Q)$ that explains $Q$.  Moreover, the paths $\P(u,v)$ and
  $\P(v,z)$ in $T(Q)$ contain both exactly one 1-edge, while the other
  paths $\P(u,x)$ and $\P(v,y)$ contain only 0-edges, where
  $u=\lca{xy}=\lca{xz} \neq \lca{yz}=v$.
  \label{lem:case-i}
\end{lemma}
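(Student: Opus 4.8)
The plan is to avoid any rooted case analysis and instead pin down the structure purely from the two arcs that define Case~(i), namely $x\Rld y$ and $y\Rld z$, which are present in \emph{every} tree $T(Q)$ that explains $Q$. Set $u:=\lca{x,y}$ and $v:=\lca{y,z}$. The key observation is that both $u$ and $v$ are ancestors of $y$, so they lie on the path from $y$ to the root and are therefore $\preceq$-comparable. The whole triple claim will come from deciding their order.

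First I would rule out $u\prec v$. If $u$ were a proper descendant of $v$, then $\P(u,y)\subseteq\P(v,y)$; but $x\Rld y$ forces exactly one $1$-edge onto $\P(u,y)$, while $y\Rld z$ forces $\P(v,y)$ to carry only $0$-labels, so that single $1$-edge would also lie in $\P(v,y)$, a contradiction. Next I would rule out $u=v$: then $y\Rld z$ makes $\P(u,y)=\P(v,y)$ all-$0$, while $x\Rld y$ puts exactly one $1$-edge on $\P(u,y)$, again impossible. Hence $v\prec u$ strictly. To conclude $\rt{yz|x}$, note that since $v=\lca{y,z}$ is a proper descendant of $u$, both $y$ and $z$ lie in the subtree of $u$ rooted at $v$; and $x$ cannot lie below $v$, for otherwise $\lca{x,y}\preceq v\prec u$ would contradict $\lca{x,y}=u$. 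Thus $x$ sits in a different subtree of $u$, whence $\lca{x,z}=u$ as well, and $T(Q)$ displays $\rt{yz|x}$ with $u=\lca{x,y}=\lca{x,z}\neq v=\lca{y,z}$. Because this uses only the two arcs, the triple is forced in every explaining tree.

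Finally I would read off the labeling by decomposing $\P(u,y)=\P(u,v)\cup\P(v,y)$. From $x\Rld y$ the path $\P(u,x)$ is all-$0$ and $\P(u,y)$ carries exactly one $1$-edge; from $y\Rld z$ the path $\P(v,y)$ is all-$0$ and $\P(v,z)$ carries exactly one $1$-edge. Since $\P(v,y)$ is $0$-labeled, the unique $1$-edge of $\P(u,y)$ must sit on $\P(u,v)$, so $\P(u,v)$ contains exactly one $1$-edge. This gives precisely the asserted pattern: one $1$-edge each on $\P(u,v)$ and $\P(v,z)$, and only $0$-edges on $\P(u,x)$ and $\P(v,y)$.

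I expect the main obstacle to be the ancestor bookkeeping in the middle paragraph: arguing cleanly that $u$ and $v$ are comparable, turning each forbidden order into a $1$-edge containment contradiction, and in particular deducing $\lca{x,z}=u$ from $v\prec u$. Once the ordering $v\prec u$ is fixed, both the triple and the labeling are immediate consequences of the definitions of $\Rld$, so the labeling step should be routine.
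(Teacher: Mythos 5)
Your proof is correct and is essentially the paper's own argument, just reorganized: the paper's case analysis over the possible topologies on $\{x,y,z\}$ (the star, $\rt{xz|y}$, and $\rt{xy|z}$) corresponds exactly to your trichotomy $u\prec v$, $u=v$, $v\prec u$ for the two comparable ancestors $u,v$ of $y$, the contradictions are identical (the single 1-edge forced on $\P(\lca{x,y},y)$ by $x\Rld y$ against the all-0 requirement on $\P(\lca{y,z},y)$ from $y\Rld z$), and the labeling step via $\P(u,y)=\P(u,v)\cup\P(v,y)$ is the same. One small imprecision: ``$x$ does not lie below $v$'' does not by itself place $x$ in a different child subtree of $u$ than the one containing $v$; to conclude $\lca{x,z}=u$ you should apply your own contradiction to the child $c$ of $u$ with $v\preceq c$ (if $x\preceq c$ then $\lca{x,y}\preceq c\prec u$, a contradiction), which is a one-line repair using exactly the tool you already employ.
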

\begin{proof}
  Let $x,y,z\in V(Q)$ such that $xy, yz \in E$ and thus, $x\Rld y \Rld z$.
  Notice first that there must be two distinct last common ancestors for
  pairs of the three vertices $x,y,z$; otherwise, if
  $u=\lca{xy}=\lca{xz} = \lca{yz}$, then the path $\P(uy)$ contains a
  1-edge (since $x\Rld y$) and hence $y\Rld z$ is impossible.  We
  continue to show that $u=\lca{xy}=\lca{xz}$.  Assume that $u=\lca{xy}\neq
  \lca{xz}$.  Hence, either the triple $\rt{xz|y}$ or $\rt{xy|z}$ is
  displayed by $T(Q)$. In either case the path $\P(u,y)$ contains a 1-edge,
  since $x\Rld y$. This, however, implies $y\not\Rld z$, a contradiction.
  Thus, $u=\lca{xy} = \lca{xz}$. Since there are two distinct last common
  ancestors, we have $u\neq v=\lca{yz}$.  Therefore, the triple $\rt{yz|x}$
  must be displayed by $T(Q)$. From $y\Rld z$ we know that  $\P(v,y)$ only
  contains 0-edges and $\P(v,z)$ contains exactly one 1-edge;  $x\Rld y$
  implies that $\P(x,u)$ contains only 0-edges.  Moreover, since $\P(x,y) =
  \P(x,u)\cup \P(u,v)\cup \P(v,y)$ and $x\Rld y$, the path $\P(u,v)$ must
  contain exactly one 1-edge.    
\end{proof}

\begin{lemma}
  In Case (ii), there is a unique tree on the three vertices $x,y,z$
  with single root $\rho$ displayed by any least resolved tree $T(Q)$ that
  explains $Q$.  Moreover, the path $\P(\rho,y)$ contains only 0-edges,
  while the other paths $\P(\rho,x)$ and $\P(\rho,z)$ must both contain
  exactly one 1-edge.
  \label{lem:case-ii}
\end{lemma}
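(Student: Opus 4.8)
The plan is to reduce first to the undirected single-$1$ relation, determine the shape of the spanning subtree by a counting argument, and only then use the least-resolved hypothesis to force the position of the root.

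First I would pass to the symmetrization. From $y\Rld x$ and $y\Rld z$ we get $y\Rl x$ and $y\Rl z$, and Lemma~\ref{lem:not-trans} gives $x\not\Rl z$; since $\Ro$ is discrete on $Q$ this means $\P(x,z)$ carries at least two $1$-edges. Let $m$ be the median of $x,y,z$ in $T(Q)$, i.e.\ the unique vertex lying on all three paths $\P(x,y)$, $\P(y,z)$, $\P(x,z)$, and let $a,b,c$ count the $1$-edges on $\P(m,x)$, $\P(m,y)$, $\P(m,z)$, respectively. Then $a+b=1$, $c+b=1$ and $a+c\ge 2$, whose only solution is $b=0$ and $a=c=1$. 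Thus $\P(m,y)$ carries only $0$-edges while $\P(m,x)$ and $\P(m,z)$ each carry exactly one $1$-edge.

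Next I would exploit that $T(Q)$ is least resolved. By Lemma~\ref{lem:contract} every interior edge of $T(Q)$ is a $1$-edge and every $0$-edge is incident with a leaf. Hence the all-$0$ path $\P(m,y)$ can contain no interior edge, so it is the single $0$-edge $my$ with $y$ adjacent to $m$; and along the $x$-arm (resp.\ $z$-arm) the unique $1$-edge must be the edge leaving $m$, since a $0$-edge strictly between $m$ and that $1$-edge would be interior. The decisive step is to locate the root, and I claim $\lca{x,y,z}=m$. Indeed, suppose $\lca{x,y,z}$ lay on the $x$-arm; then $\lca{x,y}=\lca{x,y,z}$ sits beyond the $1$-edge leaving $m$, so $\P(\lca{x,y},y)$ runs through that $1$-edge and is not all-$0$, contradicting $y\Rld x$. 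The same argument excludes the $z$-arm, and on the $y$-arm the only available inner vertex is $m$ itself. Therefore the root projects onto $m$.

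It follows that the tree displayed by $(T(Q),\lambda)$ on $\{x,y,z\}$ is the star with root $\rho=m=\lca{x,y}=\lca{y,z}=\lca{x,z}$, whose three (suppressed) arms carry the labels found above: $\P(\rho,y)$ all-$0$ and $\P(\rho,x)$, $\P(\rho,z)$ each with exactly one $1$. A direct check confirms that this realizes Case~(ii): $\lca{x,y}=\rho$ with $\P(\rho,y)$ all-$0$ and $\P(\rho,x)$ one $1$ yields $y\Rld x$, symmetrically $y\Rld z$, while $\P(\rho,x)$ and $\P(\rho,z)$ both being nonzero forbids $x\Rld z$ and $z\Rld x$. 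Since topology (a star), root ($m$) and labeling are all forced, the rooted labeled tree on $\{x,y,z\}$ is unique. I expect the root-placement step to be the main obstacle: a cherry such as $\rt{yz|x}$ with the root on an arm also reproduces $y\Rld x$ and $y\Rld z$, so uniqueness genuinely depends on the least-resolved hypothesis, which forces the offending segment between $m$ and the root to be an interior $0$-edge and hence contractible. The argument must invoke Lemma~\ref{lem:contract} precisely to rule out such interior $0$-segments, thereby collapsing every admissible configuration onto the star rooted at the median.
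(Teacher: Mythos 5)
Your proof is correct, but it takes a genuinely different route from the paper's. The paper argues by contradiction over the three possible resolved rooted triples on $\{x,y,z\}$: assuming a least resolved $T(Q)$ displays $\rt{xy|z}$, $\rt{yz|x}$, or $\rt{xz|y}$, it shows in each case, directly from the definition of $\Rld$, that the path between the two distinct least common ancestors is all-0 and hence contractible, contradicting least-resolvedness; the rooted star with the stated labels is the only surviving configuration. You instead symmetrize to $\Rl$, apply Lemma~\ref{lem:not-trans} and a counting argument at the median $m$ (the system $a+b=1$, $b+c=1$, $a+c\ge 2$) to force the per-arm label counts, then invoke the structural characterization from Lemma~\ref{lem:contract} (no interior 0-edges in a least resolved tree) to pin the unique 1-edge of each of the $x$- and $z$-arms against $m$ and to collapse the $y$-arm to a single terminal 0-edge, and finally rule out every root projection other than $m$. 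Your route is more modular---it reuses the symmetric theory and makes the exact role of the least-resolved hypothesis transparent---whereas the paper's case analysis is self-contained and also exhibits, in passing, the non-least-resolved trees (displaying resolved triples) that do explain the relation, which you only remark on at the end. One point you should make explicit: Lemma~\ref{lem:contract} is stated for unrooted trees explaining a connected $G(\Rl)/\Ro$, while $T(Q)$ is least resolved with respect to the rooted, directed relation $Q$; the transfer needs the (easy) observation that contracting an interior 0-edge also preserves $\Rld$ and $\Ro$, so that a rooted tree that is least resolved for $Q$ cannot contain interior 0-edges and therefore inherits the structure asserted in Lemma~\ref{lem:contract}. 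The paper's own proof leans on the same unstated fact when it declares the triple-displaying trees ``not least resolved,'' so this is a shared, minor omission rather than a defect specific to your argument.
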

\begin{proof}
  Assume for contradiction that there is a least resolved tree $T(Q)$
  that displays $\rt{xy|z}$, $\rt{yz|x}$, or $\rt{xz|y}$. 

  The choice of $\rt{xy|z}$ implies $u=\lca{xy}\neq \lca{xz}=\lca{yz}=v$.
  Since $y\Rld x$ and $y\Rld z$, $\P(v,y) \subsetneq \P(u,y)$ contain only
  0-edges, while $\P(u,x)$ and $\P(v,z)$ each must contain exactly one
  1-edge, respectively.  This leads to a tree $T'$ that yields the correct
  $\Rld$-relation.  However, this tree is not least resolved.  By
  contracting the path $\P(u,v)$ to a single vertex $\rho$ and maintaining
  the labels on $\P(\rho,x)$, $\P(\rho,y)$, and $\P(\rho,z)$ we obtain the
  desired labeled least resolved tree with single root.
	
  For the triple $\rt{yz|x}$ the existence of the unique, but not least
  resolved tree can be shown by the same argument with exchanged roles of
  $x$ and $y$.
	
  For the triple $\rt{xz|y}$ we $u = \lca{xy} = \lca{yz} \neq \lca{xz} =
  v$. From $y\Rld x$ and $y\Rld z$ we see that both paths $\P(u,x) =
  \P(u,v)\cup \P(v,x)$ and $\P(u,z) = \P(u,v)\cup \P(v,z)$ contain exactly
  one 1-edge, while all edges in $\P(u,y)$ are labeled $0$.  There are two
  cases: (1) The path $\P(u,v)$ contains this 1-edge, which implies that
  both paths $\P(v,x)$ and $\P(v,z)$ contain only 0-edges. But then $x\Ro
  z$, a contradiction to $\Ro$ being discrete.  (2) The path $\P(u,v)$
  contains only 0-edges, which implies that each of the paths $\P(v,x)$ and
  $\P(v,z)$ contain exactly one 1-edge. Again, this leads to a tree that
  yields the correct $\Rld$-relation, but it is not least resolved.  By
  contracting the path $\P(u,v)$ to a single vertex $\rho$ and maintaining
  the labels on $\P(\rho,x)$, $\P(\rho,y)$, and $\P(\rho,z)$ we obtain the
  desired labeled least resolved tree with single root.   
\end{proof}

\begin{lemma}
  Case (iii) cannot occur.
  \label{lem:case-iii}
\end{lemma}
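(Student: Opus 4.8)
The plan is to assume, for contradiction, that Case (iii) occurs, i.e.\ that there is a tree $T(Q)$ explaining $Q$ with leaves $x,y,z$ for which $x\Rld y$ and $z\Rld y$, and to derive from this that $x\Ro z$. This is impossible, because $\Ro$ is discrete on $G(\Rld)/\Ro$, and $x,z$ are distinct vertices. This mirrors the strategy already used in Lemma~\ref{lem:case-ii}, where unwanted triples were ruled out precisely by producing an $\Ro$-identification of two leaves.

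First I would unwind the definition of $\Rld$. Writing $u=\lca{xy}$ and $v=\lca{zy}$, the hypothesis $x\Rld y$ says that $\P(u,x)$ consists only of $0$-edges while $\P(u,y)$ contains exactly one $1$-edge, and symmetrically $z\Rld y$ says that $\P(v,z)$ consists only of $0$-edges while $\P(v,y)$ contains exactly one $1$-edge. The key structural observation is that $u$ and $v$ are both ancestors of $y$ and are therefore comparable with respect to $\preceq$. Since the whole configuration is symmetric under exchanging $x$ and $z$ (which interchanges $u$ and $v$), I may assume without loss of generality that $v\preceq u$, so that $v$ lies on the path $\P(u,y)$ and $u$ is a common ancestor of both $x$ and $z$.

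Next I would run the counting argument on $1$-edges. Decomposing $\P(u,y)=\P(u,v)\cup\P(v,y)$ and recalling that $\P(u,y)$ carries exactly one $1$-edge while $\P(v,y)$ already carries exactly one $1$-edge, it follows that $\P(u,v)$ contains only $0$-edges. Combining this with $\P(v,z)$ being all-$0$ gives that $\P(u,z)=\P(u,v)\cup\P(v,z)$ is all-$0$, and together with $\P(u,x)$ all-$0$ (from $x\Rld y$) this shows that both descending paths from $u$ to $x$ and from $u$ to $z$ use only $0$-edges. Since $u$ is a common ancestor of $x$ and $z$, the path $\P(x,z)$ is contained in $\P(x,u)\cup\P(u,z)$ and hence consists only of $0$-edges, whence $x\Ro z$.

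The concluding step is then immediate: as $x$ and $z$ are distinct vertices of $G(\Rld)/\Ro$, the relation $x\Ro z$ contradicts the discreteness of $\Ro$, so Case (iii) cannot occur. I expect the only genuine subtlety to be the bookkeeping of the previous paragraph: one must confirm that $u$ is really a common ancestor of both $x$ and $z$ (which is exactly where the comparability of $u$ and $v$, and the normalization $v\preceq u$, are used) so that the entire $x$--$z$ path lies inside the region already shown to be $0$-labeled, and that the single-$1$-edge count forces the intermediate segment $\P(u,v)$ to be $0$-labeled. The degenerate subcase $u=v$ is covered by the same argument with $\P(u,v)$ empty.
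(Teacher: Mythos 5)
Your proof is correct and takes essentially the same route as the paper's: both unwind the definition of $\Rld$ at $u=\lca{x,y}$ and $v=\lca{y,z}$, exploit that $u$ and $v$ are comparable ancestors of $y$ (the paper phrases this as ``one of the triples $\rt{xy|z}$ or $\rt{yz|x}$ must be displayed''), count $1$-edges to force $\P(u,v)$ to be all-$0$, and conclude $x \Ro z$, contradicting the discreteness of $\Ro$. The only difference is cosmetic: your normalization $v\preceq u$ merges the paper's separate cases $u=v$ and $u\neq v$ into a single uniform argument.
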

\begin{proof}
  Let $x,y,z\in V(Q)$ such that $xy, zy \in E$ and thus, $x\Rld y$ and $z
  \Rld y$.  Hence, in the rooted tree that \rev{explains} this relationship
  we have the following situation: All edges along $\mathbb{P}(u,x)$ are
  labeled $0$; exactly one edge along $\mathbb{P}(u,y)$ is labeled $1$,
  where $u=\lca{x,y}$; all edges along $\mathbb{P}(v,z)$ are labeled $0$,
  and exactly one edge along $\mathbb{P}(v,y)$ is labeled $1$, where
  $v=\lca{y,z}$.  Clearly, $\lca{x,y,z}\in \{u,v\}$.  If $u=v$, then all
  edges in $\mathbb{P}(u,x)$ and $\mathbb{P}(u,z)$ are labeled $0$,
  implying that $x\Ro y$, contradicting that $\Ro$ is discrete.
	
  Now assume that $u=\lca{x,y}\neq v=\lca{y,z}$.  Hence, one of the triples
  $\rt{xy|z}$ or $\rt{yz|x}$ must be displayed by $T(Q)$.  W.l.o.g., we can
  assume that $\rt{yz|x}$ is displayed, since the case $\rt{xy|z}$ is shown
  analogously by interchanging the role of $x$ and $z$.  Thus, $\lca{x,y,z}
  = \lca{x, y} = u\neq \lca{yz}=v$. Hence, $\mathbb{P}(u,y) =
  \mathbb{P}(u,v) \cup \mathbb{P}(v,y)$.  Since $z\Rld y$, the path
  $\mathbb{P}(v,y)$ contains a single 1-edge and $\mathbb{P}(v,z)$ contains
  only 0-edges. Therefore, the paths $\mathbb{P}(u,x)$ and
  $\mathbb{P}(u,v)$ contain only 0-edges, since $x\Rl y$.  Since
  $\mathbb{P}(x,z) = \mathbb{P}(x,u)\cup \mathbb{P}(u,v)\cup
  \mathbb{P}(v,z)$ and all edges along $\mathbb{P}(u,x)$, $\mathbb{P}(u,v)$
  and $\mathbb{P}(v,z)$ are labeled $0$, we obtain $x\Ro z$, again a
  contradiction. 
\end{proof}

Taken together, we obtain the following immediate implication:
\begin{cor}
  The graph $G(\Rld)/\Ro$ does not contain a pair of edges of the form $xv$
  and $yv$.
\label{cor:edge-pointing}
\end{cor}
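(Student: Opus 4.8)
The plan is to argue by contradiction and reduce directly to the already-forbidden Case~(iii). Suppose that $G(\Rld)/\Ro$ did contain a pair of edges $xv$ and $yv$ with $x\neq y$. By the arc convention of the paper, these are the arcs $x\Rld v$ and $y\Rld v$, i.e.\ two arcs pointing into the common vertex $v$. First I would note that $x$, $y$, and $v$ all lie in a single connected component $Q$ of $G(\Rld)/\Ro$, since $x$ and $y$ are both joined to $v$, so the case analysis developed for components on at least three vertices applies to the triple $\{x,y,v\}$.

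Next I would pin down the induced subgraph on $\{x,y,v\}$. Because the underlying undirected graph of every connected component of $G(\Rld)/\Ro$ is a tree (as observed at the start of this section, inherited from the forest structure of $G(\Rl)/\Ro$), these three vertices cannot span a cycle. Hence the only edges among them are the two given arcs $xv$ and $yv$, the induced subgraph is a tree on three vertices, and it is precisely the configuration of Case~(iii): namely $v$ in the role of the sink vertex (the vertex labeled $y$ there) and $x,y$ in the roles of the two sources pointing into it.

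Finally I would invoke Lemma~\ref{lem:case-iii}, which establishes that Case~(iii) cannot occur in $G(\Rld)/\Ro$. This contradicts the assumed existence of the pair $xv$, $yv$, and therefore no such pair exists.

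Since the statement is essentially a relabelling of an impossibility already proved, there is no genuinely hard step. The only point requiring care is verifying that $\{x,y,v\}$ induces exactly the Case~(iii) tree rather than some larger or disconnected configuration: this rests on $x\neq y$ together with the acyclicity of the components of $G(\Rld)/\Ro$, which simultaneously rules out an additional edge $xy$ (which would create a triangle) and guarantees connectedness, thereby pinning the induced subgraph down to the two incoming arcs at $v$.
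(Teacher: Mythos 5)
Your proof is correct and takes essentially the same route as the paper, which states the corollary as an immediate consequence of the preceding case analysis, in particular of Lemma~\ref{lem:case-iii}. The only extra content you supply—checking via the tree structure of the components of $G(\Rld)/\Ro$ that $\{x,y,v\}$ induces exactly the Case~(iii) configuration—is a detail the paper leaves implicit, so there is no substantive difference in approach.
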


Recall that the connected components $\underline{Q}$ in $G(\Rld)/\Ro$ are
trees. By Cor.\ \ref{cor:edge-pointing}, $Q$ must be composed of
distinct paths that ``point away'' from each other.  In other words,
  let $P$ and $P'$ be distinct directed path in $Q$ that share a vertex
  $v$, then it is never the case that there is an edge $xv$ in $P$ and an
  edge $yv$ in $P'$, that is, both edges ``pointing'' to the same vertex
  $v$. We first consider directed paths in isolation.

\begin{lemma}
  Let $Q$ be a connected component in $G(\Rld)/\Ro$ that is a
  directed path with $n\ge 3$ vertices labeled $x_1,\dots,x_n$ such that
  $x_ix_{i+1}\in E(Q)$, $1\leq i\leq n-1$. 
  Then the tree $T(Q)$ explaining $Q$ must display all triples in $\mathcal
  R_Q = \{\rt{x_ix_j|x_l} \mid i,j>l\geq 1\}$.  Hence, $T(Q)$ must display
  $\binom{n}{3}$ triples and is therefore the unique (least resolved)
  binary rooted tree $(\dots(x_n,x_{n-1})x_{n-2})\dots)x_2)x_1$ that
  explains $Q$.  Moreover, all interior edges in $T(Q)$ and the edge incident
  to $x_n$ are labeled $1$ while all other edges are labeled $0$.
  \label{lem:path-tree}
\end{lemma}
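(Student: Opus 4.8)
The plan is to show that \emph{any} tree $T(Q)$ explaining $Q$ is forced to display every triple in $\mathcal R_Q=\{\rt{x_ix_j|x_l}\mid i,j>l\ge 1\}$. Since $\mathcal R_Q$ contains exactly one triple for each three-element subset $\{x_l,x_i,x_j\}$ (the two larger indices grouped against the smallest), it is strict dense, and it is consistent because the caterpillar $(\dots(x_n,x_{n-1})x_{n-2})\dots)x_2)x_1$ realizes it. The known fact that a consistent strict dense triple set is displayed by a \emph{unique} binary tree then identifies $T(Q)$ with this caterpillar; as the caterpillar is already binary, it is simultaneously the unique and least resolved tree explaining $Q$. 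The edge labels are read off at the end.

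To prove that $T(Q)$ displays $\mathcal R_Q$ I would induct on $n$. For $n=3$ we have $x_1\Rld x_2\Rld x_3$, which is exactly the Case~(i) configuration, so Lemma~\ref{lem:case-i} yields the single triple $\rt{x_2x_3|x_1}$. For the inductive step, first delete the leaf $x_n$: by the restriction lemma (a displayed subtree explains the induced subgraph), the minimal subtree of $T(Q)$ on $\{x_1,\dots,x_{n-1}\}$ explains the directed path $x_1\Rld\dots\Rld x_{n-1}$, so by the induction hypothesis it displays all $\rt{x_ix_j|x_l}$ with $l<i<j\le n-1$; these triples are then also displayed by $T(Q)$. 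Deleting instead $x_1$ and arguing symmetrically gives all $\rt{x_ix_j|x_l}$ with $2\le l<i<j$. The union of the two families covers every triple of $\mathcal R_Q$ except the ``boundary'' triples $\rt{x_ix_n|x_1}$ with $1<i<n$, which simultaneously involve both endpoints.

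The main obstacle is to recover these boundary triples, for which I would use inference rule~\eqref{eq:infRule2}, namely $\{\rt{(ab|c)},\rt{(ad|b)}\}\vdash\rt{(bd|c)},\rt{(ad|c)}$. For $i\ge 3$, choose any $l$ with $2\le l<i$ and instantiate $a=x_i,\,b=x_l,\,c=x_1,\,d=x_n$: the premise $\rt{x_ix_l|x_1}$ is available from the ``delete $x_n$'' family and the premise $\rt{x_ix_n|x_l}$ from the ``delete $x_1$'' family, and the rule outputs $\rt{x_ix_n|x_1}$. For the remaining case $i=2$ (where no such $l$ exists and necessarily $n\ge 4$), pick any $a$ with $2<a\le n-1$ and instantiate $a\mapsto x_a,\,b\mapsto x_2,\,c\mapsto x_1,\,d\mapsto x_n$: the premises $\rt{x_ax_2|x_1}$ and $\rt{x_ax_n|x_2}$ are again available from the two families, and the rule yields $\rt{x_2x_n|x_1}$. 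Hence all of $\mathcal R_Q$ is displayed, completing the induction and fixing $T(Q)$ as the claimed caterpillar.

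Finally I would determine $\lambda$ directly from the caterpillar, writing $r_i=\lca{x_i,\dots,x_n}$ for its inner vertices, so that $\lca{x_i,x_{i+1}}=r_i$, the interior edges are the $r_ir_{i+1}$ and the terminal edges are the $r_ix_i$ (with $x_{n-1},x_n$ both children of $r_{n-1}$). For $1\le i\le n-1$, $x_i\Rld x_{i+1}$ forces $\P(r_i,x_i)$ to carry only $0$-labels, so each terminal edge $r_ix_i$ is a $0$-edge; the relation $x_{n-1}\Rld x_n$ forces the single edge $\P(r_{n-1},x_n)$ to be a $1$-edge. Then, for $1\le i\le n-2$, the path $\P(r_i,x_{i+1})=\{r_ir_{i+1},r_{i+1}x_{i+1}\}$ must contain exactly one $1$-edge; since $r_{i+1}x_{i+1}$ is already a $0$-edge, the interior edge $r_ir_{i+1}$ must be the $1$-edge. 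Thus all interior edges together with the edge incident to $x_n$ are $1$-edges and all other edges are $0$-edges, exactly as claimed; discreteness of $\Ro$ and the non-adjacencies $x_i\not\Rl x_j$ for $|i-j|\ge 2$ (Cor.~\ref{cor:cycle-free}) confirm that this labeling reproduces $Q$.
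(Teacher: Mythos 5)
Your proof is correct and follows essentially the same route as the paper's: induction on the path length with Lemma~\ref{lem:case-i} as the base case, restriction of $T(Q)$ to subpaths so the induction hypothesis applies, inference rule~\eqref{eq:infRule2} to recover the remaining triples, and the strict-dense uniqueness result of \citet{Hellmuth:15a} to identify $T(Q)$ with the caterpillar before forcing the edge labels. The only deviations are cosmetic: you restrict to both subpaths (deleting $x_1$ and deleting $x_n$) so that only the boundary triples $\rt{x_ix_n|x_1}$ require an inference step, whereas the paper restricts only to $x_2,\dots,x_n$ and chains the rules to produce all triples with $l=1$, and you derive the labeling directly from the caterpillar rather than inductively.
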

\begin{proof}
  Let $Q$ be a directed path as specified in the lemma. We prove the
  statement by induction. For $n=3$ the statement follows from Lemma
  \ref{lem:case-i}. Assume the statement is true for $n=k$. Let $Q$ be a
  directed path with vertices $x_1,\dots,x_k, x_{k+1}$ and edges
  $x_ix_{i+1}$, $1\leq i\leq k$ and let $T(Q)$ be a tree that explains
  $Q$. For the subpath $Q'$ on the vertices $x_2,\dots,x_{k+1}$ we can
  apply the induction hypothesis and conclude that $T(Q')$ must display the
  triples $\rt{x_ix_j|x_l}$ with $i,j>l\geq 2$ and that all interior edges in
  $T(Q')$ and the edge incident to $x_{k+1}$ are labeled $1$ while all
  other edges are labeled $0$. Since $T(Q)$ must explain in particular the
  subpath $Q'$ and since $T(Q')$ is fully resolved, we can conclude that
  $T(Q')$ is displayed by $T(Q)$ and that all edges in $T(Q)$ that are also
  in $T(Q')$ retain the same label as in $T(Q')$ .
	
  Thus $T(Q)$ displays in particular the triples $\rt{x_ix_j|x_l}$ with
  $i,j>l\geq 2$.  By Lemma \ref{lem:case-i}, and because there are edges
  $x_1x_2$ and $x_2x_3$, we see that $T(Q)$ must also display
  $\rt{x_2x_3|x_1}$.  Take any triple $\rt{x_3x_j|x_2}$, $j>3$.
  Application of the triple-inference rules shows that any tree that
  displays $\rt{x_2x_3|x_1}$ and $\rt{x_3x_j|x_2}$ must also display
  $\rt{x_2x_j|x_1}$ and $\rt{x_3x_j|x_1}$. Hence, $T(Q)$ must display these
  triples.  Now we apply the same argument to the triples $\rt{x_2x_j|x_1}$
  and $\rt{x_ix_j|x_2}$, $i,j>2$ and conclude that in particular, the
  triple $\rt{x_ix_j|x_1}$ must be displayed by $T(Q)$ and thus, the the
  entire set of triples $\{\rt{x_ix_j|x_l} \colon i,j>l\geq 1\}$.  Hence,
  there are $\binom{n}{3}$ triples and thus, the set of triples that needs
  to be displayed by $T(Q)$ is strictly dense. Making use of a technical
  result from \cite[Suppl. Material]{Hellmuth:15a}, we obtain that $T(Q)$
  is the unique binary tree $(\dots(x_n,x_{n-1})x_{n-2})\dots)x_2)x_1$. Now
  it is an easy exercise to verify that the remaining edge containing $x_1$
  must be labeled $0$, while the interior edge not contained in $T(Q')$ must
  all be 1-edges. 
\end{proof}

\begin{figure}
\begin{center}
\includegraphics[width=1.\textwidth]{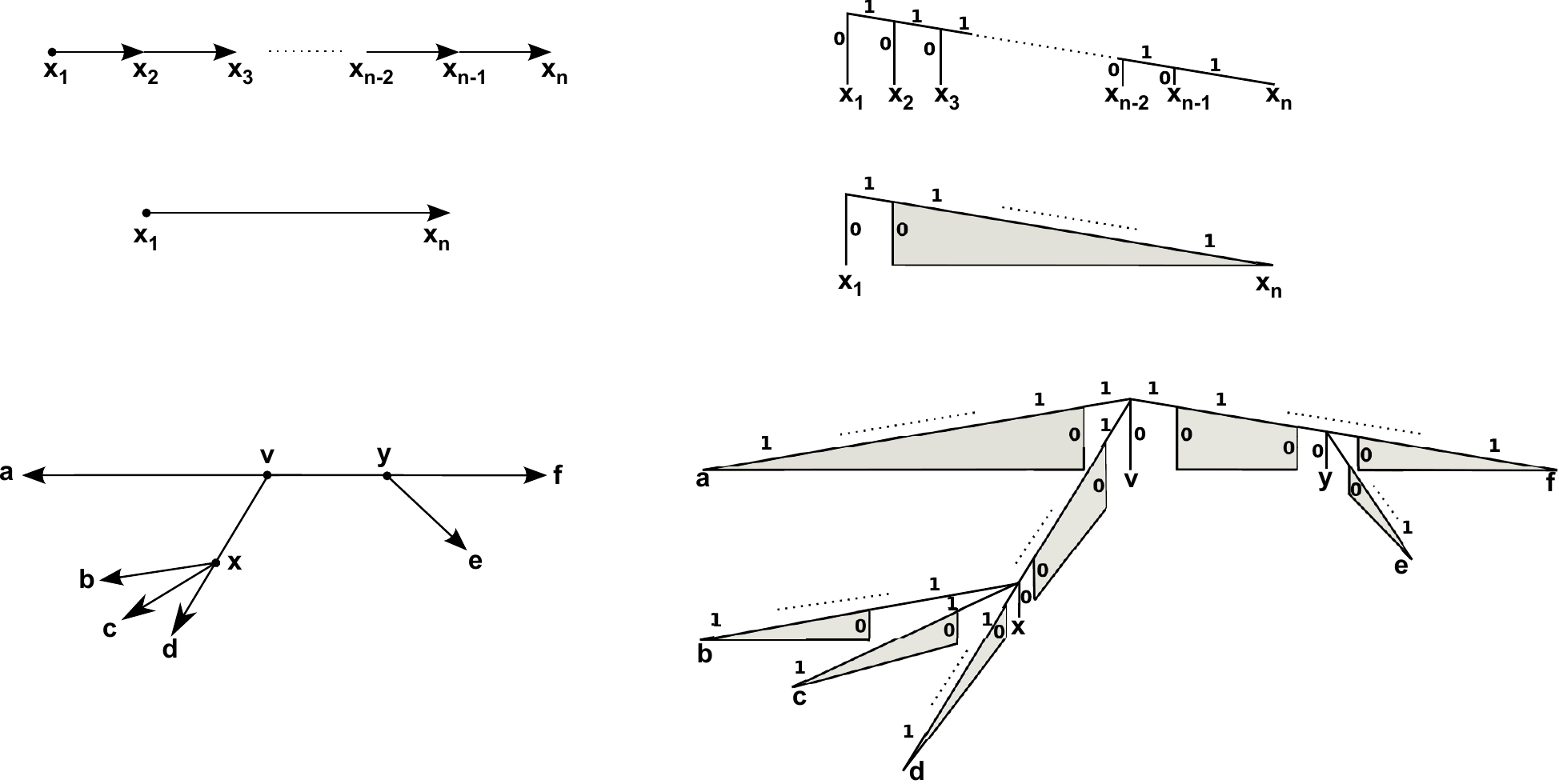}
\end{center}
\caption{Connected components $Q$ in $G(\Rld)/\Ro$ are trees composed of
  paths pointing away from each other.  On the top, a directed path
  $\mathbb P(x_1,x_n)$ together with its unique least resolved tree
  representation is shown. In the middle, an abstract picture of the path
  $\mathbb P(x_1,x_n)$ and its tree is given.  Bottom, a larger example of
  a connected component $Q$ in $G(\Rld)/\Ro$ and its tree-representation
  are sketched.}
\label{fig:Qtree}
\end{figure}

If $Q$ is connected but not a simple path, it is a tree composed of the
paths pointing away from each other as shown in Fig. \ref{fig:Qtree}. It
remains to show how to connect the distinct trees that explain these paths
to obtain a tree $T(Q)$ for $Q$.  To this end, we show first that there is
a unique vertex $v$ in $Q$ such that no edge ends in $v$.

\begin{lemma}
  Let $Q$ be a connected component in $G(\Rld)/\Ro$.  Then there is a
  unique vertex $v$ in $Q$ such that there is no edge $xv\in E(Q)$.
  \label{lem:unique-v}
\end{lemma}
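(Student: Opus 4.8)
The plan is to reduce the lemma to a one-line counting argument built on Corollary~\ref{cor:edge-pointing} and on the fact, noted just above, that the underlying graph $\underline{Q}$ is a tree. First I would unpack the statement: since arcs $(x,v)\in E(Q)$ are written $xv$, the phrase ``there is no edge $xv\in E(Q)$'' says precisely that $v$ has no incoming arc, i.e.\ that $v$ is a \emph{source} of the digraph $Q$. So the claim is that $Q$ has exactly one source.

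Before counting I would record two small well-definedness facts. The quotient $G(\Rld)/\Ro$ has no loops because $\Ro$ is discrete in $G(\Rld)/\Ro$, and it contains no pair of oppositely directed arcs between the same two vertices, since $x\Rld y$ and $y\Rld x$ cannot hold simultaneously (the first forces every edge of $\P(\lca{x,y},x)$ to be a $0$-edge, while the second would force one of them to be a $1$-edge). Consequently each of the $|V(Q)|-1$ edges of the tree $\underline{Q}$ carries a unique orientation, so the in-degree $d^-(v)$ (the number of arcs of $Q$ pointing into $v$) is well defined and $\sum_{v\in V(Q)} d^-(v)=|E(Q)|=|V(Q)|-1$.

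The main step is then immediate. By Corollary~\ref{cor:edge-pointing} no vertex of $Q$ can have two incoming arcs, so $d^-(v)\in\{0,1\}$ for every $v$. Writing $z$ for the number of sources (vertices with $d^-(v)=0$), the degree sum above gives $|V(Q)|-z=|V(Q)|-1$, hence $z=1$. This simultaneously establishes the existence and the uniqueness of the source $v$, and the argument is uniform in $|V(Q)|\ge 1$, covering the degenerate single-vertex and single-arc components as well.

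There is no genuine obstacle here: the content of the lemma is entirely carried by Corollary~\ref{cor:edge-pointing}, and once the in-degrees are seen to lie in $\{0,1\}$ the conclusion follows from the handshake-type identity for a tree. The only point requiring a moment's care is the well-definedness of the orientation and hence of $d^-$, which is why I would isolate the ``no loops, no antiparallel arcs'' observations before counting.
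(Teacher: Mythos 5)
Your proof is correct, and it reaches the conclusion by a genuinely different (and more compact) route than the paper. The paper also takes Corollary~\ref{cor:edge-pointing} as the key ingredient, but then argues structurally in two separate steps: \emph{existence} of a source holds because if every vertex of $Q$ had an incoming arc, following arcs backwards would eventually produce a cycle, contradicting that $\underline{Q}$ is a tree; \emph{uniqueness} holds because two sources $v,v'$ would force the unique path in $\underline{Q}$ between them to contain some vertex $w$ with two incoming arcs $aw$ and $bw$, again contradicting Corollary~\ref{cor:edge-pointing}. You replace both steps by one handshake count: $d^-(v)\le 1$ for all $v$ together with $\sum_{v}d^-(v)=|E(Q)|=|V(Q)|-1$ forces exactly one vertex of in-degree $0$. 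What your version buys is uniformity (existence and uniqueness fall out of the same identity, and the single-vertex and single-arc components need no special mention) at the price of needing the exact edge count $|E(Q)|=|V(Q)|-1$; that is precisely why your preliminary observation ruling out antiparallel arcs is not pedantry but a genuine prerequisite, and it is worth noting that the paper's existence argument tacitly needs the same fact, since a $2$-cycle $x\to v\to x$ would be a directed cycle that does not contradict the tree structure of $\underline{Q}$. The only caveat, shared by both proofs, is that antisymmetry of $\Rld$ is checked on a fixed pair of representatives; transporting it to the quotient $G(\Rld)/\Ro$ strictly speaking uses the directed analogue of Lemma~\ref{lem:notriangle}, which the paper leaves implicit by identifying each $\Ro$-class with a representative.
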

\begin{proof}
  Corollary \ref{cor:edge-pointing} implies that for each vertex $v$ in $Q$
  there is at most one edge $xv\in E(Q)$.  If for all vertices $w$ in $Q$
  we would have an edge $xw\in E(Q)$, then $Q$ contains cycles,
  contradicting the tree structure of $Q$. Hence, there is at least one
  vertex $v$ so that there is no edge of the form $xv\in E(Q)$.  \rev{
    Assume there are two vertices $v,v'$ so that there are no edges of the
    form $xv, yv'$, then all edges incident to $v,v'$ are of the form $vx,
    v'y$. However, in this case, the unique path connecting $v,v'$ in $Q$
    must contain two edges of the form $aw$, $bw$; a contradiction to
    Corollary \ref{cor:edge-pointing}.}  Thus, there is exactly one vertex
  $v$ in $Q$ such that there is no edge $xv\in E(Q)$. 
\end{proof}

By Lemma \ref{lem:unique-v}, for each connected component $Q$ of
$G(\Rld)/\Ro$ there is a unique vertex $v_Q$ s.t.\ all edges incident to
$v_Q$ are of the form $v_Qx$. That is, all directed paths that are maximal
w.r.t.\ inclusion start in $v_Q$.  Let $\mathcal P_Q$ denote the sets of
all such maximal paths.  Thus, for each path $P \in \mathcal P_Q$
there is the triple set $\mathcal{R}_{P}$ according to Lemma
\ref{lem:path-tree} that must be displayed by any tree that explains also
$Q$.  Therefore, $T(Q)$ must display all triples in $\mathcal{R}_Q =
\cup_{P\in \mathcal{P}_Q} \mathcal{R}_{P}$.

The underlying undirected graph $\underline{G(\Rld)/\Ro}$ is isomorphic to
$G(\Rl)/\Ro$. Thus, with Algorithm \ref{alg:Q}, one can similar to the
unrooted case, first construct the tree $T(\underline{Q})$ and then set the
root $\rho_Q=v'_Q$ to obtain $T(Q)$.  It is easy to verify that this tree
$T(Q)$ displays all triples in $\mathcal{R}_Q$.  Moreover, any
edge-contradiction in $T(Q)$ leads to the loss of an input triple
$\mathcal{R}_Q$ and in particular, to a wrong pair of vertices w.r.t.\
$\Rld$ or $\Ro$.  Thus, $T(Q)$ is a least resolved tree for
$\mathcal{R}_Q$ and therefore, a least resolved tree that explains $Q$.

We summarize these arguments in
\begin{cor} 
  Let $Q$ be a connected component in $G(\Rld)/\Ro$. \rev{Then a tree
    $T(Q)$ that explains $Q$ can be obtained from the unique minimally
    resolved tree $T(\underline{Q})$ that explains $\underline{Q}$} by
  choosing the unique vertex $v$ where all edges incident to $v$ are of the
  form $vx$ as the root $\rho_Q$.
\label{cor:Q-root}
\end{cor}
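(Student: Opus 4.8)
The plan is to reduce the statement to the triple set $\mathcal{R}_Q$ and to verify that rooting $T(\underline{Q})$ at the source $v_Q$ reproduces the directed relation exactly. First I would record the structural setup: since the symmetrization of $\Rld$ is $\Rl$, the underlying graph $\underline{Q}$ is a connected component of $G(\Rl)/\Ro$, so Theorem~\ref{thm:connComp} supplies the unique minimally resolved tree $T(\underline{Q})$, obtained from Algorithm~\ref{alg:Q}. Lemma~\ref{lem:unique-v} furnishes the unique vertex $v_Q$ all of whose incident edges point away from it; together with Corollary~\ref{cor:edge-pointing} this makes $Q$ a rooted out-tree (arborescence) with source $v_Q$, so that every maximal directed path in $\mathcal{P}_Q$ starts at $v_Q$.

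Next I would define $T(Q)$ as $T(\underline{Q})$ rooted at the position of $v_Q$: if $v_Q$ is an inner vertex of the tree-graph $Q$ it is the inner vertex $v'_Q$ carrying $v_Q$ as its pendant $0$-leaf in Algorithm~\ref{alg:Q}, and one roots there; if $v_Q$ has degree one in the tree-graph, one places the root on its incident $1$-edge, splitting it into a $0$-edge to $v_Q$ and a $1$-edge, so that $\P(\rho_Q,v_Q)$ carries only $0$-labels. To prove that $T(Q)$ explains $Q$ I would reduce to the triple set $\mathcal{R}_Q=\bigcup_{P\in\mathcal{P}_Q}\mathcal{R}_P$: for each maximal path $P$, Lemma~\ref{lem:path-tree} identifies its caterpillar together with $\mathcal{R}_P$, and I would check that rooting at $v_Q$ produces all these caterpillars simultaneously, so that $T(Q)$ displays every triple in $\mathcal{R}_Q$. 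The orientation of each arc then follows from Lemmas~\ref{lem:case-i} and~\ref{lem:case-ii} (with Lemma~\ref{lem:case-iii} ruling out the forbidden configuration): for an arc $x\to y$ the relevant displayed triple forces $u:=\lca{x,y}$ to lie on $\P(\rho_Q,y)$ with $\P(u,x)$ carrying only $0$-labels and $\P(u,y)$ a single $1$, that is $x\Rld y$; a pair carrying no arc is either separated by two $1$-edges along its path or related only in the symmetrized sense, hence not in $\Rld$.

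Finally I would argue minimality exactly as in the preceding discussion: by construction, and by Lemma~\ref{lem:contract}, every $0$-edge of $T(Q)$ is a pendant edge, so each interior edge is a $1$-edge. Contracting any interior $1$-edge merges the two endpoints of some arc and thereby deletes a separating triple $\rt{x_ix_j|x_l}\in\mathcal{R}_Q$, which either changes the status of the corresponding pair under $\Rld$ or forces a spurious $\Ro$-relation (contradicting discreteness of $\Ro$). Hence no edge contraction preserves $Q$, and $T(Q)$ is a least resolved tree explaining $Q$.

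The hard part will be the global consistency check in the middle step: a single root at $v_Q$ must orient \emph{every} arc correctly at once, including pairs lying on different maximal paths of $\mathcal{P}_Q$ that share an initial segment through $v_Q$. Concretely, one must show that $\mathcal{R}_Q$ is consistent and that the rooted tree displays \emph{precisely} $\mathcal{R}_Q$ and no extraneous triples, so that both the symmetric pattern $\Rl$ inherited from $T(\underline{Q})$ and the directional refinement encoded by the root agree with $Q$ on all pairs. The mild technical wrinkle is the degree-two root arising when $v_Q$ is a leaf of the tree-graph, which must be tolerated just as in Lemma~\ref{lem:path-tree}.
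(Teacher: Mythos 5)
Your proposal is correct and takes essentially the same route as the paper: identify the unique source $v_Q$ via Lemma~\ref{lem:unique-v} and Corollary~\ref{cor:edge-pointing}, reduce the directional constraints to the triple set $\mathcal{R}_Q=\bigcup_{P\in\mathcal{P}_Q}\mathcal{R}_P$ supplied by Lemma~\ref{lem:path-tree}, root $T(\underline{Q})$ at the position of $v_Q$, and establish least-resolvedness by the edge-contraction argument. Your explicit split into two cases for $v_Q$ (inner vertex of the tree-graph $Q$, versus degree-one vertex, where you subdivide the incident $1$-edge to create the root) is a more careful treatment of a detail the paper glosses over, since the paper's prescription $\rho_Q=v'_Q$ only makes sense when $v_Q$ is an inner vertex of $Q$.
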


If $G(\Rld)/\Ro$ is disconnected, one can apply Algorithm \ref{alg:all}, to
obtain the tree $T(G(\Rl)/\Ro)$ and then chose either one of the vertices
$\rho_Q$ or the vertex $z_T$ as root to obtain $T(G(\Rld)/\Ro)$, in which
case all triples of $\mathcal{R}_{G(\Rld)/\Ro} = \cup_Q \mathcal R_Q$ are
displayed. Again, it is easy to verify that any edge-contradiction leads to
a wrong pair of vertices in $\Rld$ or $\Ro$. Thus, $T(G(\Rld)/\Ro)$ is a
least resolved tree for $\mathcal{R}_{G(\Rld)/\Ro}$.

To obtain uniqueness of minimally resolved trees 
one can apply similar arguments as in the proofs of
Theorems \ref{thm:connComp} and \ref{thm:star-tree}. This yields the
following characterization: 

\begin{thm}
  Let $Q_1,\dots Q_k$ be the connected components in $G(\Rld)/\Ro$.  Up to
  the choice of the vertices $q'_i$ in Line \ref{item:z} of Alg.\
  \ref{alg:all} for the construction of $T(\underline{Q_i})$ and the choice
  of the root $\rho\in \{\rho_{Q_1},\ldots \rho_{Q_k}, z\}$, the tree $T^*
  = T(G(\Rld)/\Ro))$ is the unique \rev{minimally} resolved tree that
  explains $G(\Rld)/\Ro$.
  \label{thm:star-tree-dir}
\end{thm}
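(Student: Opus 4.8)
The plan is to mirror the proofs of Theorems~\ref{thm:connComp} and \ref{thm:star-tree}, exploiting the reduction of the directional data to the already-solved symmetric problem. First I would record that any rooted tree $(T,\lambda)$ explaining $G(\Rld)/\Ro$ also explains its symmetrization: since $x\Rl y$ holds precisely when $x\Rld y$ or $y\Rld x$, the underlying unrooted labeled tree of $(T,\lambda)$ explains $G(\Rl)/\Ro\simeq \underline{G(\Rld)/\Ro}$. Hence, by Theorem~\ref{thm:star-tree} together with Lemma~\ref{lem:subtrees} applied to the symmetrization, the unrooted topology and labeling are already forced up to the choice of the attachment vertices $q'_i$: any explaining tree contains pairwise vertex-disjoint copies of the minimally resolved trees $T(\underline{Q_i})$ and has at least $1+\sum_i|V(T(\underline{Q_i}))|$ vertices, with equality realized by joining the $T(\underline{Q_i})$ through a single central vertex $z$ via $1$-edges as in Algorithm~\ref{alg:all}. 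This reduces the whole problem to locating the root.

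For existence I would verify that $T^*$, rooted at any $\rho\in\{\rho_{Q_1},\dots,\rho_{Q_k},z\}$ and with each component attached at $q'_i=\rho_{Q_i}$, explains $G(\Rld)/\Ro$. Within a single component this is Corollary~\ref{cor:Q-root}: rooting $T^*$ at $z$ or at some $\rho_{Q_j}$ makes the induced root of each subtree $T(\underline{Q_i})$ equal to its source $\rho_{Q_i}$, because the vertex of $T(\underline{Q_i})$ closest to $\rho$ is exactly $\rho_{Q_i}$, so every within-component relation is reproduced by Lemmas~\ref{lem:case-i}, \ref{lem:case-ii} and \ref{lem:path-tree}. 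Across components, any path between leaves of $Q_i$ and $Q_j$ runs through $z$ and hence through the two $1$-edges $z\rho_{Q_i}$ and $z\rho_{Q_j}$, so it carries at least two $1$-labels; thus no cross-component pair is in relation $\Rl$, and a fortiori none is in relation $\Rld$. This shows that each of the $k+1$ rootings explains the prescribed relation.

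For uniqueness and minimality I would argue, as in Theorem~\ref{thm:star-tree}, that an arbitrary explaining tree $T'$ restricts to vertex-disjoint subtrees $T'_i$ on the leaf sets $V(Q_i)$ (Lemma~\ref{lem:subtrees}), each of which collapses, after contracting interior and degree-$2$ $0$-edges, to the unique $T(\underline{Q_i})$ (Theorem~\ref{thm:connComp}); minimality of the vertex count then forces $T'$ to be $T^*$ up to the choice of the $q'_i$. The directional information removes the remaining freedom in the root: since $\Ro$ is discrete and each $T'_i$ must explain $Q_i$ as a directed relation, Corollary~\ref{cor:Q-root}, via Lemma~\ref{lem:unique-v} and Corollary~\ref{cor:edge-pointing}, forces the induced root of each $T'_i$ to be its source $\rho_{Q_i}$; the only vertices of $T^*$ whose choice as global root induces $\rho_{Q_i}$ in every component simultaneously are $z$ and the $\rho_{Q_j}$ themselves. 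Hence $T^*$ is the unique minimally resolved tree up to the choice of the $q'_i$ in Line~\ref{item:z} and the root $\rho\in\{\rho_{Q_1},\dots,\rho_{Q_k},z\}$.

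I expect the main obstacle to be the precise bookkeeping of which pairs (attachment vertices $q'_i$, root $\rho$) are simultaneously admissible and the verification that all of them yield \emph{the same} relation $G(\Rld)/\Ro$ with the \emph{same} minimal number of vertices. In particular one must check that for the component containing the root the attachment $q'_i$ may still vary, whereas for every other component it is pinned to $\rho_{Q_i}$, and one must separately dispatch the degenerate components (single vertices, and single edges treated in Line~\ref{step:edge}) as well as the degree-$2$ anomaly of $z$ that arises for $k=2$, exactly as flagged before Theorem~\ref{thm:star-tree}.
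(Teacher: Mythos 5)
Your proposal is correct and follows essentially the same route as the paper: the paper's own argument (given in the text preceding the theorem) likewise reduces the topology and labeling to the undirected results (Algorithm~\ref{alg:all}, Theorem~\ref{thm:star-tree}, Lemma~\ref{lem:subtrees}, Theorem~\ref{thm:connComp}) and then pins the root per component via Corollary~\ref{cor:Q-root}, Lemma~\ref{lem:unique-v} and Corollary~\ref{cor:edge-pointing}, deferring uniqueness to ``similar arguments'' as in Theorems~\ref{thm:connComp} and \ref{thm:star-tree}, which is exactly what you carry out. If anything, you are more explicit than the paper about the coupling between the choice of global root and the attachment vertices $q'_i$ (only the root's own component retains a free attachment, all others being pinned to $\rho_{Q_i}$), a bookkeeping constraint the theorem's ``up to the choice of'' phrasing leaves implicit.
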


\section{Mix of symmetric and anti-symmetric relations}
\label{sect:mixed}

In real data, e.g., in the application to mitochondrial genome
arrangements, one can expect that the known relationships are in part
directed and in part undirected. Such data are naturally encoded by a
relation $\Rld$ with directional information and a relation $\Rlstar$
comprising the set of pairs for which it is unknown whether \rev{ one of
  $x\Rld y$ and $y\Rld x$ or $x\Rl y$ are true. Here, $\Rl$ is a subset of
  $\Rlstar$.}  The disjoint union $\Rlstar\uplus\Rld$ of these two parts
can be seen as refinement of a corresponding symmetrized relation $x\Rl y$.
Ignoring the directional information one can still construct the tree
$T(G(\Rl)/\Ro)$.  In general there will be less information of the
placement of the root in $T(G(\Rlstar\cup\Rld)/\Ro)$ than with a fully
directed edge set.

\rev{In what follows, we will consider all edges of
  $G(\Rlstar\cup\Rld)/\Ro$ to be directed, that is, for a symmetric pair
  $(a,b)\in\Rlstar$ we assume that both directed edges $ab$ and $ba$ are
  contained in $G(\Rlstar\cup\Rld)/\Ro$.  Still, for any connected
  component $Q$ the underlying undirected graph $\underline{Q}$ is a tree.}
Given a component $Q$ we say that a directed edge $xy \in E(Q)$
\emph{points away from the vertex $v$} if the unique path in
$\underline{Q}$ from $v$ to $x$ does not contain $y$. In this case the path
from $v$ to $y$ must contain $x$.  Note that in this way we defined
``pointing away from $v$'' not only for the edges incident to $v$, but for
all directed edges.  A vertex $v$ is a \emph{central vertex}
if, for any two distinct vertices $x,y\in V$ that form an edge in $T$,
either $xy$ or $yx$ in $T$ points away from $v$.

As an example consider the tree $a\leftarrow b\rightarrow c \leftrightarrow
d\rightarrow e$. \rev{There is only the edge $bc$ containing $b$ and $c$.
However, $bc$ does not point away from vertex $d$, since the unique path
from $d$ to $b$ contains $c$.}  
Thus $d$ is not central. On the other hand, $b$ is a
central vertex. The only possibility in this example to obtain a valid
relation $\Rld$ that can be displayed by rooted 0/1-edge-labeled tree is
provided by removing the edge $dc$, since otherwise Cor.\
\ref{cor:edge-pointing} would be violated.

In the following, for given relations $\Rlstar$ and $\Rld$ we will denote
with $\Rldstar$ a relation that contains $\Rld$ and exactly one pair,
either $(x,y)$ or $(y,x)$, from $\Rlstar$.

\begin{lemma}
  For a given graph $G(\Rlstar\cup\Rld)/\Ro$ the following statements
  are equivalent:
  \begin{itemize}
  \item[(i)] There is a relation $\Rldstar$ that is the antisymmetric
    single-1-relation of some 0/1-edge-labeled tree.
  \item[(ii)] There is a central vertex in each connected component $Q$ of
    $G(\Rlstar\cup\Rld)/\Ro$.
  \end{itemize}
\end{lemma}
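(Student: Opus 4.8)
The plan is to prove the equivalence of (i) and (ii) by establishing both directions, using the characterization of admissible directed components that was developed in Section~\ref{sect:1dir}. The key structural fact from Corollary~\ref{cor:edge-pointing} is that a valid antisymmetric single-1-relation $\Rld$ can never contain two edges $xv$ and $yv$ pointing \emph{to} the same vertex $v$. Thus the relation $\Rldstar$ sought in (i) must, within each component, orient every edge so that all edges point away from a single source vertex --- which is precisely the defining property of a central vertex. I would work component-wise throughout, since the definitions of both central vertex and $\Rldstar$ are local to connected components of $G(\Rlstar\cup\Rld)/\Ro$.

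\medskip

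\textbf{Direction (i)$\Rightarrow$(ii).} Suppose $\Rldstar$ is the antisymmetric single-1-relation of some $0/1$-edge-labeled tree $(T,\lambda)$. Fix a component $Q$ and let $Q^\star$ be the corresponding component in $G(\Rldstar)/\Ro$, whose underlying undirected graph is the same tree $\underline{Q}$. By Lemma~\ref{lem:unique-v} applied to $Q^\star$, there is a unique vertex $v$ with no incoming edge, i.e.\ every edge incident to $v$ is of the form $vx$. I would then argue that this $v$ is central: by Corollary~\ref{cor:edge-pointing} each vertex has at most one incoming edge, so orienting the tree away from $v$ is consistent, and an easy induction along the tree $\underline{Q}$ shows that every directed edge of $Q^\star$ points away from $v$. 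Since $\Rld\subseteq\Rldstar$ and each symmetric pair of $\Rlstar$ contributes at least one of its two orientations to $\Rldstar$, every edge $xy$ of $\underline Q$ that forms an edge in $T$ has one of its orientations pointing away from $v$, establishing that $v$ is a central vertex of $Q$.

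\medskip

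\textbf{Direction (ii)$\Rightarrow$(i).} Conversely, suppose each component $Q$ has a central vertex $v_Q$. I would \emph{construct} $\Rldstar$ explicitly: for every directed edge of $Q$, retain the orientation that points away from $v_Q$, and for each symmetric pair $(a,b)\in\Rlstar$ keep exactly the one of $ab$, $ba$ that points away from $v_Q$ (centrality guarantees exactly one does). This yields a directed graph in which $v_Q$ is the unique source and every other vertex has exactly one incoming edge, so the resulting component is a collection of directed paths emanating from $v_Q$ --- exactly the structure of an admissible $G(\Rld)/\Ro$ component treated in Corollary~\ref{cor:Q-root}. I would then invoke the construction there (build $T(\underline{Q})$ via Algorithm~\ref{alg:Q} and root it at $v_Q$), and assemble the components across the whole graph using Algorithm~\ref{alg:all} and Theorem~\ref{thm:star-tree-dir}, producing a $0/1$-edge-labeled rooted tree whose antisymmetric single-1-relation is precisely $\Rldstar$.

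\medskip

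\textbf{The main obstacle} I anticipate is the well-definedness of the orientation in (ii)$\Rightarrow$(i): I must verify that ``point away from $v_Q$'' simultaneously respects the \emph{already-fixed} directions in $\Rld$ (these cannot be reoriented) while choosing orientations for the symmetric $\Rlstar$-pairs. The definition of central vertex is stated exactly to guarantee this --- for an edge $xy$ forming an edge in $T$, one of $xy$ or $yx$ points away from $v_Q$ --- but I would need to confirm that a pre-directed edge of $\Rld$ is never forced into the ``wrong'' (incoming) orientation, which amounts to checking that the presence of a central vertex is incompatible with any $\Rld$-edge pointing toward it. This follows because such an incoming $\Rld$-edge, combined with centrality forcing all other incident edges to point away, would recreate the forbidden configuration of Corollary~\ref{cor:edge-pointing} along the path from $v_Q$, giving the needed contradiction.
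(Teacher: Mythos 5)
Your proof is correct and takes essentially the same route as the paper's: for (i)$\Rightarrow$(ii) you extract the unique source vertex of each component of $G(\Rldstar)/\Ro$ (Lemma~\ref{lem:unique-v}, Corollary~\ref{cor:edge-pointing}) and observe that it is central, and for (ii)$\Rightarrow$(i) you orient every edge away from the central vertex and assemble the explaining tree via Corollary~\ref{cor:Q-root} and Theorem~\ref{thm:star-tree-dir}, exactly as the paper does. One remark: the ``obstacle'' you raise at the end is already settled by the definition of a central vertex --- for a pair in $\Rld$ only that one directed edge is present in $G(\Rlstar\cup\Rld)/\Ro$, so centrality of $v_Q$ by definition forces every $\Rld$-edge to point away from $v_Q$; your detour through Corollary~\ref{cor:edge-pointing} is unnecessary (and, as phrased, slightly circular, since that corollary is a statement about graphs of relations already known to be valid antisymmetric single-1-relations, which is precisely what is being established).
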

\begin{proof}
  If there is a relation $\Rldstar$ that can be displayed by a rooted
  0/1-edge-labeled tree, then $G(\Rldstar)/\Ro$ consists of connected
  components $Q$ where each connected component is a tree composed of
  maximal directed paths that point away from each other.  Hence, for each
  connected component $Q$ there is the unique vertex $v_Q$ such that all
  edges incident to $v_Q$ are of the form $v_Qx$ and, in particular, $v_Q$
  is a central vertex $v_Q$ in $Q$ and thus, in
  $G(\Rlstar\cup\Rld)/\Ro$.

  Conversely, assume that each connected component $Q$ has a central vertex
  $v_Q$. Hence, one can remove all edges that do not point away
  from $v_Q$ and hence obtain a connected component $Q'$ that is
  still a tree with $V(Q)=V(Q')$ so that all maximal directed paths point
  away from each other and in particular, start in $v_Q$. Thus,
  for the central vertex $v_Q$ all edges incident to
  $v_Q$ are of the form $v_Qx$. Since $Q'$ is now a
  connected component in $G(\Rldstar)/\Ro$, we can apply Cor.\
  \ref{cor:Q-root} to obtain the tree $T(Q')$ and Thm.\
  \ref{thm:star-tree-dir} to obtain $T(G(\Rldstar)/\Ro)$.  
\end{proof}

The key consequence of this result is the following characterization of the
constraints on the possible placements of the root.

\begin{cor}
  Let $Q$ be a connected component in $G(\Rlstar\cup\Rld)/\Ro$ and let
  $T(\underline{Q})$ be the unique least resolved tree that explains the
  underlying undirected graph $\underline{Q}$.  Then each copy $v'$ of a
  vertex $v$ in $Q$ can be chosen to be the root in $T(\underline{Q})$ to
  obtain $T(Q)$ if and only if $v$ is a central vertex in $Q$.
\end{cor}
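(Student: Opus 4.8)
The plan is to reduce the statement to the fully directed case of Corollary~\ref{cor:Q-root} by identifying, for each candidate root, the orientation it forces on the edges of $\underline{Q}$, and then matching this orientation against the centrality condition. First I would recall that $\underline{Q}$ is a tree and that, by Theorem~\ref{thm:connComp}, $T(\underline{Q})$ is its unique minimally resolved tree. The crucial observation is that rooting $T(\underline{Q})$ at the copy $v'$ of a vertex $v$ produces exactly the rooted tree associated, via Corollary~\ref{cor:Q-root}, with the directed component obtained by orienting every edge of $\underline{Q}$ away from $v$: in the resulting $0/1$-labeled tree one has $a\Rld b$ if and only if the directed edge $ab$ points away from $v$. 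Thus ``rooting at $v'$'' and ``orienting all edges away from $v$'' describe the same relation, and it is this relation that must be compared with $Q$.

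With this identification in hand both implications follow quickly. The rooted tree obtained at $v'$ \emph{explains} $Q$ precisely when its induced relation is a valid $\Rldstar$, i.e.\ it contains every one-way pair of $\Rld$ and selects one orientation of each symmetric $\Rlstar$-pair. The symmetric pairs impose no constraint, since both $ab$ and $ba$ lie in $Q$ and hence one of them automatically points away from $v$; the only genuine requirement is that every one-way edge $a\Rld b$ of $Q$ point away from $v$. This is exactly centrality: for a symmetric edge centrality holds vacuously (one of its two present orientations points away), whereas for a one-way edge it demands that the unique present orientation point away from $v$. Hence, if $v$ is central, taking $\Rldstar$ to be the ``away from $v$'' orientation refines $Q$, and Corollary~\ref{cor:Q-root} together with Theorem~\ref{thm:star-tree-dir} produces a genuine rooted tree $T(Q)$ rooted at $v'$ that explains it; conversely, if rooting at $v'$ explains $Q$, then every one-way pair $a\Rld b$ is reproduced, so $ab$ points away from $v$, and $v$ is central.

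I expect the main obstacle to be the orientation computation underlying the first paragraph, in particular the bookkeeping for vertices $v$ that are leaves of $\underline{Q}$: there the copy $v'$ is the root created by subdividing the pendant edge at $v$ (as in the caterpillar of Lemma~\ref{lem:path-tree}) rather than an already-present inner vertex, so one must verify edge by edge, using the position of $\lca{a,b}$ relative to $v'$, that rooting at $v'$ really reproduces the direction ``away from $v$'' defined through the $\P$-paths in $\underline{Q}$. Once this is confirmed---and noting that Corollary~\ref{cor:edge-pointing} already forbids two one-way edges meeting head-on at any vertex---the equivalence with centrality is immediate.
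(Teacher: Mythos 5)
Your proof is correct and takes essentially the same route as the paper: there the corollary is obtained directly from the preceding lemma, whose proof likewise orients each component away from the central vertex to get a valid $\Rldstar$ and then invokes Cor.~\ref{cor:Q-root} (and Thm.~\ref{thm:star-tree-dir}), while the converse is the same unique-source argument you give. The only difference is that you make explicit the details the paper leaves implicit, namely that rooting $T(\underline{Q})$ at $v'$ induces exactly the ``away from $v$'' orientation and the bookkeeping when $v$ is a leaf of $\underline{Q}$.
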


\section{Concluding Remarks} 

In this contribution we have introduced a class of binary relations
deriving in a natural way from edge-labeled trees. This construction has
been inspired by the conceptually similar class of relations induced by
vertex-labeled trees \cite{HW:17}.  The latter have co-graph structure and
are closely related to orthology and paralogy
\cite{Hellmuth:13a,Lafond:14,Hellmuth:15a,HW:15}. Defining $x\sim y$
whenever at least one 1-edge lies along the path from $x$ to $y$ is related
to the notion of xenology: the edges labeled 1 correspond to horizontal
gene transfer events, while the 0-edge encode vertical transmission. In its
simplest setting, this idea can also be combined with vertex labels,
leading to the directed analog of co-graphs \cite{HSW:16}. Here, we have
explored an even simpler special case: the existence of a single 1-label
along the connecting path, which captures the structure of rare event data
as we have discussed in the introduction.  We have succeeded here in giving
a complete characterization of the relationships between admissible
relations, which turned out to be forests, and the underlying phylogenetic
tree.  Moreover, for all such cases we gave polynomial-time algorithms to
compute the trees explaining the respective relation.

\rev{The characterization of single-1 relations is of immediate relevance
  for the use of rare events in molecular phylogenetics. In particular, it
  determines lower bounds on the required data: if too few events are
  known, many taxa remain in $\Ro$ relation and thus unresolved. On the
  other hand, if taxa are spread too unevenly, $G(\Rld)/\Ro$ will be
  disconnected, consisting of connected components separated by multiple
  events. The approach discussed here, of course, is of practical use only
  if the available event data are very sparse. If taxa are typically
  separated by multiple events, classical phylogenetic approaches, i.e.,
  maximum parsimony, maximum likelihood, or Bayesian methods, will
  certainly be preferable. An advantage of using the single-1 relation is
  that it does not require the product structure of independent characters
  implicit in the usual, sequence or character-based methods, nor does it
  require any knowledge of the algebraic properties of the underlying
  operations as, e.g., in phylogenetic reconstruction from (mitochondrial)
  gene order rearrangements. This begs the question under which conditions
  on the input data identical results are obtained from the direct
  translation of the single-1 relation and maximum parsimony on character
  data or break point methods for genome rearrangments. 
}

\rev{A potentially useful practical application of our results is a 	new
  strategy to incorporate rare event data into conventional phylogenetic
  approaches. Trees obtained from a single-1 relation will in general be
  poorly resolved. Nevertheless, they determine some monophyletic groups
  (in the directed case) or a set of split (in the undirected case) that
  have to be present in the true phylogeny. Many of the software tools
  commonly used in molecular phylogenetic can incorporate this type of
  constraints. Assuming the there is good evidence that the rare events are
  homoplasy-free, the $T(G(\Rld)/\Ro)$ represents the complete information
  from the rare events that can be used constrain the tree reconstruction
  process.}

The analysis presented here makes extensive use of the particular
properties of the single-1 relation and hence does not seem to generalize
easily to other interesting cases. Horizontal gene transfer, for example,
is expressed naturally in terms of the ``at-least-one-1'' relation
$\Rd$. It is worth noting that $\Rd$ also has properties (L1) and (L2) and
hence behaves well w.r.t.\ contraction of the underlying tree and
restriction to subsets of leaves. Whether this is sufficient to obtain a
complete characterization remains an open question.

Several general questions arise naturally. For instance, is there a
characterization of admissible relations in terms of forbidden subgraphs
graphs or minors? For instance, the relation $\Rld/\Ro$ is characterized in
terms of the forbidden subgraph $x\rightarrow v \leftarrow y$. Hence, it
would be of interest, whether such characterizations can be derived for
arbitrary relations $\Rldk$ or for $\Rd$.  If so, can these forbidden
substructures be inferred in a rational manner from properties of vertex
and/or edge labels along the connecting paths in the explaining tree?
Is this the case at least for labels and predicates satisfying (L1)
  and (L2)?

\section*{Acknowledgements}
  \rev{We thank the anonymous reviewers for the example in Fig.\
    \ref{fig:nonU-LRT} and numerous valuable comments that helped us to
    streamline the presentation and to shorten the proofs.}  This work was
  funded by the German Research Foundation (DFG) (Proj.\ No.\ MI439/14-1 to
  PFS).  YJL acknowledges support of National Natural Science Foundation
  of China (No. 11671258) and Postdoctoral Science Foundation of China
  (No.\ 2016M601576)

\bibliography{maribelismo}

\begin{thebibliography}{48}
\expandafter\ifx\csname natexlab\endcsname\relax\def\natexlab#1{#1}\fi
\expandafter\ifx\csname url\endcsname\relax
  \def\url#1{\texttt{#1}}\fi
\expandafter\ifx\csname urlprefix\endcsname\relax\def\urlprefix{URL }\fi

\bibitem[{Abascal \emph{et~al.}(2012)Abascal, Posada, and Zardoya}]{Abascal:12}
Abascal F, Posada D, Zardoya R, 2012.
\newblock The evolution of the mitochondrial genetic code in arthropods
  revisited.
\newblock Mitochondrial DNA 23:84--91.

\bibitem[{Ashkenazy \emph{et~al.}(2014)Ashkenazy, Cohen, Pupko, and
  Huchon}]{Ashkenazy:14}
Ashkenazy H, Cohen O, Pupko T, Huchon D, 2014.
\newblock Indel reliability in indel-based phylogenetic inference.
\newblock Genome Biol Evol 6:3199--3209.

\bibitem[{Bernt \emph{et~al.}(2007)Bernt, Merkle, Rasch, Fritzsch, Perseke,
  Bernhard, Schlegel, Stadler, and Middendorf}]{Bernt:07a}
Bernt M, Merkle D, Rasch K, Fritzsch G, Perseke M, Bernhard D, Schlegel M,
  Stadler PF, Middendorf M, 2007.
\newblock \texttt{CREx}: Inferring genomic rearrangements based on common
  intervals.
\newblock Bioinformatics 23:2957--2958.

\bibitem[{Boore(2006)}]{Boore:06}
Boore JL, 2006.
\newblock The use of genome-level characters for phylogenetic reconstruction.
\newblock Trends Ecol Evol 21:439--446.

\bibitem[{Boore and Brown(1998)}]{Boore:98}
Boore JL, Brown WM, 1998.
\newblock Big trees from little genomes: mitochondrial gene order as a
  phylogenetic tool.
\newblock Curr Opin Genet Dev 8:668--674.

\bibitem[{Brandstädt \emph{et~al.}(2010)Brandstädt, Le, and
  Rautenbach}]{BVR:10}
Brandstädt A, Le VB, Rautenbach D, 2010.
\newblock Exact leaf powers.
\newblock Theoretical Computer Science 411:2968 -- 2977.

\bibitem[{Bryant and Steel(1995)}]{BS:95}
Bryant D, Steel M, 1995.
\newblock Extension operations on sets of leaf-labelled trees.
\newblock Adv Appl Math 16:425--453.

\bibitem[{Calamoneri \emph{et~al.}(2013)Calamoneri, Montefusco, Petreschi, and
  Sinaimeri}]{CMPS:13}
Calamoneri T, Montefusco E, Petreschi R, Sinaimeri B, 2013.
\newblock Exploring pairwise compatibility graphs.
\newblock Theoretical Computer Science 468:23 -- 36.

\bibitem[{Calamoneri and Sinaimeri(2016)}]{PCGsurvey}
Calamoneri T, Sinaimeri B, 2016.
\newblock Pairwise compatibility graphs: A survey.
\newblock SIAM Review 58:445--460.

\bibitem[{Chaudhuri \emph{et~al.}(2006)Chaudhuri, Chen, Mihaescu, and
  Rao}]{Chaudhuri:06}
Chaudhuri K, Chen K, Mihaescu R, Rao S, 2006.
\newblock On the tandem duplication-random loss model of genome rearrangement.
\newblock In: Proceedings of the 17th Annual ACM-SIAM Symposium on Discrete
  Algorithms, (pp. 564--570). ACM.

\bibitem[{Deeds \emph{et~al.}(2005)Deeds, Hennessey, and
  Shakhnovich}]{Deeds:05}
Deeds EJ, Hennessey H, Shakhnovich EI, 2005.
\newblock Prokaryotic phylogenies inferred from protein structural domains.
\newblock Genome Res 15:393--402.

\bibitem[{Dekker(1986)}]{Dekker86}
Dekker MCH, 1986.
\newblock Reconstruction methods for derivation trees.
\newblock Master's thesis, Vrije Universiteit, Amsterdam, Netherlands.

\bibitem[{Donath and Stadler(2014)}]{Donath:14a}
Donath A, Stadler PF, 2014.
\newblock Molecular morphology: Higher order characters derivable from sequence
  information.
\newblock In: W{\"a}gele JW, Bartolomaeus T, editors, Deep Metazoan Phylogeny:
  The Backbone of the Tree of Life. {N}ew insights from analyses of molecules,
  morphology, and theory of data analysis, chap.~25, (pp. 549--562). Berlin: de
  Gruyter.

\bibitem[{Durocher \emph{et~al.}(2013)Durocher, Mondal, and Rahman}]{DMR:13}
Durocher S, Mondal D, Rahman MS, 2013.
\newblock On graphs that are not {PCG}s.
\newblock In: Ghosh SK, Tokuyama T, editors, WALCOM: Algorithms and
  Computation: 7th International Workshop, WALCOM 2013, Kharagpur, India,
  February 14-16, 2013. Proceedings, (pp. 310--321). Berlin, Heidelberg:
  Springer Berlin Heidelberg.

\bibitem[{Dutilh \emph{et~al.}(2008)Dutilh, Snel, Ettema, and
  Huynen}]{Dutilh:08}
Dutilh BE, Snel B, Ettema TJ, Huynen MA, 2008.
\newblock Signature genes as a phylogenomic tool.
\newblock Mol Biol Evol 25:1659--1667.

\bibitem[{Forst \emph{et~al.}(2006)Forst, Flamm, Hofacker, and
  Stadler}]{Forst:06a}
Forst CV, Flamm C, Hofacker IL, Stadler PF, 2006.
\newblock Algebraic comparison of metabolic networks, phylogenetic inference,
  and metabolic innovation.
\newblock BMC Bioinformatics 7:67 [epub].

\bibitem[{Forst and Schulten(2001)}]{Forst:01}
Forst CV, Schulten K, 2001.
\newblock Phylogenetic analysis of metabolic pathways.
\newblock J Mol Evol 52:471--489.

\bibitem[{Fritzsch \emph{et~al.}(2006)Fritzsch, Schlegel, and
  Stadler}]{Fritzsch:06a}
Fritzsch G, Schlegel M, Stadler PF, 2006.
\newblock Alignments of mitochondrial genome arrangements: Applications to
  metazoan phylogeny.
\newblock J Theor Biol 240:511--520.

\bibitem[{Hartmann \emph{et~al.}(2016)Hartmann, Chu, Middendorf, and
  Bernt}]{Hartmann:16}
Hartmann T, Chu AC, Middendorf M, Bernt M, 2016.
\newblock Combinatorics of tandem duplication random loss mutations on circular
  genomes.
\newblock IEEEACM Trans Comput Biol Bioinform .

\bibitem[{Hellmuth \emph{et~al.}(2013)Hellmuth, Hernandez-Rosales, Huber,
  Moulton, Stadler, and Wieseke}]{Hellmuth:13a}
Hellmuth M, Hernandez-Rosales M, Huber KT, Moulton V, Stadler PF, Wieseke N,
  2013.
\newblock Orthology relations, symbolic ultrametrics, and cographs.
\newblock J Math Biol 66:399--420.

\bibitem[{Hellmuth \emph{et~al.}(2016)Hellmuth, Stadler, and Wieseke}]{HSW:16}
Hellmuth M, Stadler PF, Wieseke N, 2016.
\newblock The mathematics of xenology: Di-cographs, symbolic ultrametrics,
  2-structures and tree-representable systems of binary relations.
\newblock J Math Biol In press; doi:10.1007/s00285-016-1084-3.

\bibitem[{Hellmuth and Wieseke(2015)}]{HW:15}
Hellmuth M, Wieseke N, 2015.
\newblock On symbolic ultrametrics, cotree representations, and cograph edge
  decompositions and partitions.
\newblock In: Xu D, Du D, Du D, editors, Computing and Combinatorics, vol. 9198
  of \emph{Lecture Notes Comp. Sci.}, (pp. 609--623). Springer International
  Publishing.

\bibitem[{Hellmuth and Wieseke(2016)}]{HW:16b}
Hellmuth M, Wieseke N, 2016.
\newblock From sequence data including orthologs, paralogs, and xenologs to
  gene and species trees.
\newblock In: Pontarotti P, editor, Evolutionary Biology: Convergent Evolution,
  Evolution of Complex Traits, Concepts and Methods, (pp. 373--392). Cham:
  Springer.

\bibitem[{Hellmuth and Wieseke(2017)}]{HW:17}
Hellmuth M, Wieseke N, 2017.
\newblock On tree representations of relations and graphs: symbolic
  ultrametrics and cograph edge decompositions.
\newblock Journal of Combinatorial Optimization (pp. 1--26).

\bibitem[{Hellmuth \emph{et~al.}(2015)Hellmuth, Wieseke, Lechner, Lenhof,
  Middendorf, and Stadler}]{Hellmuth:15a}
Hellmuth M, Wieseke N, Lechner M, Lenhof HP, Middendorf M, Stadler PF, 2015.
\newblock Phylogenomics with paralogs.
\newblock Proceedings of the National Academy of Sciences 112:2058--2063.
\newblock DOI: 10.1073/pnas.1412770112.

\bibitem[{Hennig(1950)}]{Hennig:50}
Hennig W, 1950.
\newblock {G}rundz{\"u}ge einer {T}heorie der {P}hylogenetischen {S}ystematik.
\newblock Berlin: Deutscher Zentralverlag.

\bibitem[{Hernandez-Rosales \emph{et~al.}(2012)Hernandez-Rosales, Hellmuth,
  Wieseke, Huber, Moulton, and Stadler}]{HernandezRosales:12a}
Hernandez-Rosales M, Hellmuth M, Wieseke N, Huber KT, Moulton V, Stadler PF,
  2012.
\newblock From event-labeled gene trees to species trees.
\newblock BMC Bioinformatics 13:S6.

\bibitem[{Krauss \emph{et~al.}(2008)Krauss, Th{\"u}mmler, Georgi, Lehmann,
  Stadler, and Eisenhardt}]{Krauss:08a}
Krauss V, Th{\"u}mmler C, Georgi F, Lehmann J, Stadler PF, Eisenhardt C, 2008.
\newblock Near intron positions are reliable phylogenetic markers: An
  application to {Holometabolous Insects}.
\newblock Mol Biol Evol 25:821--830.

\bibitem[{Lafond and El-Mabrouk(2014)}]{Lafond:14}
Lafond M, El-Mabrouk N, 2014.
\newblock Orthology and paralogy constraints: satisfiability and consistency.
\newblock BMC Genomics 15 S6:S12.

\bibitem[{Lavrov(2007)}]{Lavrov:07}
Lavrov DV, 2007.
\newblock Key transitions in animal evolution: a mitochondrial {DNA}
  perspective.
\newblock Integr Comp Biol 47:734--743.

\bibitem[{Mazurie \emph{et~al.}(2008)Mazurie, Bonchev, Schwikowski, and
  Buck}]{Mazurie:08}
Mazurie A, Bonchev D, Schwikowski B, Buck GA, 2008.
\newblock Phylogenetic distances are encoded in networks of interacting
  pathways.
\newblock Bioinformatics 24:2579--2585.

\bibitem[{Mehnaz and Rahman(2013)}]{MR:13}
Mehnaz S, Rahman MS, 2013.
\newblock Pairwise compatibility graphs revisited.
\newblock In: 2013 International Conference on Informatics, Electronics and
  Vision (ICIEV), (pp. 1--6).

\bibitem[{Misof and Fleck(2003)}]{Misof:03}
Misof B, Fleck G, 2003.
\newblock Comparative analysis of mt {LSU} {rRNA} secondary structures of
  {Odonates}: structural variability and phylogenetic signal.
\newblock Insect Mol Biol 12:535--47.

\bibitem[{Prohaska \emph{et~al.}(2004)Prohaska, Fried, Amemiya, Ruddle, Wagner,
  and Stadler}]{Prohaska:04a}
Prohaska SJ, Fried C, Amemiya CT, Ruddle FH, Wagner GP, Stadler PF, 2004.
\newblock The shark {HoxN} cluster is homologous to the human {HoxD} cluster.
\newblock J Mol Evol (p.~58).
\newblock 212-217.

\bibitem[{Rogozin \emph{et~al.}(2005)Rogozin, Sverdlov, Babenko, and
  Koonin}]{Rogozin:05}
Rogozin IB, Sverdlov AV, Babenko VN, Koonin EV, 2005.
\newblock Analysis of evolution of exon-intron structure of eukaryotic genes.
\newblock Brief Bioinform 6:118--134.

\bibitem[{Rokas and Holland(2000)}]{Rokas:00}
Rokas A, Holland PW, 2000.
\newblock Rare genomic changes as a tool for phylogenetics.
\newblock Trends Ecol Evol 15:454--459.

\bibitem[{Sankoff \emph{et~al.}(1992)Sankoff, Leduc, Antoine, Paquin, Lang, and
  Cedergren}]{Sankoff:82}
Sankoff D, Leduc G, Antoine N, Paquin B, Lang BF, Cedergren R, 1992.
\newblock Gene order comparisons for phylogenetic inference: Evolution of the
  mitochondrial genome.
\newblock Proc Natl Acad Sci USA 89:6575--6579.

\bibitem[{Sempere \emph{et~al.}(2006)Sempere, Cole, McPeek, and
  Peterson}]{Sempere:06}
Sempere LF, Cole CN, McPeek MA, Peterson KJ, 2006.
\newblock The phylogenetic distribution of metazoan {microRNAs}: insights into
  evolutionary complexity and constraint.
\newblock J Exp Zoolog B Mol Dev Evol 306:575--588.

\bibitem[{Semple and Steel(2003)}]{sem-ste-03a}
Semple C, Steel M, 2003.
\newblock Phylogenetics, vol.~24 of \emph{Oxford Lecture Series in Mathematics
  and its Applications}.
\newblock Oxford: Oxford University Press.

\bibitem[{Shedlock and Okada(2000)}]{Shedlock:00}
Shedlock AM, Okada N, 2000.
\newblock {SINE} insertions: powerful tools for molecular systematics.
\newblock BioEssays 22:148--160.

\bibitem[{Simmons and Ochoterena(2000)}]{Simmons:00}
Simmons MP, Ochoterena H, 2000.
\newblock Gaps as characters in sequence-based phylogenetic analyses.
\newblock Syst Biol 49:369--381.

\bibitem[{Stechmann and Cavalier-Smith(2003)}]{Stechmann:03}
Stechmann A, Cavalier-Smith T, 2003.
\newblock The root of the eukaryote tree pinpointed.
\newblock Curr Biol 13:R665--R666.

\bibitem[{Tang and Wang(2005)}]{Tang:05}
Tang J, Wang LS, 2005.
\newblock Improving genome rearrangement phylogeny using sequence-style
  parsimony.
\newblock In: Karypis G, Bourbakis NG, Tsai J, editors, Proceedings of the
  {IEEE} Fifth Symposium on Bioinformatics and Bioengineering ({BIBE'05}), (pp.
  137--144). Los Alamitos, CA: IEEE Computer Society.

\bibitem[{Wagner and Stadler(2003)}]{Wagner:03a}
Wagner G, Stadler PF, 2003.
\newblock Quasi-independence, homology and the unity of type: A topological
  theory of characters.
\newblock J Theor Biol 220:505--527.

\bibitem[{Wang \emph{et~al.}(2006)Wang, Warnow, Moret, Jansen, , and
  Raubeson}]{Wang:06}
Wang LS, Warnow T, Moret BM, Jansen RK, , Raubeson LA, 2006.
\newblock Distance-based genome rearrangement phylogeny.
\newblock J Mol Evol 63:473--483.

\bibitem[{Yang \emph{et~al.}(2005)Yang, Doolittle, and Bourne}]{Yang:05}
Yang S, Doolittle RF, Bourne PE, 2005.
\newblock Phylogeny determined by protein domain content.
\newblock Proc Natl Acad Sci USA 102:373--378.

\bibitem[{Yanhaona \emph{et~al.}(2010)Yanhaona, Bayzid, and Rahman}]{YBR:10}
Yanhaona MN, Bayzid MS, Rahman MS, 2010.
\newblock Discovering pairwise compatibility graphs.
\newblock In: Thai MT, Sahni S, editors, Computing and Combinatorics: 16th
  Annual International Conference, COCOON 2010, Nha Trang, Vietnam, July 19-21,
  2010. Proceedings, (pp. 399--408). Berlin, Heidelberg: Springer Berlin
  Heidelberg.

\bibitem[{Yanhaona \emph{et~al.}(2008)Yanhaona, Hossain, and Rahman}]{YHTR:08}
Yanhaona MN, Hossain KSMT, Rahman MS, 2008.
\newblock Pairwise compatibility graphs.
\newblock In: Nakano Si, Rahman MS, editors, WALCOM: Algorithms and
  Computation: Second International Workshop, WALCOM 2008, Dhaka, Bangladesh,
  February 7-8, 2008. Proceedings, (pp. 222--233). Berlin, Heidelberg: Springer
  Berlin Heidelberg.

\end{thebibliography}


\end{document}